\newtheorem{theorem}{Theorem}
\newtheorem{definition}[theorem]{Definition}
\newtheorem{lemma}[theorem]{Lemma}
\newtheorem{corollary}[theorem]{Corollary}
\newtheorem{observation}[theorem]{Corollary}
\newtheorem{conjecture}[theorem]{Conjecture}
\newtheorem{fact}[theorem]{Fact}
\newtheorem{problem}{Open Question}
\numberwithin{theorem}{section}
\newcommand{\floor}[1]{\left\lfloor #1 \right\rfloor}
\newcommand{\ceil}[1]{\left\lceil #1 \right\rceil}
\newcommand{\eps}{\varepsilon}
\newcommand{\Oh}{\mathcal{O}}
\newcommand{\Os}{\mathcal{O}^{*}}
\newcommand{\Otilde}{\widetilde{\mathcal{O}}}
\newcommand{\Ot}{\Otilde}
\newcommand{\nat}{\mathbb{N}}
\newcommand{\pow}{\mathrm{pow}}
\newcommand{\poly}{\mathrm{poly}}
\newcommand{\polylog}{\mathrm{polylog}}
\DeclareMathOperator*{\argmin}{arg\,min} 
\DeclareMathOperator*{\argmax}{arg\,max}
\newcommand{\zzlarge}{Z_\text{large}\xspace}
\newcommand{\zzsmall}{Z_\text{small}\xspace}
\NewDocumentCommand\subsums{m+g}{
      \IfNoValueTF{#2}
          {\mathcal{S}(#1)}
          {\mathcal{S}(#1,#2)}\xspace
}
\newcommand{\oracle}{$(\eps,t)$-membership-oracle\xspace}
\newcommand{\oracleO}{$(\Oh(\eps),t)$-membership-oracle\xspace}
\newcommand{\sequence}[3]{$#1[#2,\ldots,#3]$\xspace}
\newcommand{\ssum}{\textsc{Subset Sum}\xspace}
\newcommand{\subsetsum}{\ssum}
\newcommand{\partition}{\textsc{Partition}\xspace}
\newcommand{\uknapsack}{\textsc{Unbounded Knapsack}\xspace}
\newcommand{\knapsack}{\textsc{Knapsack}\xspace}
\newcommand{\minconv}{$(\min,+)$-convolution\xspace}
\newcommand{\maxconv}{$(\max,+)$-convolution\xspace}
\newcommand{\sparsity}{\textsc{TreeSparsity}\xspace}
\newcommand{\IIIsum}{\textsc{$3$SUM}\xspace}
\newcommand{\ksum}{\textsc{$k$-SUM}\xspace}
\newcommand{\bringmann}{Bringmann's~\cite{bringmann-soda} algorithm\xspace}
\newcommand{\galil}{Galil~and~Margalit~\cite{galil-icalp} algorithm\xspace}
\newcommand{\kellerer}{Kellerer~et~al.~\cite{kellerer-subsetsum} algorithm\xspace}
\newcommand{\weak}{weak $(1-\eps)$-approximation\xspace}
\newcommand{\footremember}[2]{%
    \footnote{#2}
    \newcounter{#1}
    \setcounter{#1}{\value{footnote}}%
}
\newcommand{\footrecall}[1]{%
    \footnotemark[\value{#1}]%
} 
\newcommand{\defproblem}[3]{
  \vspace{2mm}
%  \hline
  \vspace{1mm}
\noindent\fbox{
  \begin{minipage}{0.95\textwidth}
  #1 \\
  {\bf{Input:}} #2  \\
  {\bf{Task:}} #3
  \end{minipage}
  }
%  \vspace{1mm}
%  \hline
  \vspace{2mm}
}
\title{A Subquadratic Approximation Scheme for Partition}
\date{}
\author{%
    Marcin Mucha\footremember{MIMUW}{Institute of Informatics, University of
    Warsaw, Poland, \texttt{\{mucha, k.wegrzycki, m.wlodarczyk\}@mimuw.edu.pl}}
    \and
    Karol W\k{e}grzycki\footrecall{MIMUW}
    \and
    Micha\l{} W\l{}odarczyk\footrecall{MIMUW}
}
\begin{document}

\maketitle

\thispagestyle{empty}
\begin{abstract}
    The subject of this paper is the time complexity of approximating \knapsack,
    \ssum, \partition, and some other related problems. The main result is an
    $\Ot(n+1/\eps^{5/3})$ time randomized FPTAS for \partition, which is derived
    from a certain relaxed form of a randomized FPTAS for \ssum. To the best of our knowledge,
    this is the first NP-hard problem that has been shown to admit a subquadratic time approximation scheme,
    i.e., one with time complexity of $\Oh((n+1/\eps)^{2-\delta})$ for some $\delta>0$.
    To put these developments in context, note that a quadratic FPTAS for
    \partition has been known for 40 years.
    
    Our main contribution lies in designing a mechanism that reduces an instance of \ssum to several simpler instances, each with some special structure, and keeps track of interactions between them.
    This allows us to combine techniques from approximation algorithms, pseudo-polynomial algorithms, and additive combinatorics.

    We also prove several related results.
    Notably, we 
    improve approximation schemes for \IIIsum, \minconv, and \sparsity.
    Finally, we argue why breaking the quadratic barrier for approximate \knapsack is unlikely by giving an $\Omega((n+1/\eps)^{2-o(1)})$
    conditional lower bound.
\end{abstract}

% abstrakt nie wlicza sie do limitu stron ale ma mieć max 2 paragrafy
\clearpage
\setcounter{page}{1}

\section{Introduction}

% Napisać o maksymizacji
The \knapsack-type problems are among the most fundamental optimization
challenges. These problems have been studied for more than a century already, as their origins can be traced 
back to the 1897's paper by Mathews~\cite{first-knapsack}. 
%Mathews showed how several constraints 
%can be encoded in a single \knapsack-type constraint,
%which can be viewed as a precursor of NP-hardness reductions.\todo{może usunąć
%to zdanie, bo nie ma związku z naszymi wynikami?}

% Not surprisingly decision version of \partition is one of the six NP-complete problems in
% \textit{A Guide to the Theory of NP-Completeness} of~\citet{guidenp} and \knapsack is featured in the original
% Karp's list of 21 NP-complete problems \cite{karp21}.
% These problems are also the core weakly
% NP-hard problems, and as such they have served as prototype problems in the
% development of approximation algorithms and pioneered research on the dynamic programming
% technique.

% In the survey~\cite{skiena}, based on visit statistics to Stony Brook's
% algorithm repository, Skiena ranks the code for \knapsack-type problems in the top 20
% most requested among all combinatorial optimization problems.  He also argues
% that more efficient implementations of algorithms for \knapsack-related problems
% are among the most needed.  All of these arguments and many others point to
% \knapsack-type problems being central problems in both theoretical computer
% science~\cite{knapsack-book} and industrial applications.

% Zdefiniować Knapsack

The \knapsack problem is defined as follows:
\begin{definition}[\knapsack]
    Given a set of $n$ items $E_n = \{1,\ldots,n\}$, with item $j$ having a positive
    integer weight $w_j$ and value $v_j$, together with knapsack capacity $t$.
    Select a subset of items $E \subseteq E_n$, such that the corresponding
    total weight $w(E) = \sum_{i \in E} w_i$ does not exceed the capacity $t$ and
    the total value $v(E) = \sum_{i \in E} v_i$ is maximized. 
\end{definition}

\knapsack is one of the 21 problems featured in Karp's list of NP-complete problems~\cite{karp21}.  
We also study the case where we are allowed to take each element multiple times, called \uknapsack.
Let $\Sigma(S)$ denote the sum of elements $S$. \ssum is defined as follows:

\begin{definition}[\ssum]
    \label{def:ssum}
    Given a set $S\subset \mathbb{N}$ of $n$ numbers (sometimes
    referred to as items) and an integer $t$,
    find a subset $S'\subseteq S$ with maximal $\Sigma(S')$ that does
    not exceed $t$.
\end{definition}
\ssum is a special case of \knapsack, where item weights 
are equal to item values. This problem is NP-hard as well. In fact, it remains NP-hard
even if we fix $t$ to be $\Sigma(S)/2$. This problem is called the \textit{Number Partitioning Problem} 
(or \partition, as we will refer to~it):

\begin{definition}[\partition]
    Given a set $S\subset \mathbb{N}$ of $n$ numbers,
    find a subset $S'\subseteq S$ with maximal $\Sigma(S')$ not exceeding
    $\Sigma(S)/2$.
\end{definition}

% In other words we can think of minimizing the discrepancy of the subset, i.e.,
% $D(S) =$ $\min_{S' \subseteq S} $ $\left| \Sigma(S') - \Sigma(S-S') \right|$, however
% it is hard to define the notion of approximation with such definition.

The practical applications of \partition problem range from
scheduling~\cite{partition1} to minimization of circuits sizes,
cryptography~\cite{partition2}, or even game theory~\cite{partition3,partition4}.
The decision version of this problem is sometimes humorously
referred to as ``the easiest NP-complete problem''~\cite{partition3}. In this paper we
will demonstrate that there is a grain of truth in this claim.
% 'ziarno prawdy' tlumaczy sie dokladnie jako idiom

All the aforementioned problems are weakly NP-hard and admit pseudo-polynomial time algorithms.
The first such an algorithm for the \knapsack was proposed by~\citet{bellman}
and runs in time $\Oh(nt)$. This bound was improved for
the \ssum~\cite{koiliaris-soda} and the current best (randomized) time complexity for this problem
is $\Otilde(n+t)$, due to \citet{bringmann-soda} (for more on these and related results
see Section~\ref{sec:techniques}).  
The strong dependence on $t$ in all of these algorithms makes them
impractical for a large $t$ (note that $t$ can be exponentially larger than
the size of the input). This dependence has been shown necessary  as
an $\Oh\big(\poly(n)t^{0.99}\big)$ algorithm for the \ssum would contradict
both the SETH~\cite{subset-seth} and the SetCover conjecture~\cite{hard-as-cnf-sat}. 

One possible approach to avoid the dependence on $t$ is to settle for approximate solutions.
The notion of approximate solution we focus on in this paper is that of a \textit{Polynomial 
Time Approximation Scheme} (PTAS). A PTAS for a maximization problem is an algorithm that, 
given an instance of size $n$ and a parameter $\eps>0$, returns a solution with value $S$,
such that $\text{OPT}(1-\eps) \le S \le \text{OPT}$. It also needs to run in time polynomial
in $n$, but not necessarily in $1/\eps$ (so, e.g., we allow time complexities like $\Oh(n^{1/\eps})$). 
A PTAS is a \textit{Fully Polynomial Time Approximation Scheme} (FPTAS) if 
it runs in time polynomial in both $n$ and $1/\eps$. Equivalently, one can require the running time
to be polynomial in $(n+1/\eps)$. For example, $\Oh(n^2/\eps^4) = \Oh((n+1/\eps)^6)$. 
For definitions of problems in both exact and approximate sense see
Appendix~\ref{problems-definitions}.

The first approximation scheme for \knapsack (as well as \ssum and \partition as special cases) dates
back to 1975 and is due to~\citet{ibarra-kim}.  Its running time 
is $\Oh(n/\eps^2)$. After a long line of
improvements~\cite{levner79,levner94,fptas-knapsack,karp75,lawler79,knapsack-book},
the current best algorithms for each problem are:
the $\Oh(\min\{n/\eps, n+1/\eps^2\})$ algorithm for \partition due to~\cite{levner80},
the $\Oh(\min\{ n/\eps, n+1/\eps^2\log{(1/\eps)} \})$ algorithm for \ssum due to
\cite{kellerer-subsetsum} and, a very recent $\Ot(n+1/\eps^{{12}/{5}})$ for \knapsack, due to~\cite{chan-knapsack}.

%\footnote{The approximation versions
%for these problems are defined in Section~\ref{problems-definitions}}

Observe that all of these algorithms work in $\Omega((n+1/\eps)^2)$ time. In
fact, we are not aware of the existence of any FPTAS for an NP-hard problem
working in time $\Oh((n+1/\eps)^{2-\delta})$. 

\begin{problem} Can we get an $\Oh((n+1/\eps)^{2-\delta})$ FPTAS for any
    \knapsack-type problem (or any other NP-hard problem) for some constant
    $\delta>0$ or justify that it is unlikely?
\end{problem}

In this paper we resolve this question positively, by presenting the first such
algorithm for the \partition problem. This improves upon almost 40 years old
algorithm by~\citet{levner80}. On the other hand,
we also provide a conditional
lower bound suggesting that similar improvement for the more general \knapsack
problem is unlikely.

After this paper was announced, Bringmann~\cite{bringmann-com} showed that for 
any $\delta > 0$, an $\Oh((n+1/\eps)^{2-\delta})$ algorithm for \ssum would contradict the
\minconv-conjecture. This not only shows a somewhat surprising separation between
the approximate versions of \partition and \ssum, but also explains why our techniques
do not seem to transfer to approximating \ssum.

% This conditional lower bound $\Omega((n+1/\eps)^{2-o(1)})$ for \knapsack
% partially answers a question from \textit{Structure and Hardness in P}, posed
% by Seth Pettie:
% \begin{problem}[\cite{dagstuhl2016}]
%     \label{problem:pettie}
%     Prove superlinear conditional lower bounds on the time complexity of any
%     approximation problems, whose exact version is NP-hard.
% \end{problem}
%
% We answer it only partially, since the (more interesting) case of constant factor
% approximation is still open.

\subsection{Related Work}
\label{related-work}
In this paper we avoid the dependence on $t$ by settling on approximate instead of exact solutions.
Another approach is to allow running times exponential in $n$. This line of research has been 
very active with many interesting results.
The naive algorithm for \knapsack works in $\Os(2^n)$ time by simply
enumerating all possible subsets. \citet{meetinthemiddle} introduced the meet-in-the-middle approach and
gave an exact $\Os(2^{n/2})$ time and space algorithm. \citet{schroeppel}
improved the space complexity of that algorithm to $\Os(2^{n/4})$.
Very recently \citet{bansal17} showed an $\Os(2^{0.86n})$-algorithm working in polynomial space.

An interesting question (and very relevant for applications in cryptography) is how hard \knapsack type
problems are for random instances. For results in this line of research see~\cite{random-ss1,random-ss2,random-ss3,random-ss4}. 

\subsection{History of Approximation Schemes for \knapsack-type problems}
\label{subsec:history}

To the best of our knowledge, the fastest approximation for \partition dates back to
1980~\cite{levner80} with $\Ot(\min\{n/\eps, n+1/\eps^2\})$ running time\footnote{ 
As is common for \knapsack-type problems, the
$\Otilde$ notation hides terms poly-logarithmic in $n$ and $1/\eps$, but not in $t$.
}.
The majority of later research focused on
matching this running time for the \knapsack and \ssum.
In this section we will present an overview of the history of the FPTAS
for these problems.

\begin{table}[ht!]
    \centering

    \caption{Brief history of FPTAS for \knapsack-type problems. Since \partition is a special case
        of \ssum, and \ssum is a special case of \knapsack, an algorithm
        for \knapsack also works for \ssum and \partition.
        We omit redundant
        running time factors for clarity, e.g., \cite{kellerer-subsetsum}
        actually runs in $\Ot(\min\{n/\eps, n+1/\eps^2\})$ time but
        \cite{levner78,levner79} gave $\Oh(n/\eps)$ algorithm earlier.
        For a more robust history see~\cite[Section 4.6]{knapsack-book}. A star
        (*) marks the papers that match the previous best $\Ot\left( (n+1/\eps)^2 \right)$
    complexity for \partition problem.}
        
    \begin{tabular}{|l|l|r|}

        \hline 
        \textbf{Running Time} & \textbf{Problem} & \textbf{Reference} \\
        \hline
         $\Oh(n^2/\eps)$ & \knapsack & \cite{bellman} \cite{knapsack-book}\\
        \hline 
         $\Oh(n/\eps^2)$ & \knapsack & \cite{ibarra-kim,karp75} \\
        \hline 
         $\Oh(n/\eps)$ & \subsetsum  & * \cite{levner78,levner79} \\
        \hline
         $\Oh(n + 1/\eps^4)$ & \knapsack & \cite{lawler79} \\
        \hline
         $\Oh(n + 1/\eps^2)$ & \partition & * \cite{levner80} \\
        \hline
         $\Oh(n+1/\eps^3)$ & \ssum & \cite{levner94} \\
        \hline
         $\Ot(n+1/\eps^2)$ & \ssum & * \cite{kellerer-subsetsum} \\
        \hline
         $\Ot(n+1/\eps^{12/5})$ & \knapsack & * \cite{chan-knapsack} \\
        \hline
         $\Ot(n+1/\eps^{5/3})$ & \partition & \textbf{This Paper} \\
        \hline
    \end{tabular}
    \label{tab:history-subsetsum}
\end{table}

The first published FPTAS for the \knapsack is due to~\citet{ibarra-kim}. This naturally 
gives approximations for the \ssum and \partition as special cases. In their
approach, the items were partitioned into large and small classes. The profits are
scaled down and then the problem is solved optimally with dynamic programming. Finally,
the remaining empty space is filled up greedily with the small items. This algorithm
has a complexity $\Oh(n/\eps^2)$ and requires $\Oh(n+1/\eps^3)$
space.\footnote{In \cite[Section 4.6]{knapsack-book} there are claims, that
\citeyear{karp75} \citet{karp75} also gives $\Oh(n/\eps^2)$ approximation for
\ssum.} \citet{lawler79} proposed a different method of scaling and obtained $\Oh(n+1/\eps^4)$ running time.

Later, \citet{levner78,levner79} obtained an $\Oh(n/\eps)$
algorithm for the \ssum based on a different technique.
Then, in 1980 they proposed an even faster $\Oh(\min\{n/\eps,
n+1/\eps^2\})$ algorithm~\cite{levner80} for the \partition. To the best of our
knowledge this algorithm remained the best (until this paper).  
Subsequently, \citet{levner94} managed to generalize their result to \ssum with
an increase of running time and obtained
$\Oh(\min\{n/\eps,n+1/\eps^3\})$ time and $\Oh(\min\{n/\eps, n+1/\eps^2\})$
space algorithm~\cite{levner94}. Finally, \citet{kellerer-subsetsum} improved this algorithm for \ssum by giving 
$\Oh(\min\{ n/\eps, n+1/\eps^2\log{(1/\eps)} \})$ time and $\Oh(n+1/\eps)$ space
algorithm. This result matched (up to the polylogarithmic factors) the running
time for \partition.

For the \knapsack problem \citet{fptas-knapsack} gave an $\Oh(n $ $\min\{\log{n},
\log{(1/\eps)} \} $ $+ 1/\eps^2 \log{(1/\eps)} $ $\min \{n, 1/\eps \log{(1/\eps)}
\})$ time algorithm (note that the exponent in the parameter $(n+1/\eps)$ is 3 here) and
for \uknapsack \citet{fptas-uknapsack} gave an $\Oh(n+1/\eps^2 \log^3{(1/\eps)})$ time
algorithm (the exponent in $(n+1/\eps)$ is 2, see Appendix~\ref{problems-definitions} for the definition of \uknapsack). Very recently~\citet{chan-knapsack}
presented the currently best $\Ot(n + 1/\eps^{12/5})$ algorithm for the \knapsack.

% \partition was also one of the first problems to be considered in the
% ``random-like'' setting. The usual assumption is that the input numbers come from
% independently and identically distributed random variables.
% \citet{partition7} proved that for real valued input numbers the median
% value of partition is $\Oh(\sqrt{n} 2^{-n})$ and \citet{partition8} showed that
% the same guarantees holds for average.
% In average case complexity the upper bound $W = 2^{\kappa n}$ on the integers is
% expressed with a trade-off parameter $\kappa$. In empirical studies it was
% observed that bellow certain $\kappa_c$ are much simpler than and can be exploit
% by the exhaustive heuristics. \citet{partition4} note, that below this threshold
% $\kappa_c$ the typical instance seem to have a special structure, that can be
% exploited. This change of hardness is called a phase transition in analogy to
% the transitions observed in thermodynamics. This is in analogy
% to~\cite{icalp2017} work which we will use (the \ssum can be solved polynomially
% faster if the bound on integer numbers $\ell$ is smaller than $n^{\Oh(1)}$).

\subsection{Our Contribution}

Our main result is the design of the mechanism that allows us to merge the
pseudo-polynomial time algorithms for \knapsack-type problems with algorithms on dense \ssum
instances. The most noteworthy application of these reductions is the following.

\begin{theorem}
    There is an $\Ot(n+1/\eps^{\frac{5}{3}})$ randomized time FPTAS for \partition.
\end{theorem}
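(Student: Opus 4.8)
The plan is to obtain the \partition FPTAS from a \emph{relaxed} approximation algorithm for \ssum. Say that an algorithm \emph{weakly $(1-\eps)$-approximates} \ssum at target $t$ if it always returns a subset whose sum lies in the two–sided window $\big[(1-\eps)t,(1+\eps)t\big]$ when $\mathrm{OPT}=t$ (and is an ordinary FPTAS otherwise); equivalently, it suffices to build a fast \oracle for $S$ at $t$. For \partition the target is $t=\Sigma(S)/2$ and a weak approximation is already enough: if the returned subset overshoots $t$ by at most $\eps t$, its complement \emph{undershoots} $t$ by at most $\eps t$, and since $\mathrm{OPT}\ge t/2$ this complement is a genuine $(1-2\eps)$‑approximate partition. (The freedom to overshoot the capacity is precisely what decouples the problem from the \minconv‑hardness of honest approximate \ssum.) So the whole task reduces to constructing an \oracle for $S$ near $t=\Sigma(S)/2$ in time $\Ot(n+1/\eps^{5/3})$.

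First I would reduce to a clean core instance. Items of value below the target precision $\Theta(\eps t)$ are set aside: by the elementary greedy fact that a multiset of numbers each at most $g$ has subset sums that are $g$‑dense in $[0,(\text{total})]$, the small items contribute, \emph{for free}, a full interval of attainable sums at exactly the granularity we are willing to lose, forming the part $\zzsmall$ of the answer set. The remaining ``large'' items have value in $[\Theta(\eps t),t]$, so there are only $\Ot(1/\eps)$ of them; I would bucket these into $\Oh(\log(1/\eps))$ dyadic value classes $C_j$ (items of value $\approx 2^j$), spending the stated $\Ot(n)$ term on this preprocessing.

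Next, each class $C_j$ is replaced by a compact description of $\subsums{C_j}{t}$ via a dichotomy. If $C_j$ holds many items relative to the spread of its values (after rounding $C_j$ as coarsely as the per‑class error budget allows), then by a S\'ark\"ozy‑type additive‑combinatorics lemma on dense subset sums its set of subset sums is, up to short fringes and the $\gcd$, a single long arithmetic progression, so $C_j$ collapses to one cheaply stored interval generator; if $C_j$ is thin, it has few items and a moderate universe, so one run of a pseudopolynomial \ssum routine (such as \bringmann, or the \koiliaris) on $C_j$ alone is affordable. Finally the $\Oh(\log(1/\eps))$ class descriptions are merged with $\zzsmall$ by a short sequence of Boolean sumset computations, always truncated at $t$ so that every intermediate object has support $\Ot(1/\eps)$; the per‑class error budgets are maintained so that they sum to $\Oh(\eps t)=\Oh(\eps\cdot\mathrm{OPT})$. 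Reading the largest value in $[(1-\Oh(\eps))t,\,t]$ reachable by the combined oracle and unscaling yields the partition, and the overshoot tolerance from the first paragraph absorbs the one‑sided slack.

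The hard part is the quantitative balancing inside the dichotomy: the threshold separating the ``dense, collapse to an arithmetic progression'' regime from the ``thin, solve with a pseudopolynomial algorithm'' regime must be chosen so that, across all $\Oh(\log 1/\eps)$ classes, neither the total fringe/collapse cost nor the total brute‑force cost ever exceeds $\Ot(1/\eps^{5/3})$. This is exactly where the exponent $5/3$ (rather than the trivial $2$) is produced, and it needs both the additive‑combinatorial lemma to be strong enough — a long progression appearing already at item count roughly $(\text{range})^{2/3}$, not only at $\approx\text{range}$ — and the reduction to $\Ot(1/\eps)$ large items (together with the error‑budget allocation across dyadic classes) to be genuinely lossless. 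Randomization enters only through the pseudopolynomial subroutine, and through an optional random split of a class before applying it; the rest of the construction is deterministic.
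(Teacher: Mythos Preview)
Your high-level architecture matches the paper's: reduce \partition to a \emph{weak} \ssum approximation via the complement trick, then combine an additive-combinatorics ``dense'' case (subset sums collapse to a long arithmetic progression) with a pseudopolynomial ``thin'' case, and read off the answer from an \oracle. Where you diverge is the decomposition. The paper does \emph{not} split the core instance into $\Oh(\log(1/\eps))$ dyadic value classes and run the dichotomy in each; it introduces a single global threshold $\gamma=\eps^{2/3}$, handles all items $\ge\gamma t$ in one shot via \bringmann (any feasible set contains at most $1/\gamma$ of them, so rounding to precision $\gamma\eps t$ gives target $\Ot(1/(\gamma\eps))=\Ot(1/\eps^{5/3})$), and applies the Galil--Margalit/thin dichotomy only once, to the whole block of items below $\gamma t$. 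The $5/3$ exponent then drops out of an explicit three-term tradeoff $\Ot\!\big(\frac{1}{\gamma\eps}+\frac{\gamma}{\eps^2}+\frac{n\gamma^2}{\eps^2}\big)$ with $n=\Ot(1/\eps)$, balanced at $\gamma=\eps^{2/3}$.

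Two concrete gaps in your plan correspond to real obstacles the paper spends effort on. First, Galil--Margalit requires the $m$ items to be \emph{distinct}; after rounding a dyadic class $[v,2v)$ to precision $\Theta(\eps v)$ there are only $\Theta(1/\eps)$ buckets, so collisions are generic, and the standard cure (merge three copies of $x$ into $x$ plus a copy of $2x$) ejects items into the next dyadic class, triggering a cascade across classes that you have not controlled. The paper tames exactly this cascade, but only because it has just two categories (small/large) and iterates the round-and-merge step $\Oh(\log n)$ times until stable (its Lemma~\ref{lem:partition-small-large}). Second, the ``short fringes'' you mention are the marginal ranges $[0,L]$ and $[\Sigma-L,\Sigma]$ where the Galil--Margalit structure is silent; these are not short in the relevant sense and must themselves be solved by a pseudopolynomial call whose cost feeds into the balancing---in the paper this is the $n\gamma^2/\eps^2$ term and is in fact one of the two bottleneck terms. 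Finally, the additive-combinatorics threshold you invoke, ``item count $\approx(\text{range})^{2/3}$,'' is not the one that is actually available: Galil--Margalit kicks in already at $m\gtrsim\sqrt{\text{range}}$, and without tracking the fringe cost one cannot see why $5/3$ (rather than, say, $3/2$) is the exponent that survives. In short, the strategy is right, but the balancing you identify as ``the hard part'' is precisely what has to be \emph{done}, and the paper's single-threshold decomposition is what makes that computation clean.
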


This improves upon the previous, 40 year old bound of $\Ot(n + 1/\eps^2)$ for this problem, due to~\citet{levner80}. 
Our algorithm also generalizes to a \weak for \ssum.\footnote{Weak approximation can break the capacity constraint by a small factor. Definition~\ref{def:weak-ssum}
specifies formally what \weak for \ssum is.}

\begin{theorem}
    \label{weak-linear-ssum}
    There is a randomized \weak algorithm for \ssum running in $\Otilde\big(n + 1/\eps^{\frac{5}{3}}\big)$ time.
\end{theorem}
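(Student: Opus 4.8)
The plan is to build, in $\Ot(n+1/\eps^{5/3})$ time, an \oracle for the given \ssum instance, from which a \weak can be read off directly (Definition~\ref{def:weak-ssum} lets us overshoot the capacity by a $(1+\eps)$ factor, which is what makes the roundings below affordable). First I would run the standard FPTAS preprocessing: delete items exceeding $t$; treat the ``small'' items of size $<\eps t$ as a single contiguous block of attainable sums $[0,\Sigma_{\mathrm{small}}]$, correct up to the $\Oh(\eps t)$ error of one small item and computable in $\Oh(n)$ time by prefix sums; and partition the remaining items by size into $\Oh(\log(1/\eps))$ scales $I_j=[2^j,2^{j+1})$ with $2^j\ge \eps t$. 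Within a scale I would shift every item down by $2^j$ and round it to a common granularity $\approx \eps 2^j/\log(1/\eps)$, turning the scale into a set of $k_j$ integers inside a window of width $\Oh(\log(1/\eps)/\eps)$; since a feasible solution uses at most $t/2^j\le 1/\eps$ items of scale $j$, the total rounding error is $\Oh(\eps t)$, again within the weak slack.

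The core of the argument is to compute a compact description of the set $\Sigma_j$ of attainable (shifted, rounded) subset sums of each scale and to combine these sets. Fix a threshold $K$. If $k_j\ge K$, then a classical additive-combinatorics density bound (in the spirit of the \galil for dense instances: the subset sums of $k$ integers in a width-$w$ window fill an arithmetic progression whose step is their gcd and which misses only tails of length $\Oh(w^2/k)$) shows that, up to $\Oh(\eps t)$ slack at each end, $\Sigma_j$ is a \emph{full interval} at the granularity of the scale; such a $\Sigma_j$ is stored in $\Oh(1)$ space, and folding it into a running attainable set — kept as a bitmask over the coarse universe $\{0,\dots,\Oh(1/\eps)\}$, since only precision $\eps t$ is needed in the end — costs just $\Oh(1/\eps)$ via a sliding-window OR rather than an FFT. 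If $k_j<K$, scale $j$ contributes a scaled sum of only $\Ot(K/\eps)$, so $\Sigma_j$ is produced directly by \bringmann on the scaled items in $\Ot(k_j+K/\eps)$ time and folded in just as cheaply. Summing the sparse-branch cost over all $\Oh(\log 1/\eps)$ scales against the cost of running the pseudo-polynomial step at the finest resolution at which the density statement still applies yields a term growing with $K$ and a term shrinking with $K$; balancing $K$ is what fixes the exponent at $5/3$. The errors accumulate as $\Oh(\log 1/\eps)$ per-scale roundings plus $\Oh(\log 1/\eps)$ merge steps, each held to $\Oh(\eps t/\log(1/\eps))$, for $\Oh(\eps t)$ overall; the algorithm is randomized because both \bringmann and the constructive form of the density argument (via hashing) are Monte Carlo.

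The step I expect to be the real obstacle is precisely the ``interaction between instances'' flagged in the introduction. A scale with $k_j$ close to $K$ falls between the two subroutines; worse, the density bound produces a clean full interval only when the shifted, rounded items of the scale are not concentrated in a single small residue class or a thin sub-window — otherwise $\Sigma_j$ is a sparse arithmetic progression with large step, or a union of several. So before invoking the dense branch one must detect and peel off any such degenerate sub-structure (e.g.\ extract a few items whose partial sums already break the offending modulus, or recurse on the residue classes), route the peeled part through the sparse branch, and prove that what is peeled off is always small enough to keep the sparse branch within budget, while the leftover — now genuinely dense — behaves as the interval claim requires. Making this decomposition clean and establishing the size bound on the peeled part is where the bulk of the work lies; once the items of every scale are correctly routed, the error bookkeeping and the running-time accounting sketched above are routine.
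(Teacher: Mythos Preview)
Your overall plan---route dense scales through an additive-combinatorics argument \`a la Galil--Margalit and sparse scales through Bringmann's near-linear pseudo-polynomial routine, then balance---is in the right spirit, but the decomposition you propose differs from the paper's and leaves the crucial step unproved. The paper does \emph{not} slice the item range into $\log(1/\eps)$ geometric scales; it uses a single threshold $\gamma t$, applies Bringmann's algorithm to the (at most $1/\gamma$) large items after coarse rounding, and applies the density machinery only once, to the small items as a whole. The reason the single-threshold approach is cleaner is exactly the obstacle you flag in your third paragraph: Galil--Margalit requires the input integers to be \emph{distinct}, and your per-scale rounding (to $\approx \log(1/\eps)/\eps$ buckets) does nothing to prevent all $k_j$ items of a scale from collapsing to a handful of values, at which point the ``full interval'' conclusion is simply false and there is no apparent bound on how much you must ``peel off.'' The paper resolves this not by peeling but by a preprocessing loop (Lemma~\ref{lem:partition-small-large}): after rounding, merge repeated small items into larger ones (possibly promoting some across the $\gamma t$ threshold) and re-round; after at most $\log n$ iterations the small items are $(\eps t/(m\log n),2)$-distinct, which is exactly the hypothesis Galil--Margalit needs.

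A second gap is the running-time accounting. You assert that ``balancing $K$ is what fixes the exponent at $5/3$,'' but you never carry this out, and the $5/3$ in the paper does not come from a sparse/dense balance of the kind you sketch. The bottleneck is the \emph{marginals}: Galil--Margalit only covers queries in $(L,\Sigma-L)$ with $L=\Oh(\Sigma\cdot\ell/m^2)$, and the paper handles $[0,L]$ and $[\Sigma-L,\Sigma]$ by a separate call to Bringmann's algorithm costing $\Ot(m\gamma^2/\eps^2)$ (Lemma~\ref{small-range-bringmann}). After reducing to $\Ot(1/\eps)$ items, this term becomes $\Ot(\gamma^2/\eps^3)$ and must be balanced against the large-item cost $\Ot(1/(\gamma\eps))$, giving $\gamma=\eps^{2/3}$ and the exponent $5/3$. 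Your sketch omits the marginals entirely, and your per-scale bookkeeping would have to reproduce this trade-off, which it currently does not.
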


%Note that in here $\Ot$ notation hides only polylogarithmic factors in $\eps$ and $n$
%(the running time does not depend on $t$).
For a complete proof of these theorems see Section~\ref{fptas-ssum}.
We also present a conditional lower bound for \knapsack and \uknapsack. 

\begin{theorem}
    \label{knapsack-lower-bound}
    For any constant $\delta > 0$, an FPTAS for \knapsack or \uknapsack with
    $\Oh((n+1/\eps)^{2-\delta})$ running time would refute the \minconv conjecture.
\end{theorem}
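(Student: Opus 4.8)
\medskip\noindent\textit{Plan of the proof.}\quad
The strategy is to reduce \minconv to the \emph{exact} versions of \knapsack and \uknapsack in such a way that an $\Oh((n+1/\eps)^{2-\delta})$-time FPTAS could be used to solve those exact instances in truly subquadratic time. It is known, via the chain of subquadratic reductions that place \minconv, \knapsack and \uknapsack in one equivalence class, that deciding superadditivity of a sequence — given a non-decreasing sequence $a_0=0,a_1,\dots,a_n$ of non-negative integers bounded by $\Oh(n)$, decide whether $a_i+a_j\le a_{i+j}$ for all $i+j\le n$ — has no $\Oh(n^{2-\delta})$ algorithm unless the \minconv conjecture fails. (If the input is not non-decreasing with $a_0=0$ it is a trivial ``no'' instance, which we detect in $\Oh(n)$ time, so we may assume this form.) I would build from such an instance a \knapsack instance $\mathcal{I}$ with $\Oh(n)$ items, $\Oh(n)$ weights and capacity, and $\mathrm{OPT}(\mathcal I)=\bar V+\max_{i,j}(a_i+a_j-a_{i+j})$ for a parameter $\bar V=\Theta(n)$; since $a_0=0$ the maximum is $\ge 0$, so $\mathcal I$ comes from a ``yes'' instance exactly when $\mathrm{OPT}(\mathcal I)=\bar V$, and in all cases $\mathrm{OPT}(\mathcal I)=\Theta(n)$.

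The instance $\mathcal I$ uses three weight scales built from a large $W=\Theta(n)$. For each index $i$ it contains two copies of a \emph{selector} item of weight $W+i$ and value $a_i$ (two copies so that the case $i=j$ is expressible in the $0/1$ model), and for each $s\in\{0,\dots,2n-2\}$ a \emph{lookup} item $\gamma_s$ of weight $3W+\big((2n-2)-s\big)$ and value $\bar V-a_s>0$. With capacity $t=5W+(2n-2)$ a short counting argument (each item has weight $\ge W$, while $2n-2<W$) shows that every feasible solution contains at most one $\gamma_s$, and if it contains $\gamma_s$ then it contains at most two selector items, whose indices $i,j$ satisfy $i+j\le s$; moreover, taking $\bar V$ large enough that a single $\gamma_s$ outweighs any $\gamma$-free selection, the optimum does pick exactly one $\gamma_s$. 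Hence $\mathrm{OPT}(\mathcal I)=\max_{s}\big(\bar V-a_s+\max_{i+j\le s}(a_i+a_j)\big)$, and since $a$ is non-decreasing the inner maximum is attained at $i+j=s$, so the outer optimization gives $\bar V+\max_{i,j}(a_i+a_j-a_{i+j})$, as claimed. For \uknapsack the same $\mathcal I$ works after checking that the weight scales forbid taking three or more copies of any single item, so multiplicities are confined exactly as in the $0/1$ analysis (and then the two copies per selector become unnecessary).

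Finally, assume an FPTAS for \knapsack (resp. \uknapsack) running in $\Oh((n+1/\eps)^{2-\delta})$ time, and run it on $\mathcal I$ with $\eps=1/(2Cn)$, where $Cn$ is the explicit $\Oh(n)$ upper bound on $\mathrm{OPT}(\mathcal I)$. The returned value $S$ satisfies $(1-\eps)\mathrm{OPT}\le S\le\mathrm{OPT}$; since $\mathrm{OPT}$ is a non-negative integer at most $Cn$, the interval $\big[S,\,S/(1-\eps)\big)$ has length below $1$ and contains $\mathrm{OPT}$, so $\mathrm{OPT}=\ceil{S}$ is recovered exactly, and comparing it to $\bar V$ decides the superadditivity instance. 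As the FPTAS running time does not depend on the capacity, this takes $\Oh((n+1/\eps)^{2-\delta})=\Oh(n^{2-\delta})$ time, contradicting the \minconv conjecture. The one delicate point — and the reason a direct black-box use of known reductions may not suffice — is keeping $\mathrm{OPT}(\mathcal I)$ as small as $\Oh(n)$: an FPTAS only resolves the optimum to within a relative error $\eps$, so the reduction must simultaneously encode a \minconv-hard instance and forbid any ``over-stuffed'' solution with many items, which would inflate $\mathrm{OPT}$ and hence the precision $1/\eps$ we must pay for. This is exactly what the three-scale weight gadget, together with the automatic monotonicity of superadditive sequences, accomplishes.
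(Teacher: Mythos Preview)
Your argument is correct and shares the paper's core idea: convert the hypothetical FPTAS into an exact algorithm by taking $\eps$ inversely proportional to the optimum, then invoke the \minconv-hardness of the resulting exact problem. The paper's proof, however, is essentially a two-line black-box argument: it cites (as a light modification of \cite{icalp2017}) that any exact $\Oh((n+v)^{2-\delta})$ algorithm for \knapsack or \uknapsack --- with $v$ an upper bound on the optimal value --- already refutes the \minconv conjecture, and then observes that setting $\eps=2/v$ in the FPTAS yields absolute error below $1/2$ and hence recovers the optimum exactly.

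Your route is genuinely more constructive: instead of citing the $(n+v)$-parameterized hardness wholesale, you pass through the \minconv-equivalent superadditivity problem and build an explicit three-scale \knapsack gadget with $\mathrm{OPT}=\bar V+\max_{i+j\le n}(a_i+a_j-a_{i+j})$. The gadget is correct (modulo minor index bookkeeping, and the harmless imprecision that, absent any $\gamma$-item, one \emph{can} take more than two copies of a selector in the unbounded model --- what matters is that the optimum still selects exactly one $\gamma$). Both proofs ultimately lean on the same external ingredient, namely that the reductions in \cite{icalp2017,kunnemann-icalp2017} can be arranged so that the target instance has optimal value $\Oh(n)$; you phrase this as hardness of superadditivity with $\Oh(n)$-bounded entries, the paper phrases it as the $(n+v)$-parameterized observation. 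Your version trades the paper's brevity for a transparent, self-contained gadget; the paper trades the gadget for a one-line citation.
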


This means that a similar improvement is unlikely for
\knapsack. Also, this shows that the algorithm of~\cite{fptas-uknapsack} for \uknapsack is 
optimal (up to polylogarithmic factors). This lower bound is relatively straightforward
and follows from previous
works~\cite{icalp2017,kunnemann-icalp2017} and was also observed in
\cite{chan-knapsack}. The lower bound  also
applies to the relaxed, \weak for \knapsack and \uknapsack, which separates these problems from \weak approximation for \ssum. This result was recently extended by Bringmann~\cite{bringmann-com} who showed a conditional hardness for obtaining a strong subquadratic approximation for \ssum, which explains why we need to settle for a weak approximation.

Lately it has been shown that the exact pseudo-polynomial algorithms for 
\knapsack and \uknapsack are subquadratically equivalent to the \minconv~\cite{icalp2017,kunnemann-icalp2017}.
Therefore, as a possible 
first step towards obtaining an improved FPTAS for \knapsack, we 
focus our attention on \minconv.

\begin{theorem}
    $(1+\eps)$-approximate \minconv can be computed in $\Ot((n/\eps) \log{W})$ time.
\end{theorem}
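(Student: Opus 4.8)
The plan is to combine a geometric bucketing of the value range with the standard FFT-based algorithm for $(\min,+)$-convolution of arrays whose entries lie in a small integer range. Throughout, $A,B$ denote the two length-$n$ input arrays with nonnegative integer entries bounded by $W$, $C[k]=\min_{i+j=k}A[i]+B[j]$ is the exact convolution, and a $(1+\eps)$-approximation means any array $\tilde C$ with $C[k]\le\tilde C[k]\le(1+\eps)C[k]$ for all $k$. First I would record the (folklore) fact that if two length-$n$ arrays have all entries in $\{0,1,\dots,U\}\cup\{+\infty\}$, then their $(\min,+)$-convolution can be computed in $\Ot(nU)$ time: form the bivariate polynomials $\sum_i x^iy^{A[i]}$ and $\sum_j x^jy^{B[j]}$ (dropping the $+\infty$ entries), multiply them with a single FFT of length $\Oh(nU)$ (all coefficients are integers at most $n$, so fixed-precision FFT or an integer transform suffices), and for each $x$-exponent $k$ read off the least $y$-exponent carrying a nonzero coefficient.

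Next, for each scale $\ell\in\{0,1,\dots,\floor{\log_2 W}\}$ put $u_\ell:=\eps 2^\ell$ and define rounded/truncated arrays $A^{(\ell)},B^{(\ell)}$ by: $A^{(\ell)}[i]:=+\infty$ whenever $A[i]\ge 2^{\ell+2}$, and $A^{(\ell)}[i]:=u_\ell\ceil{A[i]/u_\ell}$ otherwise (round $A[i]$ up to a multiple of $u_\ell$), and symmetrically for $B$. Every finite entry of $A^{(\ell)}$ and $B^{(\ell)}$ is a multiple of $u_\ell$ lying in $[0,2^{\ell+2})$, so after dividing by $u_\ell$ it is an integer in $\{0,1,\dots,\ceil{4/\eps}\}$; hence $C^{(\ell)}:=(\min,+)\text{-conv}(A^{(\ell)},B^{(\ell)})$ is computable in $\Ot(n/\eps)$ time. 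The algorithm outputs $\tilde C[k]:=\min_\ell C^{(\ell)}[k]$, and as there are $\Oh(\log W)$ scales the total running time is $\Ot\big((n/\eps)\log W\big)$.

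For correctness, observe that $A^{(\ell)}[i]\ge A[i]$ and $B^{(\ell)}[j]\ge B[j]$ on finite entries, while an $+\infty$ entry only removes a candidate pair; hence $C^{(\ell)}[k]\ge C[k]$ for every $\ell$, so $\tilde C[k]\ge C[k]$. For the other direction fix $k$ and a minimizing pair $i^*+j^*=k$, $A[i^*]+B[j^*]=C[k]$. If $C[k]=0$ then $A[i^*]=B[j^*]=0$, these entries are untouched at scale $0$, and $C^{(0)}[k]=0$. Otherwise $\max(A[i^*],B[j^*])\ge 1$; let $\ell^*$ be the largest integer with $2^{\ell^*}\le\max(A[i^*],B[j^*])$, so $\max(A[i^*],B[j^*])<2^{\ell^*+2}$, $\ell^*\le\floor{\log_2 W}$, and $C[k]\ge 2^{\ell^*}$. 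Then neither entry is killed in $A^{(\ell^*)},B^{(\ell^*)}$ and the rounding increases each by strictly less than $u_{\ell^*}=\eps 2^{\ell^*}$, so $\tilde C[k]\le C^{(\ell^*)}[k]\le A^{(\ell^*)}[i^*]+B^{(\ell^*)}[j^*]<C[k]+2\eps 2^{\ell^*}\le(1+2\eps)C[k]$. Rescaling $\eps\leftarrow\eps/2$ gives the claimed guarantee, and the mirror construction (round down, take $\max$) handles $(\max,+)$-convolution.

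The only delicate point, and the one I expect to be the real obstacle, is keeping the per-scale value range at $\Oh(1/\eps)$ rather than $\Oh(1/\eps^2)$. The naive idea of handling, at scale $\ell$, only the entries \emph{of} scale $\ell$ and rounding them to their own granularity $\eps 2^\ell$ fails, because a pair whose two entries live in very different scales then gets split across many scale pairs, and realigning the two entries to a common additive unit blows the value range back up to $\Oh(1/\eps^2)$. The construction above avoids this by being asymmetric: at scale $\ell$ it admits \emph{every} entry below $2^{\ell+2}$ and rounds all of them to the single coarse unit $\eps 2^\ell$, which is still accurate because at the decisive scale $\ell=\ell^*$ that unit is already $\le\eps C[k]$ --- so even a much smaller entry like $B[j^*]$ only contributes error $\eps C[k]$, never $\eps B[j^*]$. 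Everything else is routine bookkeeping.
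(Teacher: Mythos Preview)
Your proof is correct and follows essentially the same approach as the paper: both combine the folklore $\Ot(nU)$ exact $(\min,+)$-convolution via unary encoding and FFT with a geometric scan over $\Oh(\log W)$ scales, at each scale rounding entries to $\Oh(1/\eps)$ integer values and truncating large entries to $\infty$. The only cosmetic differences are that the paper parameterizes the scale by the target sum $q\le C[k]<2q$ rather than by $\max(A[i^*],B[j^*])$, and overwrites the output while scanning down in $q$ rather than taking an explicit $\min$ over scales; your discussion of why a single coarse unit per scale suffices (rather than per-entry granularity) is a nice added explanation of the key point.
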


This also entails an improvement for the related \sparsity problem
(see Section~\ref{sec:tree-sparsity}).
The best previously known algorithms for both problems worked in time
$\Ot((n/\eps^2) \polylog(W))$ (see~\citet{tree-sparsity}).

The techniques used to improve the approximation algorithm for \minconv
also apply to approximation algorithms for \IIIsum.
For this problem we are able to show an algorithm that matches its asymptotic lower bounds.

\begin{theorem}
    There is a deterministic algorithm for $(1+\eps)$-approximate \IIIsum
    running in time $\Ot((n + 1/\eps)\polylog(W))$.
\end{theorem}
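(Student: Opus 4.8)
The plan is to build a compact \emph{approximate sumset} of $A$ and $B$ and test it against $C$. After standard reductions — split a single-set instance into a three-set instance $A,B,C$; rewrite $a+b+c=0$ as $a+b=-c$; and branch on the $\Oh(1)$ possible sign patterns so that within each branch all numbers are positive integers in $[1,W]$ — it suffices to decide whether some $a\in A,\ b\in B,\ c\in C$ satisfy $a+b=c$, with the relaxation that the answer may be arbitrary whenever the closest achievable triple only satisfies $c/(1+\eps)\le a+b\le(1+\eps)c$.

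Group numbers by \emph{scale} $\sigma(x)=\lfloor\log_2 x\rfloor\in\{0,\dots,\lceil\log_2 W\rceil\}$. If $\max(\sigma(a),\sigma(b))=i$ then $a+b\in[2^i,2^{i+2})$, so $a+b$ is pinned down up to additive $\eps 2^i\le\eps(a+b)$ by its nearest multiple of $\eps 2^i$. Hence, for each scale $i$, round every $A$- and $B$-value in scales $\le i$ to its nearest multiple of $\eps 2^i$, subtract $2^i$, and divide by $\eps 2^i$: the scale-$i$ part of $A$ becomes a set $\hat A_i$ of at most $\lceil1/\eps\rceil$ nonnegative integers, while the scale-$j$ part of $B$ (for $j\le i$) becomes at most $2^{j-i}/\eps+1$ integers, so summing over $j\le i$ the $B$-side becomes a set $\hat B_i$ of $\Oh(1/\eps)$ integers in $\{0,\dots,\Oh(1/\eps)\}$. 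One Boolean convolution of $\hat A_i$ and $\hat B_i$ via FFT, in $\Oh((1/\eps)\log(1/\eps))$ time, lists all sums with $a$ in scale $i$; the symmetric convolution (swapping $A$ and $B$) covers the pairs with $b$ in scale $i$. Undoing the scaling yields, for each $i$, a set $R_i$ of $\Oh(1/\eps)$ values such that every $a+b$ with $\max$-scale $i$ lies within a $(1+\Oh(\eps))$ factor of some element of $R_i$, and every element of $R_i$ lies within a $(1+\Oh(\eps))$ factor of some genuine $a+b$. Finally, sort $C$ once and, for each $v\in R:=\bigcup_i R_i$, binary-search for $c\in C$ with $v/(1+\Oh(\eps))\le c\le(1+\Oh(\eps))v$; output YES iff some search succeeds.

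Correctness follows from the two properties of the $R_i$ once one handles the corner case where $\sigma(b)$ is more than $\log_2(1/\eps)$ below $\sigma(a)$: there $b$ rounds to $0$ at precision $\eps 2^{\sigma(a)}$, but also $a+b\in[a,(1+\Oh(\eps))a)$, so the value $\approx a$ deposited in $R_{\sigma(a)}$ (which is present precisely when $B$ contains such a tiny element) is still a valid representative; replacing $\eps$ by a small constant multiple of itself absorbs every $\Oh(\eps)$ slack. For the running time: computing all scales costs $\Oh(n\log W)$; each element of $A$ or $B$ has a nonzero residue in only $\Oh(\log(1/\eps))$ of the sets $\hat A_i,\hat B_i$ (for larger $i$ it just rounds to $0$), so assembling them costs $\Oh(n\log(1/\eps))$; the $\Oh(\log W)$ convolutions cost $\Ot((1/\eps)\log W)$ in total; and sorting $C$ plus the $|R|=\Oh((1/\eps)\log W)$ binary searches cost $\Ot((n+1/\eps)\log W)$. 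The total is $\Ot\big((n+1/\eps)\,\polylog(W)\big)$, and every step — scaling, FFT on Boolean/integer vectors of polynomial size, sorting, binary search — is deterministic. The crux is the \emph{pair-dependent rounding}: one must check simultaneously that rounding to the coarse precision $\eps 2^i$ keeps the error within $\Oh(\eps)(a+b)$ \emph{and} keeps $|\hat B_i|=\Oh(1/\eps)$ so the FFTs remain near-linear; the sign-pattern branch and the ``$b$ much smaller than $a$'' case are the other places where a careless bound would leak a spurious $\log W$ or $1/\eps$ factor.
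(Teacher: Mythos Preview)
Your proposal is correct and follows essentially the same approach as the paper: iterate over $\Oh(\log W)$ geometric scales, at each scale round the inputs to precision $\eps\cdot(\text{scale})$ so they become integers in a range of size $\Oh(1/\eps)$, and compute the resulting sumset by a single FFT. The paper packages this slightly differently---as a black-box reduction (Lemma~\ref{reduction-ksum}) from the exact $\Ot(n+W)$ FFT-based algorithm, parameterized by the scale $q$ of the target sum rather than by $\max(\sigma(a),\sigma(b))$, and with $C$ rounded alongside $A,B$ instead of binary-searched---but the underlying mechanism is the same (one cosmetic slip: after subtracting $2^i$ your $\hat B_i$ contains negative integers, not elements of $\{0,\dots,\Oh(1/\eps)\}$, though the range is still $\Oh(1/\eps)$ so the FFT bound is unaffected).
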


\begin{theorem}
    Assuming the Strong-3SUM conjecture, there is no $\Ot((n+1/\eps^{1-\delta})\polylog(W))$
    algorithm for $(1+\eps)$-approximate \IIIsum, for any constant $\delta > 0$.
\end{theorem}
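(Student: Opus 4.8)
The plan is to give a fine-grained reduction: an overly efficient approximation algorithm for \IIIsum would decide \emph{exact} \IIIsum on near-quadratic-range integers in subquadratic time, contradicting the Strong-3SUM conjecture. I will invoke the conjecture in its small-universe form: for every constant $\delta' > 0$ there is no $O(n^{2-\delta'})$-time algorithm deciding whether a set of $n$ integers of magnitude $O(n^2)$ contains three elements summing to a prescribed target (equivalently, to $0$). This is the ``strong'' flavour of the conjecture that the literature uses for integer inputs with polynomially bounded magnitudes, and it is essential that the universe here is only near-quadratic in $n$ rather than a larger polynomial.

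Suppose, for some constant $\delta > 0$, there were an algorithm running in time $\Ot\big((n + 1/\eps^{1-\delta})\polylog W\big)$ for $(1+\eps)$-approximate \IIIsum. Take an exact instance as above: a set $S$ of $n$ integers with $|x| = O(n^2)$ for all $x \in S$, together with a target $t$; adding a common shift lets us assume $S \subseteq \{1,\dots,O(n^2)\}$ and $t = O(n^2)$. Choose $\eps$ with $1/\eps = \Theta(n^2)$ and $1/\eps > 2t$, and let $W = O(n^2)$ be the magnitude bound handed to the approximate algorithm. The key point is an \emph{integrality gap}: the window $[(1-\eps)t,(1+\eps)t]$ has length $2\eps t < 1$, so the only integer it contains is $t$ itself. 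Hence, in any of the usual formulations of $(1+\eps)$-approximate \IIIsum, the algorithm is forced to behave exactly on this instance --- it must accept precisely when $S$ has a triple summing to $t$ and reject otherwise, since for integer inputs there is no borderline triple whose sum lies strictly between the exact requirement and its $\eps$-relaxation.

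Running the hypothetical algorithm on this instance takes time $\Ot\big((n + n^{2(1-\delta)})\polylog n\big)$, since $1/\eps = \Theta(n^2)$ and $W = \Theta(n^2)$; the exponent is $\max\{1,\,2-2\delta\} + o(1) < 2$, i.e.\ the running time is $n^{2-\Omega(1)}$. Combined with the previous paragraph this is a subquadratic exact algorithm for small-universe \IIIsum, contradicting the Strong-3SUM conjecture. Since $\delta > 0$ was arbitrary, no algorithm with the stated running time exists, and the $\polylog(W)$ factor is harmless throughout because $W = \poly(n)$.

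I expect the only delicate part to be bookkeeping rather than a genuine obstacle: one must pin down the precise definition of $(1+\eps)$-approximate \IIIsum used for the matching upper bound so that the integrality-gap step applies verbatim, and one must invoke the conjecture with a universe bound that is at most near-quadratic in $n$. (With a hard-instance universe of size $n^c$ the reduction would need $1/\eps \approx n^c$ and would only rule out $\delta > 1 - 2/c$; the near-quadratic universe is exactly what makes the argument work for every $\delta > 0$, and it is in turn available through the known equivalence of 3SUM with \textsc{Convolution-3SUM}.)
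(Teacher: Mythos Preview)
Your proposal is correct and follows essentially the same route as the paper: reduce exact \IIIsum on an $O(n^2)$-range instance by shifting to positive integers, choose $\eps \approx 1/n^2$ so the approximation window contains only the exact target, and conclude that the hypothetical $\Ot((n+1/\eps^{1-\delta})\polylog W)$ algorithm solves the instance in $\Ot(n^{2-2\delta})$ time, contradicting Strong-3SUM. Your added remarks on why the near-quadratic universe is essential and on matching the precise approximation guarantee are accurate and make the write-up slightly more careful than the paper's version.
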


For proofs of these theorems and detailed running times see
Sections~\ref{sec:ksum}

\subsection{Organization of the Paper}

In the Section~\ref{sec:techniques} we present the building blocks of our
framework and a sketch of the approximation scheme for \partition.
Section~\ref{prelim} contains the notation and preliminaries, and the main proof
is divided into Sections~\ref{sec:preprocessing} and~\ref{fptas-ssum}. In
Sections~\ref{approximate-minconv} and \ref{sec:tree-sparsity} we present the
algorithms for \minconv and \sparsity. In the Section~\ref{sec:ksum} we present
the algorithms for \IIIsum.   The proofs of technical lemmas can be found in
Appendix~\ref{proof-partition-reduction} and
Appendix~\ref{proof-numbers-rounding}. In Appendix~\ref{problems-definitions} we
give formal definitions of all problems. 
\section{Connecting Dense, Pseudo-polynomial and Approximation Algorithms for
\knapsack-type problems: An Overview}
\label{sec:techniques}

In this section we describe main building blocks of our framework. We also briefly discuss the recent advances in
the pseudo-polynomial algorithms for \ssum and discuss how to use them. Then, we explain the 
intuition behind the trade-off we exploit and give a sketch of the main algorithm.
The formal arguments are located in Section~\ref{fptas-ssum}.

%\subsection{Building Blocks}\label{building-blocks}

%\paragraph{Approximation Schemes for \knapsack}

%In 1957 \citet{bellman} showed that \knapsack admits a pseudo-polynomial time
%$\Oh(nt)$ algorithm by using dynamic programming. Bellman's algorithm
%contributed greatly to the early development of approximation algorithms and still
%is being used as a building block in approximation schemes for \knapsack~\cite{kellerer-subsetsum,ibarra-kim,levner79,levner81,levner94}
%and related problems.

%The basic idea in most of these algorithms is to reduce the range of possible \emph{values}
%by rounding. Since \ssum and \partition are special cases of \knapsack, these
%algorithms imply approximation schemes for these problems as well.

\paragraph{Difficulties with Rounding for \ssum}

There is a strong connection between approximation schemes and pseudo-polynomial algorithms~\cite{woeginger}.
For example, a common theme in approximating knapsack is to reduce the range of the \emph{values} 
(while keeping the \emph{weights} intact) and then apply a pseudo-polynomial algorithm.
Rounding the weights would be tricky because of the hard knapsack constraint. 
In particular, if one rounds the weights down, some feasible solutions
to the rounded instance might correspond to infeasible solutions in the original instance. 
On the other hand, when rounding up, some feasible solutions might become infeasible in the 
rounded instance.

Recently, new pseudo-polynomial algorithms have been proposed for  \ssum
(see~\citet{koiliaris-soda} and \citet{bringmann-soda}). A natural idea is to use these 
to design an improved approximation scheme for \ssum. However, this seems to be difficult
due to rounding issues discussed above. After this paper was announced,
Bringmann~\cite{bringmann-com} explained this difficulty by giving
a conditional lower bound on a quadratic approximation of \ssum. 

% Soon after, \citet{bringmann} gave an $\Otilde(n+t)$ randomized algorithm for computing
% $\subsums{Z}{t}$.~\footnote{see~Corollary~\ref{bringmann-all}}

%A natural question is: Can one use these algorithms to design improved approximation schemes
%for \ssum? Note that the most straightforward approach fails, and here is why. 
%Pseudo-polynomial algorithms for \ssum work exclusively in the \emph{weight}
%domain, so range reduction would need to be applied to the weights, and not values. 
%However, it is not clear how to round in the weight domain, because of the hard knapsack
%constraint. 
%For example, if one would round the weights down, some feasible solutions
%to the rounded instance might correspond to infeasible solutions in the original instance. 
%On the other hand, when rounding up, some feasible solutions might become infeasible in the 
%rounded instance.

\subsection{Weak Approximation for \ssum and Application to \partition}
Because of these rounding issues, it seems hard to design a general rounding scheme that, 
given a pseudo-polynomial algorithm for \ssum, produces an FPTAS for \ssum. What
we can do, however, is to settle for a weaker notion of approximation.
 
\begin{definition}[Weak apx for \ssum]
    \label{def:weak-ssum}
    Let $Z^*$ be the optimal value for an instance $(Z,t)$ of \ssum. Given $(Z,t)$, a
    \textit{weak} $(1-\eps)$-approximation algorithm for \ssum returns $Z^H$ such that
    $(1-\eps)Z^* \le Z^H < (1+\eps)t$.
\end{definition}
Compared to the traditional notion of approximation, here we allow a small violation of the
packing constraint. This notion of approximation is interesting in itself. Indeed,
it has been already considered in the stochastic regime for \knapsack~\cite{stochastic-knapsack}. 

Before going into details of constructing the \weak algorithms for the \ssum, let us
establish a relationship with the approximation for the \partition.

\begin{observation}
    \label{partition-reduction}
    If we can weakly $(1-\eps)$-approximate \ssum in time $\Ot(T(n,\eps))$, then we
    can $(1-\eps)$-approximate \partition in the same $\Ot(T(n,\eps))$ time.
\end{observation}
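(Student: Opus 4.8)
The plan is to exploit the self-complementary structure of \partition. A \partition instance on $S$ is exactly the \ssum instance $(S,t)$ with $t=\floor{\Sigma(S)/2}$, and since every subset sum is an integer, the \partition optimum $P^*$ equals the \ssum optimum $Z^*$ of $(S,t)$; in particular $Z^*\le t$. The key observation is that the complement $S\sm S'$ of a subset has sum $\Sigma(S)-\Sigma(S')$, so a subset that \emph{overshoots} $\Sigma(S)/2$ by a little certifies a feasible subset, namely its complement, that \emph{undershoots} $\Sigma(S)/2$ by a little. This is precisely the slack that a weak approximation is allowed to take, which is why it suffices here.

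Concretely, I would: (1) given the \partition instance $S$, set $t:=\floor{\Sigma(S)/2}$; (2) run the assumed weak $(1-\eps)$-approximation on $(S,t)$ in time $\Ot(T(n,\eps))$ to get a subset $S^H\subseteq S$ with $(1-\eps)Z^*\le \Sigma(S^H)<(1+\eps)t$, and write $Z^H=\Sigma(S^H)$; (3) if $Z^H\le t$ output $S^H$, which is feasible for \partition and has value $\ge(1-\eps)Z^*=(1-\eps)P^*$; otherwise output the complement $S\sm S^H$. In the complement case the two claims to verify are elementary: \emph{feasibility}, $\Sigma(S)-Z^H<\Sigma(S)-t=\ceil{\Sigma(S)/2}\le t+1$, so $\Sigma(S)-Z^H\le t$ by integrality; and the \emph{value bound}, $\Sigma(S)-Z^H>\Sigma(S)-(1+\eps)t\ge 2t-(1+\eps)t=(1-\eps)t\ge(1-\eps)Z^*=(1-\eps)P^*$, using $\Sigma(S)\ge 2t$. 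Either way the algorithm returns a feasible \partition solution of value at least $(1-\eps)P^*$, and the overhead (computing $\Sigma(S)$, taking a complement) is $\Oh(n)$, absorbed into $\Ot(T(n,\eps))$.

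I do not expect a real obstacle; the proof is a short symmetry argument. The only things to be careful about are the parity of $\Sigma(S)$ — handled cleanly by the floor above, or equivalently by scaling all numbers by $2$ so that $\Sigma(S)$ is even — and making sure the weak-approximation slack is applied on the correct side: overshooting the target is harmless for \partition because complementation turns it into an undershoot, whereas for \ssum the same overshoot is genuinely problematic, which is consistent with the fact that the analogous statement for \ssum is false (cf.\ the discussion of Bringmann's lower bound above). If one prefers to phrase the reduction purely in terms of optimal values rather than witnesses, the same computation shows that $\min\{Z^H,\ \Sigma(S)-Z^H\}$ is a feasible \partition value lying in $[(1-\eps)P^*,\,P^*]$.
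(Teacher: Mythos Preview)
Your proposal is correct and follows essentially the same approach as the paper: run the weak \ssum approximation at the half-sum target, return the output if it is feasible, and otherwise return its complement, using $\Sigma(S)-Z^H>(1-\eps)t\ge(1-\eps)Z^*$ for the value bound. If anything, you are slightly more careful than the paper about the parity of $\Sigma(S)$ via the floor and integrality argument; the paper simply takes $b=\Sigma(Z)/2$ and glosses over this point.
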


This is because of the symmetric structure of \partition problem: If a subset
$Z'$ violates the hard constraint ($t \le \Sigma(Z') \le (1+\eps)t$), then the
set $Z-Z'$ is a good approximation and does not violate it (recall that in
\partition problem we always have $t = \Sigma(Z)/2$). For a formal proof see
Section~\ref{proof-partition-reduction}.

\subsection{Constructing Weak Approximation Algorithms for \ssum: A Sketch}
\label{sec:weak}
\begin{fact}
    \label{thm:weak-example}
    Given an $\Ot(T(n,t))$ exact algorithm for \ssum, we can construct a \weak
    algorithm for \ssum working in time $\Ot(T(n,\frac{n}{2\eps}))$.
\end{fact}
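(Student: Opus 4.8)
The plan is a scale-and-round reduction from \weak for \ssum to the given exact \ssum algorithm; the one genuinely delicate point is reconciling the \emph{additive} error that rounding introduces with the \emph{multiplicative} form of the guarantee, and that is the step I expect to be the main obstacle.

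First I would preprocess in $O(n)$ time. Discard every item exceeding $t$ (it can never be used), and let $V$ be the sum of the survivors. If $V\le t$, output $Z^H:=V=Z^*$ and stop. Otherwise I claim $Z^*>t/2$: run a greedy sweep adding a survivor whenever the running sum stays $\le t$; since $V>t$ some survivor $z_j\le t$ gets skipped, and at the moment it is skipped the running sum plus $z_j$ exceeds $t$, so $\Sigma(F)+z_j>t$ for the final greedy set $F$, hence $\Sigma(F)>t/2$ or $z_j>t/2$, and either way $Z^*>t/2$. From now on I may assume all items lie in $\{1,\dots,t\}$ and $Z^*>t/2$.

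Next I would rescale: put $\tau:=\ceil{n/(2\eps)}$, $\mu:=t/\tau$, and $a_i:=\floor{z_i/\mu}$ for each $i$, so every $a_i$ is an integer in $\{0,\dots,\tau\}$ and the $n$-item \ssum instance with weights $a_1,\dots,a_n$ and capacity $\tau$ is legitimate. Running the assumed exact algorithm on it returns a subset $H$; set $P:=\sum_{i\in H}a_i$ and output $Z^H:=\mu P$. The running time is $\Ot(T(n,\tau))=\Ot(T(n,n/(2\eps)))$.

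For correctness the upper bound is immediate: $Z^H=\mu P\le\mu\tau=t<(1+\eps)t$. For the lower bound let $S^*$ be an optimal subset of the original instance; since rounding down only shrinks sums, $\sum_{i\in S^*}a_i\le\Sigma(S^*)/\mu\le t/\mu=\tau$, so $S^*$ is feasible in the rescaled instance and therefore $P\ge\sum_{i\in S^*}a_i>\sum_{i\in S^*}(z_i/\mu-1)\ge Z^*/\mu-n$; multiplying by $\mu$ gives $Z^H>Z^*-\mu n=Z^*-nt/\tau\ge Z^*-2\eps t>(1-4\eps)Z^*$, where the last inequality is exactly where $Z^*>t/2$ enters. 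Replacing $\eps$ by $\eps/4$ throughout then yields a genuine \weak algorithm for \ssum; its auxiliary capacity becomes $\ceil{2n/\eps}$, and since every exact \ssum algorithm worth plugging in (e.g.\ \bringmann) runs in time polynomial in $n+t$, we have $\Ot(T(n,2n/\eps))=\Ot(T(n,n/(2\eps)))$, which is the claimed bound. So the whole difficulty really is the additive-versus-multiplicative mismatch — a round-down of $n$ numbers costs $\Theta(\eps t)$ in absolute terms, but the guarantee is a $(1-\eps)$ factor on $Z^*$, which could a priori be tiny compared with $t$ — and it is resolved entirely by the cheap observation that unless the instance is trivial, $Z^*$ is already a constant fraction of $t$; everything after that is bookkeeping.
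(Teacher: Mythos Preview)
Your argument is correct and mirrors the paper's proof almost step for step: round each item down by a factor $\mu=\Theta(\eps t/n)$, run the exact algorithm on the integer instance with target $\Theta(n/\eps)$, and convert the additive rounding loss $\Theta(\eps t)$ into a multiplicative one via the preprocessing guarantee $Z^*>t/2$ (your greedy argument is exactly what the paper defers to Lemma~\ref{lem:large-opt}).

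One small slip worth flagging: you output the \emph{number} $Z^H=\mu P$, which is not in general an achievable subset sum of the original instance, whereas the paper returns the actual items $\{z_i:i\in H\}$ --- and a genuine subset is what the downstream application to \partition needs (one takes its complement). The fix is immediate with your own estimates: $\sum_{i\in H}z_i\ge\mu P>(1-4\eps)Z^*$, and $\sum_{i\in H}z_i\le\mu P+\mu|H|\le t+2\eps t$, so returning $H$ itself still gives a \weak after the same $\eps\mapsto\eps/4$ rescaling.
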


\begin{proof}

    We assume that the exact algorithm for the \ssum
    works also for multisets. We will address this issue in more detail in Section~\ref{sec:multisets}.

    Let $Z=\{v_1,\ldots,v_n\}$ and $t$ constitute a \ssum instance. Let $I$ be the set of indices
    of elements of some optimal solution, and let OPT be their sum. Let us also introduce a scaled
    approximation parameter $\eps' = \frac{\eps}{4}$.  

    Let $k=\frac{2 \eps' t}{n}$. Define a rounded instance as follows: the (multi)-set of $\tilde{Z}$ contains 
    a copy of $\tilde{v}_i = \floor{\frac{v_i}{k}}$ for each $i \in
    \{1,\ldots,n\}$, and $\tilde{t}=\floor{\frac{t}{k}}$.
    
    Apply the exact algorithm $\mathcal{A}$ to the rounded instance $(\tilde{Z},\tilde{t})$. Let $I'$ be the set of indices of elements of
    the solution found.

    We claim that $\{v_i : i \in I'\}$  is a weak $(1-\eps)$ approximation for
    $Z$ and $t$. First let us show that 
    this solution is not much worse than OPT:

    \begin{displaymath}
         \sum_{i \in I'} v_i \ge k\sum_{i \in I'} \tilde{v}_i \ge k\sum_{i \in I} \tilde{v}_i = k\sum_{i \in I} \floor{\frac{v_i}{k}} 
        \ge \sum_{i \in I} (v_i - k) 
            \ge \text{OPT} - nk = \text{OPT} - 2\eps' t \ge \text{OPT}(1-\eps).
    \end{displaymath}

     The last inequality holds because we can assume $\text{OPT} \ge t/2$ (see
     Section~\ref{items-partition} for details).

     Similarly, we can show that this solution does not violate the hard constraint by too much:
     \begin{displaymath}
         \sum_{i \in I'} v_i \le \sum_{i \in I'} (k\tilde{v}_i+k) \le nk+k\sum_{i \in I'} \tilde{v}_i \le nk+\tilde{t}k 
         \le nk+k+t \le 3\eps' t + t \le t(1+\eps).
     \end{displaymath}

    Finally, since the exact algorithm is applied to a (multi)-set of $n$ items with $\tilde{t}=\floor{\frac{t}{k}} = \floor{\frac{n}{2\eps'}}$,
    the resulting algorithm runs in the claimed time.
\end{proof}

We state the above proof only to give the flavour of the basic form of 
reductions in this paper. Usually reductions that we will consider are more complex for technical reasons. 
One thing to note in particular is that the relation between $k$ and $\eps$ is dictated by the fact, that there may be as many as $n$
items in the optimal solution. Given some control over the solution size, one
can improve this reasoning (see Lemma~\ref{exact-approx}).

% \begin{itemize}
%     \item The above requires $\mathcal{A}$ to work for multisets, this is not always the case.
%     \item We are going to compute not just the approximation of the optimal solution, but instead an approximation
%         of the whole $\mathcal{S}(Z,t)$.
%     \item The relationship between $k$ and $\eps$ is dictated by the assumption, that there may be as many as $n$
%         items in the optimal solution. Given some control over the solution size, one can improve this relationship.
% \end{itemize}

\subsection{Approximation via Pseudo-polynomial time \ssum algorithm}
% Powiedzenie jakie algorytmy otrzymamy w trywialny sposób

Currently, the fastest pseudo-polynomial algorithm for \ssum runs 
in time $\Ot(n+t)$, randomized. $\subsums{Z}{t}$ denotes the set of all
possible subsums of set $Z$ up to integer $t$ (see Section~\ref{prelim}).

\begin{theorem}[\citet{bringmann-soda}]
    \label{thm:bringmann}
    There is a randomized, one-sided error algorithm with running time
    $\Oh(n + t\log{t}\log^3{\frac{n}{\delta}}\log{n})$, that returns a set $Z'
    \subseteq \subsums{Z}{t}$, containing each element from
    $\subsums{Z}{t}$ with probability at least $1-\delta$.
\end{theorem}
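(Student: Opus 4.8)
The plan is to combine Karp--Luby-style \emph{color coding} with FFT-based Boolean convolution (truncated sumset computation), organised around a dyadic partition of the items by magnitude. Write $A \oplus_t B = \{\, a+b : a\in A,\ b\in B,\ a+b\le t\,\}$ for the sumset truncated at $t$; for $A,B\subseteq\{0,\dots,t\}$ this is a single Boolean convolution of length-$t$ vectors, computable in $\Ot(t)$ time. Any algorithm assembled purely from such operations, starting from the singletons $\{0,v_i\}$, outputs a subset of $\subsums{Z}{t}$; hence the one-sided guarantee (no false positives) comes for free, and the whole task reduces to showing that a fixed target $s\in\subsums{Z}{t}$ survives with probability at least $1-\delta$.

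\emph{Color coding for few-element witnesses.} First I would establish the auxiliary claim: for a parameter $k$, colouring the items with $c=\Theta(k^2)$ random colours, grouping them into $Z_1,\dots,Z_c$, and computing $\bigoplus_{j\le c}(Z_j\cup\{0\})$ truncated at $t$ via a balanced binary tree of $\oplus_t$-operations yields, for each fixed $s$ witnessed by a subset of size at most $k$, that $s$ is produced whenever that witness is \emph{rainbow} (its elements receive distinct colours), which by a birthday bound happens with probability at least $\tfrac12$; repeating $\Theta(\log\tfrac1\delta)$ times and taking the union amplifies this to $1-\delta$. Carried out bluntly this costs $\Ot(tk^2\log\tfrac1\delta)$, since the combination tree performs $O(c)$ truncated convolutions of length up to $t$.

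\emph{Dyadic layering.} Discard items of value $0$ or exceeding $t$, and split the rest by magnitude into layers $\ell=1,\dots,\lceil\log t\rceil$, where layer $\ell$ holds items with value in $(t/2^{\ell},\,t/2^{\ell-1}]$. A subset of layer $\ell$ with sum at most $t$ uses fewer than $2^\ell$ items, so one may run the procedure above on layer $\ell$ with $k=2^\ell$ and error $\delta/\lceil\log t\rceil$, obtaining $S_\ell\subseteq\subsums{Z_\ell}{t}$, and then combine $S_1\oplus_t\cdots\oplus_t S_{\lceil\log t\rceil}$ by another balanced tree. Any $s\in\subsums{Z}{t}$ splits as $\sum_\ell s_\ell$ with each $s_\ell$ witnessed by fewer than $2^\ell$ items of layer $\ell$; a union bound over the $\le\log t$ layers keeps all $s_\ell$, hence $s$, with probability $\ge1-\delta$.

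\emph{The obstacle: a near-linear running time.} Used naively the scheme is hopeless — layer $\ell$ with $k=2^\ell$ costs $\Ot(t\,4^\ell)$, i.e.\ $\Ot(t^3)$ when $\ell\approx\log t$ — so the crux is a more careful analysis, and I expect two refinements to be needed. (i) Never run color coding against the cap $t$: subdivide a layer into \emph{buckets} small enough that the full subset-sum set of a bucket is built directly, and observe that at height $h$ of a bucket's (or a layer's) combination tree the attainable sums cannot exceed $2^h V$, where $V$ is the largest item value in play, so the convolution there has length only $\min(t,\,2^h V)$ rather than $t$; summing these costs geometrically over heights, then over buckets, then over layers, should telescope to something near $\Ot(n+t)$. (ii) Layers of small items (large $\ell$) may contain very many elements, where color coding with $k=2^\ell$ is still too costly; there one instead exploits that the subset sums of many tiny items up to $t$ are \emph{dense} (close to a full interval), so $S_\ell$ admits a succinct description and is combined cheaply. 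Making (i) and (ii) fit together is the heart of the proof; the three nested logarithms together with the $\log\tfrac1\delta$ from amplification are exactly the $\log t\cdot\log^3\tfrac{n}{\delta}\cdot\log n$ overhead in the statement, and the additive $\Oh(n)$ absorbs the initial partitioning and bucketing.
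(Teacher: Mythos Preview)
This theorem is not proved in the paper at all: it is simply quoted from \citet{bringmann-soda} as a black-box tool (see the citation in the theorem header and the surrounding discussion in Section~\ref{sec:techniques}), and the paper only uses it via Corollary~\ref{cor:bringmann-all}. So there is no ``paper's own proof'' to compare against.

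As for your sketch on its own merits: the high-level architecture you describe --- color coding to isolate a witnessing subset, dyadic layering by item magnitude so that layer $\ell$ has witnesses of size $<2^\ell$, and combining layers by truncated sumsets --- is indeed the skeleton of Bringmann's algorithm. But you have correctly identified, and then explicitly not resolved, the actual technical content: your step (i)/(ii) is exactly where all the work lies. In Bringmann's paper the fix is not a density argument for small items as you conjecture in (ii); rather, within each layer one further randomly partitions into many small groups, computes the \emph{full} subset-sum set of each group (cheap because the group is tiny and the items are bounded), and then combines groups by a two-level scheme that keeps convolution lengths geometrically bounded. Your refinement (i) is in the right spirit but does not by itself give $\Ot(n+t)$ without this additional random-grouping layer. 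So as written your proposal is a plausible outline that names the obstacle but does not overcome it; the phrase ``should telescope to something near $\Ot(n+t)$'' is precisely the step that requires the missing idea.
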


This suffices to solve \ssum exactly with high probability.  Here $\subsums{Z}{t}$ is represented by a binary
array which for a given index $i$ tells whether there is a subset that sums up
to $i$ (see Section~\ref{prelim} for a formal definition).
For our trade-off, we actually need a probabilistic guarantee on all elements of $\subsums{Z}{t}$ simultaneously.
Fortunately, this kind of bound holds for this algorithm as well
(see~\cite[Appendix B.3.2]{icalp2017} for detailed analysis).

\begin{corollary}\label{cor:bringmann-all}
    There is a randomized $\Ot(n+t)$ algorithm that computes $\subsums{Z}{t}$ with a constant probability of success.
\end{corollary}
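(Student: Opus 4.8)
The plan is to obtain this as a direct consequence of Bringmann's algorithm (Theorem~\ref{thm:bringmann}) via a union bound, the only point requiring care being that the per-element error parameter $\delta$ can be pushed down to $\Theta(1/t)$ without leaving the $\Ot(n+t)$ regime. The starting observation is that every element of $\subsums{Z}{t}$ is an integer in $\{0,1,\dots,t\}$, so $|\subsums{Z}{t}|\le t+1$, irrespective of the input; crucially, this set is fixed and does not depend on the algorithm's internal randomness, so it is a legitimate object to union-bound over.

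Concretely, I would run the algorithm of Theorem~\ref{thm:bringmann} with error parameter $\delta:=\tfrac{1}{3(t+1)}$ and output the resulting set $Z'$. By the one-sided guarantee we always have $Z'\subseteq\subsums{Z}{t}$, so there are no false positives and the only failure mode is a missing sum. For each fixed $s\in\subsums{Z}{t}$ the theorem gives $\Pr[s\notin Z']\le\delta$, and a union bound over the at most $t+1$ such $s$ yields $\Pr[Z'\neq\subsums{Z}{t}]\le(t+1)\delta=\tfrac13$, i.e.\ success with constant probability at least $2/3$. For the running time, substituting $\delta=\tfrac{1}{3(t+1)}$ into the bound of Theorem~\ref{thm:bringmann} gives
\[
  \Oh\!\left(n + t\log t\,\log^3\tfrac{n}{\delta}\,\log n\right)
  = \Oh\!\left(n + t\log t\,\log^3(3n(t+1))\,\log n\right)
  = \Ot(n+t),
\]
since $\log(3n(t+1))=\Oh(\log(n+t))$, so shrinking $\delta$ costs only polylogarithmic factors, which are absorbed by $\Ot$.

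I do not expect a genuine obstacle here: the statement is essentially a restatement of Theorem~\ref{thm:bringmann} with the parameters fixed. The one thing to double-check is that Bringmann's per-element guarantee really does license a union bound over all of $\subsums{Z}{t}$ simultaneously, and that doing so does not inflate the running time — which it does not, precisely because $|\subsums{Z}{t}|\le t+1$ and $\log(1/\delta)$ enters the time bound only inside a polylogarithm. This verification, namely that Bringmann's algorithm recovers all of $\subsums{Z}{t}$ at once with constant probability, is carried out in \cite[Appendix B.3.2]{icalp2017}; alternatively one could run a constant-error version $\Theta(\log t)$ independent times and take the union of the outputs, which again runs in $\Ot(n+t)$ time and then succeeds with high probability.
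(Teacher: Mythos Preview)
Your proposal is correct and aligns with the paper's treatment: the paper does not give a standalone proof of this corollary but simply observes that a simultaneous guarantee on all of $\subsums{Z}{t}$ follows from Bringmann's algorithm and defers to \cite[Appendix~B.3.2]{icalp2017} for the details, which is exactly the reference you invoke. Your explicit union-bound argument with $\delta=\Theta(1/t)$ is the natural way to flesh this out and is sound; the paper just leaves it implicit.
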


% \bringmann together with the rounding technique gives a randomized 
% $\Ot(n/\eps)$-time \weak for \ssum. But recall that~\citet{kellerer-subsetsum} already gave a
% deterministic FPTAS for \ssum running in $\Ot(\min\{n/\eps, n+1/\eps^2\})$ time.
% \bringmann with rounding has one crucial advantage over \kellerer that
% we need. Namely, it works much better when we assume that all items are
% ``large'' (see Section~\ref{approach}).

The first case where this routine comes in useful occurs when all items are in
the range $[\gamma t,t]$ (think of $\gamma$ as a trade-off
parameter set to $\eps^{-2/3}$). Note, that any solution summing to at most $t$ can
consist of at most $1/\gamma$ such elements. This observation allows us to
round the elements with lower precision and still maintain a good approximation ratio, as follows:

\begin{align*}
    v'_i  =  \floor{\frac{2v_i}{\gamma \eps t}},
    \;\;\;\;\;\;\;\;
    t' =  \floor{\frac{2t}{\gamma \eps t}}  = \floor{\frac{2}{\gamma \eps}}
    .
\end{align*}

\bringmann on the rounded instance runs in time $\Ot(n + t') = \Ot(n +
\frac{1}{\gamma \eps})$ and returns an array of solutions with an additive error
$\pm \eps t$ with high probability (see Lemma~\ref{lem:large-algorithm}). 
Similar reasoning about sparseness also applies if the number of items is 
bounded (i.e., when $n = \Ot(\frac{\gamma}{\eps})$). In that case \bringmann
runs in time $\Ot(\frac{\gamma}{\eps^2})$ and provides the same guarantees (see Lemma~\ref{small-sparse} and also the next section).

\subsection{Approximation via Dense \ssum}
\label{galil-box}

Now we need a tool to efficiently solve the instances where all items are in
range $[0,\gamma t)$, so-called dense instances. More
formally, an instance consisting of $m$ items is dense if all items are in the range
$[1,m^{\Oh(1)}]$. Intuitively, rounding does not work well for these instances
since it introduces large rounding errors.
On the other hand, if an instance contains many distinct numbers on a small interval, one can exploit its additive structure.

\begin{theorem}[\citet{galil-icalp}]
    \label{galil-algorithm}
    Let $Z$ be a set of $m$ distinct numbers in the interval $(0,\ell]$ such that

    \begin{displaymath}
        m > 1000\cdot\sqrt{\ell}\log{\ell},
    \end{displaymath}
    and let $L := \frac{100\cdot\Sigma(Z) \ell \log{\ell}}{m^2}$.

    Then in $\Oh(m + ((\ell/m)\log{\ell})^2)$ preprocessing time we can build a
    structure that can answer the following queries in constant time. In a query the structure receives a
    target number $t \in (L, \Sigma(Z)-L)$ and decides whether there is a $Z' \subseteq Z$
    such that $\Sigma(Z') = t$. The structure is deterministic.
\end{theorem}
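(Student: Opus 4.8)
This is the theorem of Galil and Margalit~\cite{galil-icalp}, so the plan is to reconstruct their argument. The combinatorial heart is a \emph{structure theorem for the subset sums of a dense set}, which I would state as follows: if $Z=\{a_1<\cdots<a_m\}\subseteq(0,\ell]$ is as dense as assumed and $g:=\gcd(Z)$, then every multiple of $g$ inside the interval $[\lambda,\,\Sigma(Z)-\lambda]$ is a subset sum, where $\lambda=\Oh\!\big(\Sigma(Z)\,\ell\log\ell / m^2\big)$ (and, trivially, no non-multiple of $g$ is). Granting this and absorbing the hidden constant into the $100$ in the definition of $L$ so that $\lambda\le L$, the data structure is essentially immediate: store $g$ (computed in $\Oh(m\log\ell)$ time) together with the two thresholds; a query with target $t\in(L,\Sigma(Z)-L)$ is then answered affirmatively iff $g\mid t$, a single $\Oh(1)$-time divisibility test, and nothing uses randomness. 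The only extra work is the $\Oh\!\big(((\ell/m)\log\ell)^2\big)$ term in the preprocessing, addressed at the end.

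To prove the structure theorem I would control, for each cardinality $k$, the set $\mathcal S_k$ of sums of exactly $k$ distinct elements of $Z$, and show it contains a run of consecutive elements of its residue class modulo $g$ of length $\Omega\!\big(k(m-k)/\polylog(\ell)\big)$, roughly centred at $k\cdot\Sigma(Z)/m$. The mechanism is a swapping argument: start from the $k$ smallest elements and repeatedly replace a chosen element by the next larger unchosen one, which changes the sum by one ``gap'' $a_{i+1}-a_i$; since the gaps are nonnegative integers summing to at most $\ell$, at most $\ell/s$ of them exceed any $s$, so large jumps are rare, and the density hypothesis $m>1000\sqrt\ell\log\ell$ guarantees that enough small elements are always left over to ``fine tune'' around the few large jumps so that every value of the claimed run is hit. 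This swapping/fine-tuning argument, together with the accounting needed to pin down the constant and the single logarithmic factor, is the technical core and the step I expect to be the main obstacle. Once the runs are available one stitches them across $k$: the centres of $\mathcal S_k$ and $\mathcal S_{k+1}$ differ by about $\Sigma(Z)/m\le\ell$ while each run is much longer than that as soon as $k$ and $m-k$ exceed $\Theta\!\big((\ell/m)\polylog(\ell)\big)$, so over that range the runs merge into a single arithmetic progression with common difference $g$, and its endpoints lie within $\Oh\big(\sum_{i\le k_0}a_i\big)=\Oh(k_0\ell)=\Oh\big(\Sigma(Z)\ell\log\ell/m^2\big)$ of $0$ and of $\Sigma(Z)$, i.e.\ within $\lambda$.

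For the preprocessing I would first divide through by $g$ to reduce to the case $g=1$ (this only shrinks $\ell$ and the thresholds). The structure theorem above is established only down to the cardinality threshold $k_0=\Theta((\ell/m)\log\ell)$; to nail down what happens in the few transition windows just inside each end of the interval, and thus confirm that the guaranteed range really reaches down to $L$, I would precompute, by a bounded exact subset-sum computation of cost $\Oh\big(((\ell/m)\log\ell)^2\big)$ on a \emph{regularised} sub-instance formed from $\Oh((\ell/m)\log\ell)$ representative elements, exactly which small sums are attainable. A query is then resolved in constant time by a lookup in this table or, in the bulk, by the divisibility test above, and the whole construction runs in $\Oh\big(m+((\ell/m)\log\ell)^2\big)$ time, as claimed; determinism is immediate since no step is randomised.
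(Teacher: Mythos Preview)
The paper does not prove this theorem at all: it is quoted as a black-box result of Galil and Margalit (note the citation in the theorem header), and in Section~5.3 the authors restate the full version from~\cite{galil} again without proof. So there is no ``paper's own proof'' to compare against.

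That said, your reconstruction is a faithful outline of the Galil--Margalit argument. The structure theorem you state (every multiple of $g=\gcd(Z)$ in a long central interval $[\lambda,\Sigma(Z)-\lambda]$ is a subset sum), together with the swapping/fine-tuning analysis of the fixed-cardinality sumsets $\mathcal S_k$ and the stitching across $k$, is exactly their mechanism, and the query procedure ``answer yes iff $g\mid t$'' is the right consequence. One minor inaccuracy: the $\Oh\big(((\ell/m)\log\ell)^2\big)$ preprocessing term in Galil--Margalit does not come from an auxiliary exact subset-sum table for the boundary region as you suggest; it arises from the arithmetic work needed to establish the structure theorem constructively (computing the relevant gcds and the parameters of the covering arithmetic progressions so that witnesses can later be produced). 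For the decision-only statement quoted here this distinction is harmless, but your explanation of that cost is not quite what happens in~\cite{galil,galil-icalp}.
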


In fact we will use a more involved theorem that can also construct a
solution in $\Oh(\log(l))$ time but we omit it here to keep this section
relatively free of technicalities (see Section~\ref{smallitems} for a
discussion regarding these issues).

Observe that $L = \Ot(\ell^{1.5})$ (because $\Sigma(Z) < m\ell$) and
the running time is bounded by $\Ot(m+\ell)$ (because
$\ell / m = \Oh(\sqrt{\ell})$).
We will apply this result for the case $\ell = \gamma t$ (see Lemma~    \ref{lem:small-algorithm}).
Recall,
that \bringmann runs in time $\Ot(m+t)$, which would be slower by the factor $\gamma$ (the trade-off parameter).
For simplicity, within this overview we will assume, that Theorem~\ref{galil-algorithm}
 provides a data structure that can answer queries with the
target numbers in $[0, \Sigma(Z)]$. In the actual proof, we need to overcome this obstacle, by
merging this data structure with other structures, responsible for targets near the boundary, which we call
\emph{marginal} targets
(see Lemma~\ref{small-range-bringmann}).

Suppose our instance consists of $m$ elements in the range $[0,\gamma t]$. We use the 
straightforward rounding scheme, as in the proof of Fact~\ref{thm:weak-example}.

\begin{align*}
    v'_i =  \floor{\frac{2m v_i}{\eps t}},
    \;\;\;\;\;\;\;\;
    t' =  \floor{\frac{2 m t}{\eps t}} =  \floor{\frac{2m}{\eps}}.
\end{align*}

We chose $\gamma t$ as the upper bound on item size, so that $\ell' = m\gamma / \eps$ is an
upper bound on $v'_i$. Now, if the number of items satisfies the inequality
$\ell' < m^2$, then we can use the~Theorem~\ref{galil-algorithm} with running
time $\Ot(m+\ell') = \Ot(m + m\gamma / \eps)$. This provides a data
structure that can answer queries from the range that is of our interest (for
a careful proof see Section~\ref{fptas-ssum}).

Still, it can happen that most of the items are in the sparse instance
(i.e., $\ell'\ge m^2$) and we cannot use the approach from~\cite{galil-icalp}. In that case we
use Theorem~\ref{thm:bringmann}
 again, with running time $\Ot(m + \frac{\gamma}{\eps^2})$ (see Lemma~\ref{small-sparse}).

In the end, we are able to compute an array of solutions, for items in
range $[0, \gamma t]$ in time $\Ot(m + \frac{\gamma}{\eps^2} +
\frac{m\gamma^2}{\eps^2})$ with additive error $\pm \eps t$ and high probability (see Lemma~\ref{small-algorithm}).
The last term in time complexity comes from handling the marginal queries.

\subsection{A Framework for Efficient Approximation}
\label{approach}

In this section we will sketch the components of our mechanism (see
Algorithm~\ref{alg:roadmap}). The mechanism combines pseudo-polynomial
\bringmann with \galil for dense instances of \ssum. 
% Skasowalem to, bo to nie wiadomo co znaczy, a poza tym chyba sie podkladamy w ten sposob
%The main
%difficulty lies in preprocessing the instances in such a way that we can
%preserve approximation factors.

\begin{algorithm}
    \caption{Roadmap for the \weak for \ssum. Input: item set $Z, t, \eps$}
	\label{alg:roadmap}
\begin{algorithmic}[1]
	\State ensure $OPT \ge t/2$ % \Comment{see Lemma \ref{lem:large-opt}}
	\State reduce $|Z|$ to $\Otilde({1}{/\eps})$ % \Comment{see Lemma \ref{lem:reduce-items}}
	\Repeat
		\State partition items into $\zzlarge$ and $\zzsmall$
		\State divide $[0, \gamma t]$ into $\ell = \Oh(\gamma\log (n) / \eps)\cdot |\zzsmall|$ segments
        \State round down small items
		\State remove item repetitions in $\zzsmall$ % \Comment{this modifies the item set; see Lemma \ref{lem:multi-items}}
	\Until {$\ell = \Oh(\gamma\log (n) / \eps)\cdot |\zzsmall|$} % \Comment{there can be at most $\log n$ rounds; see Lemma \ref{lem:partition-small-large}}
	    \State build a data structure for large items % in time $\Otilde(1/(\gamma\eps))$ \Comment{see Lemma \ref{lem:large-algorithm}}
	\If {$|\zzsmall| = \Otilde(\sqrt{\ell})$}
         \State build a data structure for small items % in time $\Otilde(\gamma/\eps^2)$ \Comment{see Lemma \ref{small-sparse}}
      \Else
      	\State build data structures for marginals % of size $L = \Otilde(\gamma^2 / \eps^3)$ \Comment{see Lemma \ref{small-range-bringmann}}
      	\State exploit the density of the instance to cover the remaining case % \Comment{see Lemma \ref{lem:small-algorithm}} 
      \EndIf
    \State merge the data structures for large and small items % \Comment{see Lemma \ref{FFT}} 
\end{algorithmic}
\end{algorithm}

We begin by reducing the number of items in the instance $Z$ to roughly
$\Ot(1/\eps)$ items to get a near linear running time (see
Lemma~\ref{lem:reduce-items}). After that our goal is to divide items into
\emph{small} and \emph{large} and process each part separately, as described earlier. 

However, Theorem~\ref{galil-algorithm} requires a lower bound on the number of \textbf{distinct} items. 
To control this parameter, we  merge identical items into
larger ones, until each item appears at most twice. %\footnote{For a formal definition of the $(x,2)$-distinctness and $\eps$-closeness see Section~\ref{prelim}}
However, this changes the number of items, and so the procedure might have to be restarted.
Lemma~\ref{lem:partition-small-large} guarantees that
we require at most
$\log n$ such refinement steps.

In the next phase we decide which method to use to solve the instance depending
on its density (line 10). We encapsulate these methods into data
structures (lines 11-14).
%We need to handle the case when $\ell > m^2$ and propose a method that handles the queries within $(0,L) \cup (\Sigma(Z) - L, \Sigma(Z))$.
Finally we will need to merge the solutions. For this
task we introduce the concept of membership oracles (see
Definition~\ref{mem-oracle}) that are based on FFT and backtracking to retrieve
solutions (see Lemma~\ref{FFT}).
The simplified trade-off schema is presented on the Figure~\ref{fig:overall}.

% \documentclass{article} 
% \usepackage{tikz}
% \usetikzlibrary{arrows}
% \usetikzlibrary{calc}
% \usetikzlibrary{decorations.pathreplacing,angles,quotes}
% \begin{document}
% \newcommand{\Otilde}{\widetilde{\mathcal{O}}}
% \newcommand{\Ot}{\Otilde}

\begin{figure}[ht!]
    \centering
	% Przesuniecie w lewo
	% \hspace*{-1.5cm}  
	\begin{tikzpicture}[xscale=1, node distance=1em and 1em]
       \def\widthrectangle{12}
       \def\heightrectangle{2}
       \def\tradeoffK{5}

       \node (end_rectangle) at (\widthrectangle,\heightrectangle) {};
       \node (k_up)   at (\tradeoffK,\heightrectangle) {};
       \node (k_down) at (\tradeoffK,0) {};

       \draw (0.5*\tradeoffK,0.75*\heightrectangle)    node () {Dense Instance};
       \draw (0.5*\tradeoffK + 0.5*\widthrectangle,0.75*\heightrectangle) node () {Sparse Instance};
       \node[above of=end_rectangle, distance=1em] {$t$};
       \node[above of=k_up] {$\gamma t$};

       \draw (0,0) rectangle (end_rectangle);
       \draw (k_up) -- (k_down);

       \foreach \x in {0,...,10}
         \draw (\tradeoffK + \x*7/10,0) -- (\tradeoffK + \x*7/10, \heightrectangle*0.3);
       
       \foreach \x in {0,...,15}
         \draw (\x*5/15,0) -- (\x*5/15, \heightrectangle*0.3);
       \draw[decoration={brace, raise=5pt},decorate] (4+1/3,0.6) -- node[above=5pt] {$\frac{\epsilon t}{m}$} (4+2/3,0.6);
       \draw[decoration={brace, raise=5pt},decorate] (\tradeoffK+7/10,0.6) -- node[above=5pt] {${\epsilon\gamma t}$} (\tradeoffK+14/10,0.6);

		\node (galil) at (1, -1.25) {$\Ot(\frac{\gamma}{\eps^2} +
\frac{m\gamma^2}{\eps^2})$ };
		\node (kellerer) at (5, -1.25) {$\Ot(m + \frac{\gamma}{\eps^2})$};
       \node (condition1) at (1,-0.25) {If $m^2 > \ell$};
       \node (condition2) at (4,-0.25) {$m^2 \le \ell$};

       \draw (\tradeoffK*0.5, -0.25) -- (\tradeoffK*0.5,-0.5);
       \draw[->, to path={-| (\tikztotarget)}] (\tradeoffK*0.5-1,-0.5) to (galil);
       \draw[->, to path={-| (\tikztotarget)}] (\tradeoffK*0.5-1,-0.5) to (kellerer);
            
       \def\xSparse{7}
		\node (bringmann) at (9.5, -1.25) {$\Ot(n+ \frac{1}{\gamma\epsilon})$};
       % \node (bringmann) at (7, -1) {Bringmann $\Ot(n+t)$};
       % \node (koiliaris) at (12, -1) {Koiliaris and Xu $\Ot(\sqrt{n}t)$};
       % \node (condition1) at (8.3,-0.25) {Randomized};
       % \node (condition2) at (10.7,-0.25) {Deterministic};

       \draw (\tradeoffK*0.5+\xSparse, -0.25) -- (\tradeoffK*0.5 + \xSparse,-0.5);
       \draw[->, to path={-| (\tikztotarget)}] (\tradeoffK*0.5+\xSparse,-0.5) to (bringmann);
       % \draw[->, to path={-| (\tikztotarget)}] (\tradeoffK*0.5+\xSparse,-0.5) to (koiliaris);

   \end{tikzpicture}
	\caption{Overall schema of trade-off and usage of building blocks. The parameter $m$ denotes number of items in the dense instance, $n$ is the number of all elements, $\gamma$ is the trade-off parameter, $\ell$ is the upper bound on the item size after rounding, $t$ is the target sum. The buckets in the sparse/dense instance depict the rounding scheme for small and large items.}
   \label{fig:overall}
\end{figure}
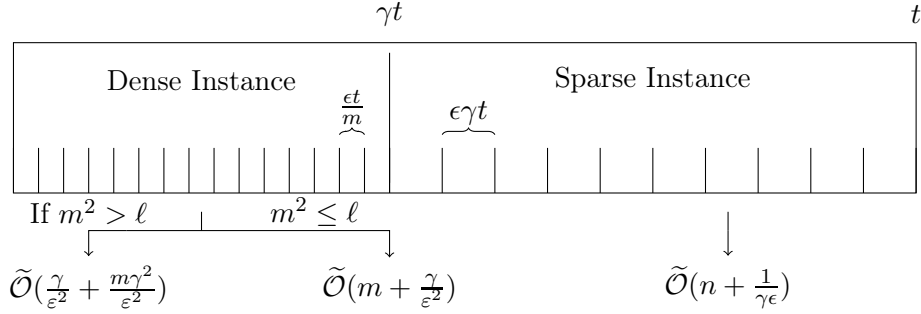

% \end{document}

The final running time of our framework
is $\Ot(n + \frac{1}{\gamma\eps} + \frac{\gamma}{\eps^2} +
\frac{\gamma^2}{\eps^3} )$ with high probability for any $\gamma(n,\eps) > 0$
(see Lemma~\ref{weak-tradeoff}). For $\gamma = \eps^{-2/3}$, this  gives us
an $\Ot(n + \eps^{-5/3})$ time \weak approximation for \ssum.

\section{Preliminaries}
\label{prelim}

For a finite multiset $Z \subset \nat$ we denote its size as $|Z|$, the number
of distinct elements as $||Z||$, and the sum of its elements as $\Sigma(Z)$.

For a number $x$ we define $\pow(x)$ as the largest power of 2 not exceeding $x$.
If $x < 2$ we set $\pow(x) = 1$.

For sets $A,B \subset \nat$ their bounded algebraic sum $A \oplus_t B$ is a set
$\{a+b : a \in \{0\} \cup A,\, b \in \{0\} \cup B\} \cap [0,t]$.

\begin{definition}[Subsums]

For a finite multiset $Z \subset \nat$ we define $\subsums{Z}_k$ as a set of all
possible subset sums of $Z$ of size at most $k$, i.e., $x \in \subsums{Z}_k$ iff 
there exists $S' \subseteq Z$, such that $\Sigma(S') = x$ and $|S'| \le k$.
$\subsums{Z}$ is the set without the constraint on the size of the subsets,
i.e., $\subsums{Z}:=\subsums{Z}_\infty$.
The capped version is defined as $\subsums{Z}{t}_k := \subsums{Z}_k \cap [0,t]$
and $\subsums{Z}{t} := \subsums{Z} \cap [0,t]$.

We call two multisets $Z_1, Z_2 \subset \nat$ equivalent if
$\subsums{Z_1} = \subsums{Z_2}$.
\end{definition}

Note that $0 \in \subsums{Z}{t}_k$ for all sets $Z$ and $t,k>0$.

\begin{definition}[$(\eps,t)$-closeness]
We say that set $B$ is $(\eps,t)$-close to $A$ if there is a surjection $\phi:A
\rightarrow B$ such that $x -\eps t \le \phi(x) \le x + \eps t$.
A \ssum instance $(Z_2, t)$ is $\eps$-close to $(Z_1, t)$ if
$\subsums{Z_2}{t}$
is $(\eps,t)$-close to $\subsums{Z_1,t}$.
\end{definition}

Sometimes, when there is no other notation on $t$, we will use the notion of
$\eps$-closeness as a $(\eps,t)$-close.

Usually the surjection from the definitions will come by rounding down the item sizes
and each item set will get a moderately smaller total size.
We will also apply the notion of $(\eps,t)$-closeness to binary arrays having in mind the sets they represent.

\begin{fact}
If $A$ is $(\eps,t)$-close to $\subsums{Z_1}{t}$ and $B$ is $(\eps,t)$-close
to $\subsums{Z_2}{t}$
then $A \oplus_t B$ is $(2\eps,t)$-close to $\subsums{Z_1 \cup Z_2}{t}$
\end{fact}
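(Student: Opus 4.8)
The plan is to assemble, out of the two surjections $\phi_1:\subsums{Z_1}{t}\to A$ and $\phi_2:\subsums{Z_2}{t}\to B$ granted by the hypotheses, a single surjection $\psi:\subsums{Z_1\cup Z_2}{t}\to A\oplus_t B$ with $|x-\psi(x)|\le 2\eps t$ for all $x$; informally $\psi$ should send a subset sum of $Z_1\cup Z_2$ to the sum of the $\phi_i$-images of its $Z_1$-part and its $Z_2$-part. Two things then need checking: that $\psi$ is well defined with the claimed error, and that it is onto $A\oplus_t B$.

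For the forward direction I would fix, once and for all, for each $x\in\subsums{Z_1\cup Z_2}{t}$ a witnessing subset $S$ and a split $S=S_1\cup S_2$ with $S_1\subseteq Z_1$, $S_2\subseteq Z_2$; writing $x_i=\Sigma(S_i)$ we get $x=x_1+x_2$, and $x\le t$ forces $x_i\le t$, so $x_i\in\subsums{Z_i}{t}$. Set $\psi(x):=\phi_1(x_1)+\phi_2(x_2)$. Everywhere this Fact is invoked the surjections $\phi_i$ arise from rounding item sizes \emph{down} (see the remark immediately preceding the statement), so $\phi_i(x_i)\le x_i$ and therefore $\psi(x)\le x_1+x_2=x\le t$, which puts $\psi(x)$ into $A\oplus_t B$; the two closeness inequalities then give $x-2\eps t\le\psi(x)\le x$, slightly stronger than what is asked.

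For surjectivity, take any $c\in A\oplus_t B$, written $c=a+b$ with $a\in\{0\}\cup A$, $b\in\{0\}\cup B$ and $c\le t$; since $\phi_1$ is onto $A$ (and $0-\eps t\le\phi_1(0)\le 0$ together with $\phi_1(0)\in\nat$ forces $\phi_1(0)=0$) there is a preimage $x_1$ of $a$ and a preimage $x_2$ of $b$, the union of the two witnessing subsets certifies $x_1+x_2\in\subsums{Z_1\cup Z_2}$, and declaring $\psi(x_1+x_2)=c$ makes $c$ hit. The delicate point -- and the step I expect to be the real obstacle -- is the truncation at $t$: from $x_1\le a+\eps t$ and $x_2\le b+\eps t$ one only obtains $x_1+x_2\le c+2\eps t$, so $x_1+x_2$ may overshoot $t$, hence need not lie in $\subsums{Z_1\cup Z_2}{t}$, and even if it does the split fixed for it in the forward step need not be the one used here. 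I would handle this by always choosing the smallest available preimages and by exploiting the ``moderately smaller total size'' guarantee of the preceding remark, which bounds the \emph{total} rounding loss of each $Z_i$ and hence controls exactly the boundary regime where $c$ is close to $t$; on any residual boundary targets one can instead simply set $\psi(x)$ to be the largest element of $A\oplus_t B$ within $2\eps t$ of $x$, which exists by the forward construction. Finally extend $\psi$ to the remaining elements of $\subsums{Z_1\cup Z_2}{t}$ arbitrarily, respecting the $2\eps t$ bound (possible, once more, by the forward direction). I would also note in passing that this truncation headache vanishes if closeness is instead recorded by a surjection the other way, $A\oplus_t B\to\subsums{Z_1\cup Z_2}{t}$, since $|A\oplus_t B|$ may strictly exceed $|\subsums{Z_1\cup Z_2}{t}|$.
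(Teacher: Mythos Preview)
The paper states this Fact without proof, so there is no argument to compare yours against. Your forward construction --- split $x=x_1+x_2$, map to $\phi_1(x_1)+\phi_2(x_2)$ --- is the natural one, and under the rounding-down convention the paper flags just before the statement it cleanly lands in $A\oplus_t B$ with error at most $2\eps t$.

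The genuine gap is in the surjectivity half, and your own closing sentence already names it: $|A\oplus_t B|$ can strictly exceed $|\subsums{Z_1\cup Z_2}{t}|$, in which case \emph{no} map $\subsums{Z_1\cup Z_2}{t}\to A\oplus_t B$ is onto, and none of the patches you propose (smallest preimages, the ``moderately smaller'' remark, redefining $\psi$ on boundary targets) can change that. A concrete witness, compatible with rounding down: take $Z_1=\{t\}$, $Z_2=\{\eps t\}$, $A=\{0,\,t-\eps t\}$, $B=\{0,\,\eps t/2\}$; then $\subsums{Z_1\cup Z_2}{t}=\{0,\eps t,t\}$ has three elements while $A\oplus_t B=\{0,\,\eps t/2,\,t-\eps t,\,t-\eps t/2\}$ has four. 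So the Fact, read literally against the paper's surjection-based definition of $(\eps,t)$-closeness, does not hold in general. The paper's actual merging step (Lemma~\ref{FFT}) sidesteps this by working with the weaker $(\eps,t)$-membership-oracle notion, whose two-sided slack is exactly what absorbs the truncation error you are wrestling with; either that relaxation, or reversing the direction of the surjection as you suggest, makes the intended statement correct and your argument complete.
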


We will also need to say, that there are no close elements in a set. It will
come in useful to show, that after rounding down all the elements are
distinct.
\begin{definition}[$(x)$-distinctness]
    The set $S$ is said to be $(x)$-distinct if every interval of length $x$ contains at most one item from $S$. 
     The set $S$ is said to be $(x,2)$-distinct if every interval of length $x$ contains at most two items from $S$. 
\end{definition}

\section{Preprocessing}
\label{sec:preprocessing}

This section is devoted to simplify the instance of \ssum in order to produce a
more readable proof of the main algorithm. In here we will deal with:
\begin{itemize}
    \item multiplicities of the items,
    \item division of the instance into large and small items,
    \item proving that rounding preserves $\eps$-closeness,
    \item reducing a number of items from $n$ to $\Ot(1/\eps)$ items.
\end{itemize}

The solutions to these problems are rather technical and well known in the
community~\cite{kellerer-subsetsum, knapsack-book, koiliaris-soda,
bringmann-soda}. We include it in here because these properties
are used in approximation algorithms~\cite{knapsack-book, kellerer-subsetsum}
and exact pseudo-polynomial algorithms~\cite{koiliaris-soda,bringmann-soda}
communities separately. We expect that reader may not be familiar with both of
these technical toolboxes simultaneously and accompany
this section with short historical references and pointers to the
original versions of proofs.

\subsection{From Multisets to Sets}
\label{sec:multisets}

The general instance of \ssum may consists of plenty of items with equal size. Intuitively, these instances seem to be much simpler than instances
where almost all items are different. The next lemma 
will allow us to formally capture this intuition with the appropriate
reduction. This lemma was proposed in~\cite[Lemma 2.2]{koiliaris-soda} but was also used
in~\cite{bringmann-soda}.

\begin{lemma}[cf. Lemma 2.2 from~\cite{koiliaris-soda}]\label{lem:multi-items}
Given a multiset $S$ of integers from $\{1,\ldots,t\}$, such that $|S|=n$ and the number of distinct items $||S||$ is $n'$, one can
compute, in $\Oh(n\log{n})$ time, a multiset $T$, such that:

\begin{itemize}
    \item $\subsums{S}{t} = \subsums{T}{t}$
    \item $|T| \le |S|$
    \item $|T| = \Oh(n' \log{n})$
    \item no element in $T$ has multiplicity exceeding two.
\end{itemize}
\end{lemma}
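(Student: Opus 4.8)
\emph{Plan.} The plan is to repeatedly \emph{merge} copies of a value into copies of its double until no value occurs more than twice, processing distinct values in \emph{increasing order} so that carries always flow to not-yet-processed values. Concretely, after sorting $S$ and grouping equal elements ($\Oh(n\log n)$), I maintain the current multiset in a dictionary keyed by value together with a min-priority queue of values still to be processed (initialised with the distinct elements of $S$). When a value $v$ with current multiplicity $m$ is extracted: if $m\le 2$ I leave it (it becomes \emph{finalised}); if $m\ge 3$ and $2v\le t$ I finalise $v$ with multiplicity $r\in\{1,2\}$ ($r=1$ for odd $m$, $r=2$ for even $m$) and add $c:=(m-r)/2$ fresh copies of $2v$, inserting $2v$ into the queue if it is new; if $m\ge 3$ and $2v> t$ I finalise $v$ with a single copy and discard the other $m-1$. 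The output $T$ is the multiset of finalised values; by construction every element of $T$ has multiplicity at most two, and since we only ever create values $2v\le t$, we keep $T\subseteq\{1,\dots,t\}$.

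\emph{Correctness and the first two bullets.} I would show every step preserves $\subsums{\cdot}{t}$. Writing the current multiset as $R\uplus\{v\}^m$ with $R$ everything else, a step of the first kind replaces $\{v\}^m$ by $\{v\}^r\uplus\{2v\}^c$; since subset sums of a disjoint multiset union Minkowski-decompose, it suffices that $\{0,v,\dots,mv\}=\{0,v,\dots,rv\}\oplus\{0,2v,\dots,2cv\}$, which is immediate from $m=r+2c$ and $r\ge 1$ (the two arithmetic progressions interleave with step $v$ and no gap). A step of the second kind does not preserve the uncapped subset sums, but it preserves $\subsums{\cdot}{t}$ directly: any capped subset sum that uses two or more copies of $v$ would be $\ge 2v> t$, so at most one copy is ever used. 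Finally $|T|\le|S|$ because the total cardinality never increases: a first-kind step changes it by $r+c-m=(r-m)/2\le -1$ for $m\ge 3$, and a second-kind step by $1-m\le -2$.

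\emph{Running time.} Each step that modifies the multiset drops its cardinality by at least one, so there are at most $n$ such steps; there is one no-op step per value ever created, and since each modifying step creates at most one new value there are $\Oh(n)$ values in all. Every step costs $\Oh(\log n)$ (one priority-queue operation plus $\Oh(1)$ dictionary operations), so the procedure runs in $\Oh(n\log n)$, dominated by the initial sort.

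\emph{The size bound $|T|=\Oh(n'\log n)$} is the part that needs care. Every value occurring in $T$ has the form $2^{j}v$ for an original value $v$ (doubling is the only way new values arise) and some $j\ge 0$ with $2^{j}v\le t$. Grouping values by their odd part, within a single \emph{chain} $o,2o,4o,\dots$ the procedure is exactly binary carrying, and the crucial observation is that a block of $\mu$ copies at one level can, through carrying, populate only $\Oh(\log\mu)$ higher levels before the carry falls below $3$ and dies (the carry out of a level with count $c$ is $\le \lceil c/2\rceil$, so counts obey $c_{k}\le \mu/2^{k}+\Oh(1)$); and if the lifted ranges of two original blocks of the same chain overlap, they merge, the combined multiplicity is still bounded by the total, and the merged range is no longer than the sum of the two individual $\Oh(\log\mu_i)$ bounds. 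Hence the number of distinct values of $T$ inside the chain of $o$ is $\Oh\big(d_o+\sum_i\log\mu_i\big)$, where $d_o$ counts the original values with odd part $o$ and the $\mu_i$ are their multiplicities. Summing over chains and using $\sum_o d_o=n'$, $\sum_v m_v=n$, and concavity of $\log$ (so $\sum_v\log m_v\le n'\log(n/n')\le n'\log n$) bounds the number of distinct values of $T$ by $\Oh(n'\log n)$, and since each has multiplicity at most two, $|T|=\Oh(n'\log n)$. I expect this charging argument---controlling how far carries can propagate once blocks in the same chain interact---to be the only genuinely delicate point; everything else is bookkeeping. (This is in essence the argument behind Koiliaris and Xu's Lemma~2.2.)
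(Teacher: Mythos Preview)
Your procedure and the analysis of correctness, $|T|\le|S|$, and running time match the paper's proof. The one substantive difference is the argument for $|T|=\Oh(n'\log n)$: you run a per-chain charging argument (bounding the populated levels in each chain by $\Oh(d_o+\sum_i\log\mu_i)$ and then summing via Jensen), whereas the paper makes a single global observation---since the multiset never has more than $n$ elements, the count at any level when it is processed is at most $n$, so from the nearest original element $x\in S$ below a generated value the carry can survive at most $\log n$ doublings before falling below $3$; hence every element of $T$ is of the form $2^i x$ with $x\in S$ and $0\le i\le\log n$, giving at most $n'(\log n+\Oh(1))$ distinct values directly. Your route is correct and even yields a marginally sharper bound in terms of the actual multiplicities, but the paper's one-line argument sidesteps the ``delicate'' merging step you flag by using the global size-$n$ cap in place of the per-block $\mu_i$'s. (Your discard step when $2v>t$ is a harmless addition; the paper simply lets $T$ contain values above $t$, which are inert for $\subsums{\cdot}{t}$.)
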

\begin{proof}
We follow the proof from \cite[Lemma 2.2]{koiliaris-soda},
however the claimed bound on $|T|$ is only $\Oh(n' \log{t})$ therein.
Consider an element $x$ with the multiplicity $2k+1$.
We can replace it with a single copy of $x$ and $k$ copies of $2x$
while keeping the multiset equivalent.
If the multiplicity is $2k+2$ we need 2 copies of $x$ and $k$ copies of $2x$.
We iterate over items from the smallest one and for each with at least 3 copies
we perform the replacement as described above.
Observe that this procedure generates only elements of form $2^ix$ where
$i \le \log{n}$ and $x$ is an element from $S$. This yields the bound on $|T|$.
The routine can be implemented to take $\Oh(\log n)$ time for creating
each new item using tree data structures.
\end{proof}

\subsection{From $n$ Items to $\Ot(1/\eps)$ Items}
\label{sec:reduction-items}

To reduce number of items $n$ to $\Ot(1/\eps)$ 
\citet{kellerer-subsetsum} gave a very intuitive construction that later found
 applications in~\knapsack-type problems~\cite{knapsack-book}.

Intuitively, rounding scheme described in
Section~\ref{sec:techniques} could divide the items into $\Oh(n/\eps)$ intervals and
this would result with an $\eps$-close instance to the original one. In here we start
similarly but we want to get rid of factor $\Oh(n)$.
We divide an instance to
$k = \ceil{\frac{1}{\eps}}$ intervals of length $\eps t$, i.e., $I_j := (jt,
(j+1)t]$. Next notice that for interval $I_j$ we do not need to store more than
$\Oh(\ceil{\frac{k}{j}})$ items, because their sum would exceed $t$ (this is
the step where $\eps$ factor will come in). Finally, the number of items is upper
bounded (up to the constant factors):

\begin{displaymath}
    \sum_{j=1}^{k} \ceil{\frac{k}{j}} \le k \sum_{j=1}^{k}  \frac{1}{j} < k
    \log{k} = \Oh(1/\eps \log{(1/\eps)})
\end{displaymath}

The last inequality is just an upper bound on harmonic numbers. This was a very
informal sketch of the proof of~\cite{kellerer-subsetsum} construction to give
some intuition.
The next technical lemma is based on their trick.

\begin{lemma}\label{lem:reduce-items}
Given a \ssum instance $(Z, t),\, |Z| = n$, one can find an $\eps$-close instance $(Z_2, t)$ such that $|Z_2| = \Oh\big(\frac{1}{\eps}\log(\frac{n}{\eps})\log(n)\big)$.
The running time of this procedure is $\Oh(|Z| + |Z_2|)$.
\end{lemma}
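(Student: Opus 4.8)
The plan is to make the informal sketch preceding the lemma rigorous, adding the one extra ingredient needed to get $\log(n/\eps)$ rather than $\log n$ in the bound — namely, that items which are themselves tiny compared to $\eps t$ must be aggregated before the bucketing argument can be applied. First I would handle the degenerate cases: if $\Sigma(Z)\le t$ then the whole set is the answer and we may return $Z_2=Z$; otherwise $\Sigma(Z)>t$ and OPT${}>t/2$ (this is the standing assumption from Section~\ref{items-partition}, and it is what makes an additive error of $\eps t$ worth only a $(1-\Oh(\eps))$ multiplicative factor). Also, items larger than $t$ can never participate in a solution, so discard them; now every item lies in $[1,t]$.

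Next I would partition $[1,t]$ into the $k=\ceil{1/\eps}$ half-open intervals $I_j=(j\eps t,(j+1)\eps t]$ for $j=0,\dots,k-1$. For the buckets $I_j$ with $j\ge 1$, keep only the $\ceil{k/j}$ smallest items falling into $I_j$ and throw the rest away: any subset summing to at most $t$ uses at most $\lfloor t/(j\eps t)\rfloor\le k/j$ items from $I_j$, so a solution of the original instance that used a discarded item can be rewritten, item-by-item, to use one of the retained (smaller, same-bucket) items instead, changing its value by at most $\eps t$ per swap but never increasing it — hence the retained instance is $(\eps,t)$-close to the original via the obvious surjection. The harmonic-sum computation $\sum_{j=1}^{k-1}\ceil{k/j}=\Oh(k\log k)=\Oh\big(\tfrac1\eps\log\tfrac1\eps\big)$ bounds the number of retained items outside $I_0$. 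The leftover difficulty is bucket $I_0=(0,\eps t]$, which may contain up to $n$ items and for which the ``at most $k/j$ of them'' argument is vacuous. Here I would instead feed the multiset $Z\cap I_0$ through Lemma~\ref{lem:multi-items} — no, that only bounds things by the number of \emph{distinct} small values, which can still be $\Theta(n)$ — so the right move is a second bucketing at finer scale combined with merging: repeatedly pair up the two smallest surviving small items into one item of their summed size (which may now leave $I_0$ and rejoin the coarse buckets, where the $k/j$ cap controls it), exactly as in the multiset-to-set reduction; since each merge at least doubles the minimum item size, after $\Oh(\log(1/\eps))$ rounds every remaining ``small'' item either has size $>\eps t$ (and is governed by the coarse argument) or we have so few of them that their total is below $\eps t$ and they may simply be dropped, contributing additive error $\le\eps t$ once. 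Tracking that the merging preserves $\subsums{\cdot}{t}$ exactly and composes with the coarse $(\eps,t)$-closeness via the Fact in Section~\ref{prelim} gives closeness with a constant-factor-worse $\eps$, absorbed by rescaling $\eps$.

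The main obstacle is precisely the small-items bucket $I_0$: the clean harmonic-sum argument gives nothing there, and one has to argue that after logarithmically many merge rounds the residual small items are collectively negligible. I expect the bookkeeping to require the $(x)$-distinctness / $(\eps,t)$-closeness machinery of Section~\ref{prelim} to state cleanly, and to produce the extra $\log(n/\eps)$ (one $\log n$ from Lemma~\ref{lem:multi-items}-style merging to kill multiplicities, one $\log(1/\eps)$ from the merge-rounds on $I_0$), matching the claimed $|Z_2|=\Oh\big(\tfrac1\eps\log(\tfrac n\eps)\log n\big)$.

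For the running time: removing oversized items, bucketing into the $k$ intervals, and selecting the $\ceil{k/j}$ smallest per bucket are all doable in $\Oh(|Z|)$ time using radix/counting sort on bucket indices plus linear-time selection (we never need the small items fully sorted, only the $\Oh(k/j)$ smallest, and $\sum\Oh(k/j)=\Oh(|Z_2|)$); the merge rounds touch each surviving item $\Oh(\log(1/\eps))$ times but the number of survivors is already $\Oh(|Z_2|)$ after the first pass, so the total is $\Oh(|Z|+|Z_2|)$ as claimed. I would close by noting that the output size dominates: $|Z_2|=\Oh\big(\tfrac1\eps\log(\tfrac n\eps)\log n\big)$, which is $\Ot(1/\eps)$ in the paper's $\Otilde$ convention.
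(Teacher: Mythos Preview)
Your bucketing argument for $j\ge 1$ has a genuine error-accumulation problem. You correctly observe that replacing a discarded item by a retained same-bucket item changes the sum by at most $\eps t$ and never increases it, but then jump to ``hence the retained instance is $(\eps,t)$-close.'' That does not follow: a feasible subset may contain up to $1/\eps$ items from the buckets $I_1,\dots,I_{k-1}$, so the cumulative replacement error can be as large as $\Theta(t)$, not $\eps t$. Concretely, take $\eps=0.1$, $t=100$, bucket $I_1=(10,20]$ containing items $11,\dots,30$; you retain $11,\dots,20$. The feasible subset $\{21,22,23,24\}$ sums to $90$, but the only four retained items you can swap in sum to at most $11+12+13+14=50$, an error of $40=0.4t$. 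The Kellerer-style ``keep the $\lceil k/j\rceil$ smallest'' trick preserves the \emph{optimal} value of a \ssum instance, but it does not preserve \emph{all} subset sums to within $\eps t$, which is what $(\eps,t)$-closeness demands.

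The paper's proof is accordingly quite different from the sketch preceding the lemma (which is explicitly called ``a very informal sketch''). It uses dyadic intervals $[\frac{t}{2^i},\frac{t}{2^{i-1}})$ and rounds items in the $i$-th interval down to multiples of $\lfloor\frac{\eps t}{2^{i+1}}\rfloor$, so the rounding error per item is at most $\frac{\eps}{2}$ times the item's own value; summing over any feasible subset then gives total error at most $\frac{\eps}{2}\cdot t$. Small items are handled not by merging but simply by discarding everything below $\frac{\eps t}{2n}$ (their total is at most $\frac{\eps t}{2}$). After this rounding there are $O(\frac{1}{\eps}\log\frac{n}{\eps})$ \emph{distinct} values, and a single application of Lemma~\ref{lem:multi-items} brings the multiset size down to $O(\frac{1}{\eps}\log\frac{n}{\eps}\log n)$. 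Your small-items merging step is also problematic on its own: pairing two \emph{distinct} items $a,b$ into $a+b$ does not preserve $\subsums{\cdot}{t}$ exactly (you lose the sum $a$ alone), so the appeal to ``exactly as in the multiset-to-set reduction'' fails---Lemma~\ref{lem:multi-items} only merges copies of the \emph{same} value.
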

\begin{proof}
We begin with constructing $Z_1$ as follows.
For $i=1,\dots,\log(\frac{2n}{\eps})$ we round down each element in $Z \cap [\frac{t}{2^i}, \frac{t}{2^{i-1}})$ to the closest multiplicity of $\floor{\frac{\eps t}{2^{i+1}}}$.
We neglect elements smaller than $\frac{\eps t}{2n}$.
Observe that $||Z_1|| = \Oh\big(\frac{1}{\eps}\log(\frac{n}{\eps})\big)$.

We argue that $(Z_1, t)$ is $\eps$-close  to $(Z, t)$.
To see this, consider any subset $I \subseteq Z$ summing to at most $t$
and its counterpart $Y_1 \subseteq Z_1$.
We lose at most $n\cdot\frac{\eps t}{2n} = \frac{\eps t}{2}$ by omitting items smaller than $\frac{\eps t}{2n}$.
Let $k_i = |I \cap [\frac{t}{2^i}, \frac{t}{2^{i-1}})|$ and $t_i$ denote the sum of elements in $I \cap [\frac{t}{2^i}, \frac{t}{2^{i-1}})$.
Since each element in $[\frac{t}{2^i}, \frac{t}{2^{i-1}})$ has been decreased by at most $\frac{\eps t}{2^{i+1}}$ and $k_i\cdot \frac{t}{2^i} \le t_i$, we have
\begin{equation*}
\Sigma(I) - \Sigma(Y_1) \le \frac{\eps t}{2} + \sum_{i=1}^{\log(\frac{2n}{\eps})} k_i\cdot\frac{\eps t}{2^{i+1}} \le \frac{\eps t}{2} + \sum_{i=1}^{\log(\frac{2n}{\eps})} \frac{\eps t_i}{2} \le \eps t.
\end{equation*}

In the end we take advantage of Lemma~\ref{lem:multi-items} to transform $Z_1$
into an equivalent multiset $Z_2$ such that $|Z_2| \le ||Z_1|| \log(|Z_1|) =  \Oh\big(\frac{1}{\eps}\log(\frac{n}{\eps})\log(n)\big)$.
\end{proof}

Note, that we discarded items smaller than $\frac{\eps t}{2n}$. We do this
because sum of these elements is just too small to influence the worst case
approximation factor. We do not consider them just for the simplicity of
analysis. To make this algorithm competitive
in practice, one should probably just greedily add these small items to
get a little better solution. 
    
\subsection{From One Instance to Small and Large Instances}
\label{items-partition}

First we need a standard technical assumption, that says that we can cheaply
transform an instance to one with a lower bounded solution. We will need it
just to simplify the proofs (e.g., it will allow us to use Lemma~\ref{rounding-numbers-d} multiple
times).

\begin{lemma}\label{lem:large-opt}
One may assume w.l.o.g. that for any \ssum instance $\text{OPT} \ge \frac{t}{2}$.
\end{lemma}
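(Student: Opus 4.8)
The plan is to exhibit a linear-time preprocessing step that, given an arbitrary \ssum instance $(Z,t)$, either solves it outright or certifies that $\text{OPT}\ge t/2$; hence any algorithm designed under the assumption $\text{OPT}\ge t/2$ can be run after only $\Oh(|Z|)$ extra work, which justifies the ``w.l.o.g.''.

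First I would discard every element of $Z$ that is larger than $t$. Such an element lies in no subset summing to at most $t$, so this changes neither $\text{OPT}$ nor the weak-approximation target (which is measured against the exact optimum $\le t$). From now on every element of $Z$ is at most $t$.

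Next I would split into two cases according to $\Sigma(Z)$. If $\Sigma(Z)\le t$, then $Z$ itself is a feasible subset, hence the optimal one, and the instance is solved exactly in linear time. Otherwise $\Sigma(Z)>t$, and I would list the elements in arbitrary order $z_1,\dots,z_m$ with prefix sums $s_0=0$ and $s_j=z_1+\dots+z_j$. Since $s_0=0\le t/2$ and $s_m=\Sigma(Z)>t$, there is a smallest index $j\ge 1$ with $s_j>t/2$, so $s_{j-1}\le t/2$. If $s_j\le t$, the prefix $\{z_1,\dots,z_j\}$ is feasible and has sum exceeding $t/2$. If instead $s_j>t$, then $z_j=s_j-s_{j-1}>t-t/2=t/2$, and because $z_j\le t$ the singleton $\{z_j\}$ is feasible with sum exceeding $t/2$. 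In both sub-cases $\text{OPT}>t/2\ge t/2$, as desired.

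The only point needing care is this last dichotomy: the greedy prefix may jump over the window $(t/2,t]$ in a single step, and the remedy is precisely the observation that the jumping element, being at most $t$ after the first preprocessing step, is itself a valid solution of value larger than $t/2$. All of this runs in $\Oh(|Z|)$ time, which is dominated by the running time of every algorithm in the paper, so the reduction is essentially free.
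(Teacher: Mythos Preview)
Your proof is correct and essentially as simple as the paper's, but the mechanics differ. The paper argues existence of a feasible subset with sum in $[t/2,t]$ by a halving process: starting from $Y_1=Z$ (after removing items $>t$), repeatedly split $Y_k$ into two nonempty parts and keep the one with the larger sum; since $\Sigma(Y_1)>t$, $\Sigma(Y_{\text{last}})\le t$, and each step loses at most a factor of two, some $Y_k$ lands in $[t/2,t]$. You instead run a single left-to-right scan of prefix sums and handle the overshoot case with the observation that the overshooting element itself must lie in $(t/2,t]$. Both arguments are elementary; yours is arguably cleaner in that it is explicitly a single $\Oh(|Z|)$ pass and directly exhibits the witnessing subset, whereas the paper's bisection is more of a continuity-style existence argument whose linear running time requires a moment's thought.
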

\begin{proof}
Let us remove from the item set $Z$ all elements exceeding $t$ since they cannot belong to any solution.
If $\Sigma(Z) \le t$ then the optimal solution consists of all items.
Otherwise consider a process in which $Y_1 = Z$ and in each turn we obtain $Y_{k+1}$ by dividing $Y_k$ into two arbitrary non-empty parts and taking the one with a larger sum.
We terminate the process when $Y_{last}$ contains only one item.
Since $\Sigma(Y_1) > t,\, \Sigma(Y_{last}) \le t$, and in each step the sum
decreases by at most factor two, for some $k$ it must be $\Sigma(Y_k) \in [\frac{t}{2}, t]$.
Because there is a feasible solution of value at least $\frac{t}{2}$, $\text{OPT}$ cannot be lower.
\end{proof}

% Note, that at the end we will prove approximation algorithm for \partition
% and this assumption is automatically fulfilled for that problem
% (see~\cite{levner80}). 

One of the standard  ways of solving \ssum is to separate the
large and small items~\cite{knapsack-book}. Usually these approximations consider items greater
and smaller than some trade-off parameter.  Our techniques
require a bound on the multiplicities of small items,
which is provided by the next lemma.

\begin{lemma}[Partition into Small / Large Items]
    \label{lem:partition-small-large}

    Given an instance $(Z,t)$ of \ssum, an approximation factor $\eps$, and
    a trade-off parameter $\gamma$, one can
    deterministically transform the instance $(Z,t)$, in time $\Oh(n\log^2{n})$, to an $\eps$-close instance $(\zzsmall \cup \zzlarge, t)$ such
    that:

    \begin{itemize}
        \item $\forall z_s \in \zzsmall, \;\; \text{it holds that} \;\; z_s < \gamma t$,
        \item $\forall z_l \in \zzlarge, \;\; \text{it holds that} \;\; z_l \ge \gamma t$,
        \item The set $\zzsmall$ is $(\frac{\eps t}{m \cdot \log n}, 2)$-distinct where $m = \Oh(|\zzsmall|)$, i.e., after rounding there can be at most 2 occurrences of each item.
    \end{itemize}
\end{lemma}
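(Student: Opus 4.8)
The plan is to combine the rounding idea with the multiplicity-reduction of Lemma~\ref{lem:multi-items}, iterating until the number of small items stabilizes. First I would split $Z$ at the threshold $\gamma t$: put every $z \ge \gamma t$ into $\zzlarge$ (these are never touched again) and every $z < \gamma t$ into a working set $S$. The issue is that the required precision for rounding the small items depends on $m = \Oh(|S|)$, but reducing multiplicities via Lemma~\ref{lem:multi-items} changes $|S|$. So I would run the following refinement loop: maintain a current bound $m$ on the number of small items; round each element of $S$ down to the nearest multiple of roughly $\frac{\eps t}{m \log n}$ (dropping anything that becomes $0$, or rather neglecting elements below this threshold, charging at most $\frac{\eps t}{2}$ total as in Lemma~\ref{lem:reduce-items}); then apply Lemma~\ref{lem:multi-items} to the rounded set to obtain an equivalent set with multiplicity at most two and size $\Oh(\|S\|\log|S|)$; update $m$ to this new size and repeat if it dropped significantly.

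The key step is to show the loop terminates after $\Oh(\log n)$ iterations. After the first rounding with precision $\sim \frac{\eps t}{m\log n}$ the number of \emph{distinct} small values is at most $\Oh(\frac{\gamma t}{\eps t/(m\log n)}) = \Oh(\frac{\gamma m \log n}{\eps})$; but more importantly each pass through Lemma~\ref{lem:multi-items} can only decrease $|S|$ (that lemma guarantees $|T| \le |S|$), and rounding to a coarser grid (coarser because $m$ only decreases) can only merge more items together, so $|S|$ is non-increasing across iterations. Since $|S| \le n$ and each ``productive'' iteration drops $|S|$ by a constant factor — one can argue that if $|S|$ did \emph{not} drop by a constant factor then all values were already $(\frac{\eps t}{m\log n},2)$-distinct and the loop has converged — there are at most $\Oh(\log n)$ iterations, giving total running time $\Oh(n\log^2 n)$ (each iteration costs $\Oh(n\log n)$ by Lemma~\ref{lem:multi-items}).

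It remains to verify $\eps$-closeness. The large items are untouched. For the small items, the final rounding uses precision $\delta := \Theta(\frac{\eps t}{m\log n})$ where $m$ is the \emph{final} size; a solution uses at most $\Oh(1/\gamma)$ large items and at most $|S| = \Oh(m)$ small items, but we need the sharper count: the small items sit in $[0,\gamma t)$, so any feasible subset contains $\Oh(m)$ of them and the total rounding loss on small items is at most $m \cdot \delta = \Oh(\frac{\eps t}{\log n})$, plus $\frac{\eps t}{2}$ from the neglected tiny elements, which is within $\eps t$ after adjusting constants. Each intermediate application of Lemma~\ref{lem:multi-items} preserves $\subsums{\cdot}{t}$ exactly, so closeness is not degraded across iterations; by the composition Fact, the union $\zzsmall \cup \zzlarge$ is $(\eps,t)$-close to the original $(Z,t)$. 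The main obstacle is the termination argument: one must carefully define what ``the loop has converged'' means in terms of the $(\frac{\eps t}{m\log n},2)$-distinctness condition and check that one more rounding-plus-merge step at the converged size indeed produces a $(\frac{\eps t}{m\log n},2)$-distinct set rather than oscillating; pinning down the precise relationship between the grid spacing, the merging in Lemma~\ref{lem:multi-items}, and the distinctness parameter is where the real care is needed.
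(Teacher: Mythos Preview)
Your overall strategy---iterate rounding and Lemma~\ref{lem:multi-items} until the small-item count stabilizes---is exactly the paper's approach, and your termination argument (stop when $|S|$ fails to halve, hence $\Oh(\log n)$ iterations) is the right one. Two points need fixing, though.

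First, a minor oversight: applying Lemma~\ref{lem:multi-items} doubles items, so a small item just below $\gamma t$ can become large after merging. You declare the large set frozen after the initial split, but the paper re-partitions after every merge step, moving any newly created items of size $\ge \gamma t$ into $\zzlarge$. Without this, your final $\zzsmall$ need not satisfy the first bullet of the lemma.

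Second, and more substantively, your $\eps$-closeness bookkeeping is off. You bound the total rounding loss by $m\cdot\delta$ where $m$ is the \emph{final} small-item count and $\delta$ the \emph{final} precision, arguing ``any feasible subset contains $\Oh(m)$ of them.'' But a feasible subset of the \emph{original} instance can contain far more than $m$ small items (think of $n$ copies of $1$ with $t=n$: after merging, $m=\Oh(\log n)$, yet the optimal subset has $n$ items). What you have actually bounded is only the loss incurred in the \emph{last} rounding pass. The earlier passes round $m_i \gg m$ items with precision $q_i \ll \delta$, and their contributions are not captured by $m\cdot\delta$. The correct accounting is per iteration: in step $i$ you round $m_i$ items each by at most $q_i \le \frac{\eps t}{m_i\log n}$, so the step-$i$ instance is $\frac{\eps}{\log n}$-close to the step-$(i{-}1)$ instance; composing over $\le\log n$ steps gives $\eps$-closeness. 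This is precisely why the $\log n$ appears in the denominator of the rounding precision. (Relatedly, to make the grids nest cleanly and guarantee the final $(\frac{\eps t}{m\log n},2)$-distinctness, the paper rounds to powers of two via $\pow(\cdot)$; you should do the same.)
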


\begin{proof}

%At first, we use Lemma~\ref{lem:large-opt} so we can assume $\text{OPT} \ge \frac{t}{2}$.
%We denote the set of items inducing some optimal solution by $Y_{opt}$.

We call an item $x$ \emph{large} if $x \ge \gamma t$ and \emph{small} otherwise.
Let $Y_0$ be the initial set of small items and $m_0 = |Y_0|,\, q_0 = \pow(\frac{\eps t}{m_0\log{n}})$.
We round down the size of each small item to the closest multiplicity of $q_0$.
Then we apply Lemma~\ref{lem:multi-items} to the set of small items to get rid of items with 3 or more copies.
Note that this operation might introduce new items that are large.
%and the whole instance is $\eps$-close to $(Z, t)$.
We obtain a new set of small items $Y_1$ and repeat this procedure with notation $m_i = |Y_i|,\, q_i = \pow(\frac{\eps t}{m_i\log{n}})$.
It holds that $m_{i+1} \le m_i$ and $q_i\,|\,q_{i+1}$.
We stop the process when $m_{i+1} \ge \frac{m_i}{2}$, which means there can be at most $\log{n}$ iterations.
Let $m$ denote the final number of small items and $q \ge \frac{\eps t}{4m\log{n}}$ -- the last power of 2 used for rounding.
All small items now occur with multiplicities at most~2.

Let us fix $\zzsmall$ as the set of small items after the modification above
and likewise $\zzlarge$.
In the $i$-th rounding step values of $m_i$ items are being decreased by at most $\frac{\eps t}{m_i\log{n}}$,
so each new instance is $\frac{\eps}{\log n}$-close to the previous one.
There are at most $\log{n}$ steps and the removal of copies keeps the instance equivalent,
therefore $(\zzsmall \cup \zzlarge,\, t)$ is $\eps$-close to $(Z, t)$.
\end{proof}

Our algorithm works independently on these two instances and produces two
arrays $\eps$-close to them. The construction below allows us to join these
solutions efficiently. We want to use them even if we have only access
to them by queries. We  formalize this as an \oracle.
The asymmetry of the definition below will become clear in Lemma~    \ref{exact-approx}.

\begin{definition}[\oracle]
    \label{mem-oracle}
    The \oracle of a set $X$ is a data structure, that given an integer $q$ answers \textbf{yes/no} obeying following conditions:
    \begin{enumerate}
    \item if $X$ contains an element in $[q-\eps t, q+ \eps t]$, then the answer is \textbf{yes},
  \item if the answer was  \textbf{yes}, then $X$ contains an element in $[q-2\eps t, q+ 2\eps t]$.
       \end{enumerate}
A query to
   the oracle takes $\Ot(1)$ time.
    Moreover, if the oracle answers \textbf{yes}, then it can return a witness $x$ in $\Ot(1)$ time.
\end{definition}

Below we present an algorithm that can efficiently join the solutions. We
assume, that we have only query-access to them and want to produce an \oracle
of the merged solution.

\begin{lemma}[Merging solutions]
    \label{FFT}
    Given $\subsums{Z_1}{t}$ and $\subsums{Z_2}{t}$ as $(\eps,t)$-membership-oracles
    \begin{itemize}
        \item $S_1$ that is  $(\eps,t)$-close instance to $\subsums{Z_1}{t}$,
        \item $S_2$ that is $(\eps,t)$-close instance to $\subsums{Z_2}{t}$,
    \end{itemize}
    we can, deterministically in $\Ot(\frac{1}{\eps})$ time, construct a
    $(2\eps,t)$-membership-oracle for
    $\subsums{Z_1\cup Z_2}{t}$.
\end{lemma}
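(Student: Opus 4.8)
The plan is to compute the bounded algebraic sum of (rounded versions of) $S_1$ and $S_2$ via FFT, and then wrap the result in an oracle that also supports witness extraction by backtracking. First I would discretize: since queries to the resulting structure only need to be answered up to additive error $\pm O(\eps t)$, I would round every element queried from $S_1$ and $S_2$ down to the nearest multiple of $\eps t$, obtaining subsets of $\{0, 1, \dots, \lceil 1/\eps\rceil\}$ after dividing through by $\eps t$. Concretely, probe the two oracles — but note we are only given \emph{query} access, not an explicit list. So the first real step is: enumerate all $q \in \{0, \eps t, 2\eps t, \dots, t\}$ (there are $O(1/\eps)$ of them), ask each oracle whether it has an element near $q$, and record a $0/1$ bit. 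This yields two binary arrays $A_1, A_2$ of length $O(1/\eps)$ that are $(O(\eps), t)$-close to $\subsums{Z_1}{t}$ and $\subsums{Z_2}{t}$ respectively (closeness is preserved up to a constant factor blow-up in $\eps$ by the rounding, using the \texttt{Fact} on $\oplus_t$ and $(\eps,t)$-closeness from the preliminaries).

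Next I would compute $A_1 \oplus_t A_2$ by one FFT-based Boolean convolution of two length-$O(1/\eps)$ vectors, which takes $\Ot(1/\eps)$ time; truncating to indices $\le 1/\eps$ (i.e. sums $\le t$) gives a binary array $A$ that is $(2\eps, t)$-close to $\subsums{Z_1 \cup Z_2}{t}$, again by the stated $\oplus_t$-closeness fact, with the relevant constants absorbed by a slightly smaller initial rounding granularity. The $(2\eps,t)$-membership oracle then answers a query $q$ by a single lookup in $A$ at position $\lfloor q/(\eps t)\rfloor$ (checking a constant-size neighborhood to get both required one-sided guarantees of Definition~\ref{mem-oracle}); each query is $\Ot(1)$.

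The part that needs a little care is \emph{witness extraction}: when $A$ reports \textbf{yes} at index $i$, we must return an actual element $x \in \subsums{Z_1\cup Z_2}{t}$ within $\pm 2\eps t$ in $\Ot(1)$ time. For this I would, during the FFT step, not only record which indices $i$ are achievable but also store for each achievable $i$ one pair $(i_1, i_2)$ with $i_1 + i_2 = i$ and $A_1[i_1] = A_2[i_2] = 1$ (this is standard: one extra convolution, or a deterministic reconstruction à la the $\Ot(1/\eps)$-time witness schemes; the paper alludes to it as ``FFT and backtracking''). Then, given $i$, look up its stored pair, and recursively query the underlying oracles for $S_1$ at $i_1 \eps t$ and $S_2$ at $i_2 \eps t$, each of which returns a witness in $\Ot(1)$ by assumption; summing the two witnesses gives $x$, and the error terms add to at most $2\eps t$. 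The main obstacle is precisely bookkeeping the error budget — ensuring that the $\pm \eps t$ slack in each input oracle, plus the rounding to multiples of $\eps t$, plus the additive behavior of $\oplus_t$, all compose to land within the $\pm 2\eps t$ window demanded by the output oracle — which is why one starts from granularity $c\eps t$ for a suitable small constant $c$ rather than $\eps t$ itself, and re-scales $\eps$ accordingly. Everything else is a single FFT and $O(1/\eps)$ oracle probes, giving the claimed deterministic $\Ot(1/\eps)$ running time.
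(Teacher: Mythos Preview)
Your proposal is correct and follows essentially the same approach as the paper: discretize $[0,t]$ into $O(1/\eps)$ buckets, probe each input oracle once per bucket to obtain two binary arrays, convolve them via FFT, answer queries by inspecting a constant-size neighborhood in the convolution, and recover witnesses by storing one contributing index pair per nonzero output and recursing into the input oracles. The paper's proof is exactly this, with the same caveat that the accumulated rounding and oracle slack must be absorbed into the final $\pm 2\eps t$ window (handled, as you note, by choosing the bucket width $p = c\eps t$ for a suitable constant $c$).
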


\begin{proof}
    For an ease of presentation, only in this proof we will use interval
    notation of inclusion, i.e., we will say that $(a,b] \sqcap A$ iff $\exists_x x\in
    (a,b] \wedge x \in A$.
    Let $p = \Oh(\eps t)$. 
    For each interval $(ip, (i+1)p]$ where $i \in \left\{0,\ldots
    \floor{\frac{t}{p}} \right\}$ we query oracles whether
    $\subsums{Z_1}{t}$ and $\subsums{Z_2}{t}$
    contain some element in the interval,
    having in mind that the answer is approximate.
    The number of intervals is $\Oh(\frac{1}{\eps})$.
    
    We store the answers in arrays
    $S_1$ and $S_2$, namely $S_j[i] = 1$ if the oracle for $\subsums{Z_j}{t}$  answers yes for interval $(ip, (i+1)p]$.
    \begin{displaymath}
        S'_1[i] = \begin{cases}
            1 & \text{if the oracle for } \;\; (ip, (i+1)p] \sqcap S_1 \;\; \text{or} \;\; i=0\\
            0 & \text{otherwise}
        \end{cases}
    \end{displaymath}
   % Analogously we define $S'_2$.
     Then we perform a fast convolution on $S_1,S_2$ with FFT.
    
    If $x \in \subsums{Z_1 \cup Z_2,t} \cap (kp, (k+1)p]$, then
    there is some $x_1 \in \subsums{Z_1}{t}$ and $x_2 \in\subsums{Z_2}{t}$ such that
        $x = x_1 + x_2$.
       % Suppose they lie in intervals $i_1, i_2$.
        %If now we use a fast convolution of these arrays $S''_1$ and $S''_2$, then the approximate property holds, i.e.,
        We have  $(S_1 \oplus_{\text{FFT}} S_2)[k] =
    \sum_{i=0}^k S_1[i] \cdot S_2[k-i]$
and thus $(S_1 \oplus_{\text{FFT}}
    S_2)[k']$ is nonzero for $k'=k$ or $k'=k+1$.
    This defines the rule for the new oracle.
    The additive error of the oracle gets doubled with the summation.
    On the other hand, if one of these fields is nonzero, then there are corresponding indices $i_1, i_2$ summing to $k$ or $k+1$.
    The second condition from Definition~    \ref{mem-oracle} allows the corresponding value $x_1$ to lie within one of the intervals with indices $i_1 - 1,i_1$, or $i_1+1$ and likewise for $x_2$.
 Therefore, the additive error is $\Oh(p) = \Oh(\eps t)$.

    iff there is $i$ such that $(ip, (i+1)p] \sqcap S_1
    \cup \{0\}$ and $((k-i)p, (k-i+1)p] \sqcap S_2 \cup \{0\}$.

    Now, we promised only oracle output to our array. When a query comes, we
    scale down the query interval, then we check whether any of adjacent
    interval in our structure is nonzero (we lose a constant factor of
    $\Oh(\eps)$ accuracy here) and output \textbf{yes} if we found it and
    \textbf{no} otherwise.
    
    Moreover, with additional polylogarithmic factors we can also retrieve the
    solution. The idea is similar to backtracking from~\cite{kellerer-subsetsum}. Namely, the fast convolution algorithm
    can compute the table of witnessing indexes (of
    only one).
    % and it suffices since we need to return any $(\eps,t)$ close solution.
    We store a witnessing index if there is
    solution and $-1$ otherwise. Then we ask the oracles of $\subsums{Z_1}{t}$ and $\subsums{Z_2}{t}$ for a
    solution with a proper indexes and return the combination of those.

\end{proof}

\subsection{From Exact Solution to $\eps$-close Instance}

In Section~\ref{sec:techniques} we presented an overall approach of rounding elements 
and explained why it gives us the weak approximation of \ssum. Here we will
focus on formally proving these claims.

In our subroutines, we round down the items, execute the exact algorithm on the rounded instance, and retrieve the solution. We want to argue, that in the end we
lose only an additive factor of $\pm \eps t$.
We presented a sketch of this reasoning in Fact~\ref{thm:weak-example}.
For our purposes we will describe the procedure in the case,
when the number of items in any solution is bounded by $k$ (i.e., we are interested only in $\subsums{Z}{t}_k$). We can always assume $k \le n$.

\begin{lemma}
%[\ssum $\rightarrow$ weak \ssum]
    \label{exact-approx}
    Given an exact algorithm that outputs the set $\subsums{Z}{t}_{k}$ and works in time
    $T(n,t)$, where $n = |Z|$, we can construct an \oracle of set
    $\subsums{Z,t}_{k}$ in time $\Ot(n+T(n,k/\eps))$.

    If the exact algorithm retrieves solution in $\Ot(1)$ time, then so
    does the oracle.
\end{lemma}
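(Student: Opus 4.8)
The plan is to mimic the rounding argument of Fact~\ref{thm:weak-example}, but to replace the crude bound $n$ on the number of items in a solution by the refined bound $k$, which is exactly what lets us round with coarser granularity $k/\eps$ instead of $n/\eps$. First I would set the rounding parameter $q = \pow\!\left(\frac{\eps t}{2k}\right)$ and define the rounded instance $\tilde Z$ by $\tilde z = \floor{z/q}$ for each $z \in Z$, with $\tilde t = \floor{t/q}$. Running the given exact algorithm on $(\tilde Z, \tilde t)$ takes time $T(n, \tilde t) = T(n, \Oh(k/\eps))$, and together with the $\Oh(n)$ preprocessing this gives the claimed $\Ot(n + T(n, k/\eps))$ bound. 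The output is the set $\subsums{\tilde Z}{\tilde t}_k$, which I then need to turn into an \oracle for $\subsums{Z}{t}_k$.

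The key step is the error analysis, carried out separately in the two directions demanded by Definition~\ref{mem-oracle}. Fix a set $S' \subseteq Z$ with $|S'| \le k$ and $\Sigma(S') \le t$; its rounded counterpart has $\Sigma(\tilde S') = \sum_{z \in S'} \floor{z/q}$, so $q\,\Sigma(\tilde S') \le \Sigma(S') \le q\,\Sigma(\tilde S') + |S'|\,q \le q\,\Sigma(\tilde S') + k q \le q\,\Sigma(\tilde S') + \tfrac{\eps t}{2}$. Hence every subset sum of the original instance lies within $[\,q x,\, q x + \tfrac{\eps t}{2}\,]$ for the corresponding rounded subset sum $x$, and conversely every $x \in \subsums{\tilde Z}{\tilde t}_k$ lifts to a genuine subset whose true sum is in that same window. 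So, on a query $\xi$, the oracle scales to $\xi' = \floor{\xi/q}$, checks membership of a short range of indices around $\xi'$ in the stored binary array $\subsums{\tilde Z}{\tilde t}_k$, and answers \textbf{yes} iff a hit is found; the additive slack of $\tfrac{\eps t}{2}$ on each side comfortably fits inside the $\pm \eps t$ / $\pm 2\eps t$ tolerances of the membership-oracle definition. Witness retrieval is immediate: if the exact algorithm can report a witnessing subset in $\Ot(1)$ time for the rounded instance, the same subset (read in original item sizes) is the witness for the oracle, again in $\Ot(1)$ time.

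Two details need care. First, one must make sure the \emph{capacity} side is not violated in the wrong direction: because we round items \emph{down}, a rounded solution of sum $\le \tilde t$ may lift to an original sum slightly exceeding $t$ (by up to $kq \le \tfrac{\eps t}{2}$), but this is precisely the additive error we are willing to tolerate, and it is why the statement produces an \oracle rather than a true FPTAS — this is the asymmetry alluded to in the remark before the lemma. Second, I should invoke Lemma~\ref{lem:large-opt} to assume $\text{OPT} \ge t/2$, so that an additive error of $\Oh(\eps t)$ is also a multiplicative $(1-\Oh(\eps))$ error; a final rescaling of $\eps$ by a constant absorbs all the accumulated constants. The main obstacle is bookkeeping rather than conceptual: keeping the two-sided guarantees of Definition~\ref{mem-oracle} straight through the floor operations and the scaling of the query, and making sure the constant in $q = \pow(\eps t/2k)$ (a power of two, so that later rounding granularities are nested as needed elsewhere) is chosen so that every inequality has the right slack.
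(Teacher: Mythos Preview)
Your proposal is correct and follows essentially the same approach as the paper: round items down with granularity $\Theta(\eps t/k)$, run the exact algorithm on the rounded instance with target $\Theta(k/\eps)$, and answer queries by scaling and checking a short window in the stored array. Two minor remarks: the invocation of Lemma~\ref{lem:large-opt} is unnecessary here since the \oracle guarantees are purely additive in $\eps t$; and to meet the $\Ot(1)$ query-time requirement you must precompute the ``does the array have a 1 in this window'' information (e.g., with prefix sums or a sliding queue, as the paper does), rather than scanning $\Oh(k)$ indices at query time.
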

\begin{proof}
For sake of legibility, we assume that we are interested in $\subsums{Z}{t}_{k-1}$ - this only allows us to write simpler formulas.
    Let $(z_i)$ denote the items. We perform rounding in the following way:

    \begin{displaymath}
        z'_i  =  \floor{\frac{kz_i}{\eps t}}, \;\;\;\;\;
        t'  =  \floor{\frac{kt}{ \eps t}} =  \floor{\frac{k}{\eps}}.
    \end{displaymath}

  %  This item rounding takes $\Ot(n)$ time, since we need to do it for every item. 
  We run the
    exact algorithm on the rounded instance $(Z',t')$. It takes time $T(n,t') =
    \Oh(T(n,k/\eps))$. This algorithm returns
    $\subsums{Z'}{t'}_{k-1}$, which we store in array $Q[1,t']$. We construct \oracle in array $Q'[1,t']$ as follows: we set $Q'[i] = 1$ iff $Q$ contains 1 in range $(i-2k,i+k]$.
    If we want to be to able retrieve a solution, we need to also remember a particular index $j(i) \in (i-2k,i+k]$ such that $Q[j(i)] = 1$.
    Such a data structure can be constructed in a linear time with a help of a queue.
    Given a query $q$, the oracle returns
    $Q'[q']$, where  $q' = \floor{\frac{k q }{\eps t}}$.
    It remains to prove that  Definition~\ref{mem-oracle}
 is satisfied.

Let $I\subseteq Z$ be a set of at most $k-1$ items and $I'$ be the
    set of their counterparts after rounding.
    Since for all $z_i \in Z$ it holds
    
        \begin{displaymath}
         {\frac{k z_i}{\eps t}} - 1 \le z'_i \le {\frac{k z_i }{\eps t}},
    \end{displaymath}    
    we obtain
    
  \begin{equation}
      \label{equation:star}
        {\frac{k\cdot\Sigma(I)}{\eps t}} - k + 1 \le \Sigma(I') \le {\frac{k\cdot\Sigma(I')}{\eps t}}.\quad
    \end{equation}  
Therefore, if $\Sigma(I) \in [q - \eps t, q + \eps t]$, then 

  \begin{align*}
     &  \frac{k q}{\eps t} -2k + 1 =  {\frac{k\cdot (q - \eps t)}{\eps t}} - k + 1 \le \Sigma(I'), \\
     & \Sigma(I') \le {\frac{k\cdot (q + \eps t)}{\eps t}} = \frac{k q}{\eps t}+ k,
    \end{align*}
and $\Sigma(I') \in (q' - 2k, q' + k]$, because $\Sigma(I')$ is integer.
    On the other hand, we can invert relation (\ref{equation:star}) to obtain

  \begin{align*}
 \frac{\eps t}{k}\cdot \Sigma(I')  \le \Sigma(I)  \le  \frac{\eps t}{k}\cdot \left(\Sigma(I') + k- 1\right).
    \end{align*}
To satisfy the second condition we assume $\Sigma(I') \in (q' - 2k, q' + k]$ and check that
  \begin{align*}
     q - 2\eps t &= \frac{\eps t}{k}\cdot \left(\frac{k q}{\eps t} -2k  \right) \le \frac{\eps t}{k}\cdot (q'-2k+1)  \le \Sigma(I), \\
   &  \Sigma(I)  \le \frac{\eps t}{k}\cdot (q'+2k - 1) \le q + 2\eps t,
    \end{align*}
what finishes the proof.
\end{proof}
    %If $Y$ is a good solution, then
  %  we claim that after multiplying this by $\frac{\eps t}{2k}$ factor we do not lose much.
  We apply Lemma~\ref{rounding-numbers-d} with
    $\{z_1,\ldots, z_k\} = Y$, $q = t/2$ and $k$ and $\eps$ as in the statement.
 %   Moreover, we know the solution is within range $[t/2,t]$, so the assumption of
   % Lemma~\ref{rounding-numbers-d} is true. Lemma~\ref{rounding-numbers-d}
   It guarantees that:

    \begin{displaymath}
        (1-\eps) \Sigma(Y) \le \frac{\eps t}{2k} \Sigma(Y')
        .
    \end{displaymath}

    And finally, $(1-\eps) \Sigma(Y) \ge \Sigma(Y) - \eps t$ (because we are only
    interested in solutions smaller than $t$). So if $Y$ is a optimal solution,
    then an exact algorithm after rounding would return something greater or
    equal $\Sigma(Y) - \eps t$.

    Conversely, it can turn out that an exact algorithm would find something with
    a sum greater than $q$ (this is where we can violate the hard constraint).
    We need to bound it as well (because the definition of \oracle requires that). Note,
    that analogous argument proves it. Namely, the solution can consist of at most
    $k$ items and each of them lose only $\Oh(\eps t/k)$. Moreover, exact oracle gave
    us only the solution that its rounded version sums up to exactly $t'$.
    Formally, we prove it again with Lemma~\ref{rounding-numbers-d} with the
    same parameters as before. By dividing both sides by $(1-\eps)$ we know that:

    \begin{displaymath}
        \sum^k_{i=1} \floor{\frac{2 k z_i}{t \eps}} = \floor{k}\cdot\eps.
    \end{displaymath}

    Once again, we can use Lemma~\ref{rounding-numbers-d} with the same
    parameters (we divided both sides by $(1-\eps) > 0$):

    \begin{displaymath}
        \sum_{i=1}^k x_i \le 
        \left(\frac{1}{1-\eps}\right) \frac{\eps t}{2k}\sum_{i=1}^k \floor{\frac{2 k z_i}{t \eps}}.
    \end{displaymath}
The right side satisfies:

    \begin{displaymath}
        \left(\frac{1}{1-\eps}\right) \frac{\eps t}{2k}\sum_{i=1}^k \floor{\frac{2 k z_i}{t \eps}}
        = \left(\frac{1}{1-\eps}\right) \frac{\eps t}{2k}
        \floor{\frac{2k}{\eps}} \le \frac{1}{1-\eps} t < (1+2\eps) t.
    \end{displaymath}

    The constant before $\eps$ does not change much since we only need
    \oracleO (we can always rescale the approximation
        factor by setting $\eps' = \eps/2$ at the beginning).    

The main obstacle with returning a solution that obeys the capacity constraint
comes from the above lemma. If we could provide a reduction from an exact algorithm without widening the interval $[q-\eps t, q + \eps t]$,
this would automatically entail a strong approximation for \ssum.
This seems unlikely due to conditional hardness result for a strong subquadratic approximation for \ssum~\cite{bringmann-com}.

At the end, we  need to prove, that an \oracle gives us the
correct solution for \weak \ssum. 

\begin{lemma}
    \label{from-oracle-to-ssum}
    Given an $(\frac \eps 6,t$)-membership-oracle of $\subsums{Z}{t}$, we can read the answer to the \weak \ssum in time $\Ot(\frac{1}{\eps})$.
\end{lemma}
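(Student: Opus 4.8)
The plan is to probe the oracle on a uniform integer grid covering $[0,t]$ and return the largest subset sum it exhibits. Write $\eps' = \frac{\eps}{6}$ for the oracle's precision and set $p := \lfloor \eps' t \rfloor$; assume henceforth $\eps t \ge 6$ so that $p \ge 1$ (the complementary case $t < 6/\eps$ is handled analogously by querying every integer in $[0,t]$, still $\Oh(1/\eps)$ queries). For $j = 0,1,\dots,\lceil t/p\rceil$ query the oracle at $q_j := jp$. Since $0 \in \subsums{Z}{t}$ and $0 \in [q_0 - \eps' t,\, q_0 + \eps' t]$, the query at $q_0$ answers \textbf{yes}, so $j^\star := \max\{\, j : \text{the oracle answers \textbf{yes} at } q_j \,\}$ is well defined. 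Let $x$ be the witness the oracle returns for $q_{j^\star}$; by the two conditions of Definition~\ref{mem-oracle}, $x \in \subsums{Z}{t}$ and $x \in [q_{j^\star} - 2\eps' t,\, q_{j^\star} + 2\eps' t]$. Output $Z^H := x$ together with the subset read off from the witness. The procedure issues $\lceil t/p\rceil + 1 = \Oh(1/\eps)$ queries, each costing $\Ot(1)$, hence runs in $\Ot(1/\eps)$ time.

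It remains to check the two inequalities of Definition~\ref{def:weak-ssum}. The upper bound is immediate: $\subsums{Z}{t} \subseteq [0,t]$, so $Z^H = x \le t < (1+\eps)t$. For the lower bound, let $Z^*$ be the optimum and set $j_0 := \lfloor Z^*/p\rfloor$ (note $j_0 \le \lceil t/p\rceil$ since $Z^* \le t$). Then $q_{j_0} \le Z^* < q_{j_0} + p$, and because $p \le \eps' t$ this gives $Z^* \in [q_{j_0} - \eps' t,\, q_{j_0} + \eps' t]$, so by the first condition the oracle answers \textbf{yes} at $q_{j_0}$; therefore $j^\star \ge j_0$ and $q_{j^\star} \ge q_{j_0} > Z^* - p$. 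Combining with the witness guarantee, $Z^H = x \ge q_{j^\star} - 2\eps' t > Z^* - p - 2\eps' t \ge Z^* - 3\eps' t = Z^* - \frac{\eps t}{2}$. Finally, by Lemma~\ref{lem:large-opt} we may assume $Z^* \ge \frac{t}{2}$, whence $\frac{\eps t}{2} \le \eps Z^*$ and thus $Z^H \ge (1-\eps)Z^*$, as desired.

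The whole argument is just careful bookkeeping of additive slacks, and I expect no genuine obstacle. The point worth getting right is the chain of constant-factor losses: the grid step contributes $p \le \eps' t$, the second condition of the membership oracle contributes a further $2\eps' t$, and absorbing the resulting $3\eps' t = \frac{\eps t}{2}$ into a multiplicative $(1-\eps)$ error needs exactly the bound $Z^* \ge \frac{t}{2}$ from Lemma~\ref{lem:large-opt}; this is precisely what forces the hypothesis to require precision $\frac{\eps}{6}$. A binary search for $j^\star$ is tempting but unavailable, since the grid points answering \textbf{yes} need not form a prefix; the linear scan over $\Oh(1/\eps)$ points is within the stated budget anyway.
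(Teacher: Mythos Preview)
Your proof is correct and follows essentially the same approach as the paper: query the oracle on a uniform grid of step $\Theta(\eps t/6)$ over $[0,t]$, take the largest witnessed value, and use Lemma~\ref{lem:large-opt} to convert the additive loss $3\eps' t = \eps t/2$ into the multiplicative factor $(1-\eps)$. Your write-up is in fact more careful than the paper's---you make the integer step $p=\lfloor \eps' t\rfloor$ explicit, handle the edge case $\eps t<6$, and track the chain of slacks precisely; the one minor divergence is that you rely on the witness lying in $\subsums{Z}{t}$ (hence $\le t$) for the upper bound, whereas the paper's proof allows the returned value to overshoot into $(t,(1+\eps/2)t]$, but either reading of Definition~\ref{mem-oracle} yields $Z^H<(1+\eps)t$.
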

\begin{proof}
We query the oracle for $q = i \cdot \frac{\eps t}{6}$ for $i=0,\dots,\frac 6 \eps$.
Each query takes time $\Ot(1)$ and if the interval $[q- \frac{\eps t}{6}, q+ \frac{\eps t}{6}]$ contains an $x \in \subsums{Z}{t}$, then the oracle returns an element within $[x- \frac{\eps t}{2}, x+  \frac{\eps t}{2}]$.
If $\text{OPT} < (1-\frac\eps 2)t$, then the oracle will return a witness within $(\text{OPT}- \frac{\eps t}{2}, \text{OPT}]$.
Otherwise the witness might belong to $(t, (1+\frac \eps 2)t]$.

By taking advantage of Lemma~\ref{lem:large-opt}, we can assume that $\text{OPT} \ge t/2$,
therefore the relative error gets
bounded with respect to $\text{OPT}$.
\end{proof}
\section{The \weak algorithm for \ssum}
\label{fptas-ssum}

\subsection{Large Items}

We will use 
Theorem~\ref{thm:bringmann}  to compute $\subsums{\zzlarge}{t}$ on a large
instance. On that instance, this algorithm is more efficient than  \kellerer
because one can round items less aggressively.

\begin{lemma}[Algorithm for Large Items]
    \label{lem:large-algorithm}
    Given a large instance $(\zzlarge,t)$ of \ssum (i.e., all items are greater
    than $\gamma t$), we can construct an \oracle of $\subsums{\zzlarge}{t}$ in
    randomized $\Ot(n+\frac{1}{\gamma \eps})$ time with a constant probability
    of success.
\end{lemma}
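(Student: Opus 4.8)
The plan is to reduce directly to the exact pseudo-polynomial algorithm of Bringmann (Corollary~\ref{cor:bringmann-all}) via the rounding machinery of Lemma~\ref{exact-approx}, exploiting the fact that a large instance has a tiny bound on the number of items in any feasible solution. First I would observe that every item of $\zzlarge$ has size at least $\gamma t$, so any subset summing to at most $t$ contains at most $k := \lfloor 1/\gamma \rfloor$ items; in other words $\subsums{\zzlarge}{t} = \subsums{\zzlarge}{t}_k$ with $k = \Oh(1/\gamma)$. This is the key structural gain over \kellerer: the number of items participating in a solution is bounded by $1/\gamma$ rather than by $n$, so the rounding in Lemma~\ref{exact-approx} can be done with a coarser granularity.

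Next I would invoke Lemma~\ref{exact-approx} with this value of $k$. That lemma takes an exact algorithm producing $\subsums{Z}{t}_k$ running in time $T(n,t)$ and yields an \oracle of $\subsums{Z}{t}_k$ in time $\Ot(n + T(n, k/\eps))$. Plugging in Bringmann's algorithm (Corollary~\ref{cor:bringmann-all}), which computes $\subsums{Z}{t'}$ — and a fortiori its capped, size-restricted subset $\subsums{Z}{t'}_k$ — in randomized time $\Ot(n + t')$ with constant success probability, we get $T(n, t') = \Ot(n + t')$. With $t' = k/\eps = \Oh\!\big(\tfrac{1}{\gamma\eps}\big)$ this gives a running time of $\Ot\!\big(n + \tfrac{1}{\gamma\eps}\big)$, which is exactly the claimed bound. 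The randomized, constant-probability nature of the oracle is inherited directly from Bringmann's algorithm, and the rounding-error analysis (that the resulting structure genuinely satisfies the two conditions of Definition~\ref{mem-oracle}) is already carried out inside Lemma~\ref{exact-approx}, so nothing new is needed there.

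One technical point I would check carefully: Lemma~\ref{exact-approx} is phrased for an algorithm that outputs $\subsums{Z}{t}_k$, whereas Bringmann's algorithm outputs the unrestricted $\subsums{Z}{t}$. Since $\subsums{Z}{t}_k \subseteq \subsums{Z}{t}$, feeding the unrestricted output into the oracle construction is harmless: it can only report ``yes'' on more queries, but the relevant elements — those realized by at most $k$ items, which is all of them here since every item is $\geq \gamma t$ — are all present, and the upper bound in condition~2 of Definition~\ref{mem-oracle} still holds because every reported element is a genuine subset sum. I would also use Lemma~\ref{lem:large-opt} to assume $\text{OPT} \ge t/2$ so that the additive $\pm\eps t$ error translates into a multiplicative $(1\pm\Oh(\eps))$ guarantee, matching the normalization used elsewhere.

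The main obstacle — really the only non-routine part — is making sure the bound $k = \Oh(1/\gamma)$ is used consistently and that one does not accidentally pay an extra factor of $n$ or $1/\eps$ somewhere: the whole point of the lemma is that replacing the naive $k \le n$ bound (which would give Bringmann run on target $n/\eps$, i.e.\ time $\Ot(n/\eps)$) by $k \le 1/\gamma$ is what produces the improved $\Ot(n + \tfrac{1}{\gamma\eps})$ running time that the trade-off in Section~\ref{approach} relies on. Everything else is a direct composition of Corollary~\ref{cor:bringmann-all}, Lemma~\ref{exact-approx}, and Lemma~\ref{lem:large-opt}.
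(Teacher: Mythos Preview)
Your proposal is correct and follows essentially the same route as the paper: bound the solution size by $k = \Oh(1/\gamma)$ since every item exceeds $\gamma t$, then plug Bringmann's $\Ot(n+t)$ algorithm into Lemma~\ref{exact-approx} with this $k$ to obtain the oracle in time $\Ot(n + k/\eps) = \Ot(n + 1/(\gamma\eps))$. Your additional remarks on the $\subsums{Z}{t}_k$ versus $\subsums{Z}{t}$ distinction and on Lemma~\ref{lem:large-opt} are harmless extra care but not needed here, since in fact $\subsums{\zzlarge}{t} = \subsums{\zzlarge}{t}_k$ outright, and the $\text{OPT}\ge t/2$ assumption is used later (in Lemma~\ref{from-oracle-to-ssum}) rather than in this oracle construction.
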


\begin{proof}
    We use \bringmann, namely Corollary~\ref{cor:bringmann-all}, that solves
    the \ssum problem exactly. 
    Since all elements are
    greater than $\gamma t$, any subset that sums up to
    at most $t$ must contain at most $\frac{1}{\gamma}$ items. The parameter $k$
    in Lemma~\ref{exact-approx} is an upper bound on number of elements in the
    solution, hence we set  $k = \frac{1}{\gamma}$. The \bringmann runs in time $\Ot(n+t)$ and
    Lemma~\ref{exact-approx} guarantees that we can build an \oracle
    in time $\Ot(n + k/\eps) = \Ot(n + 1/(\gamma \eps))$, which is what we needed.
    
 %   The randomization comes from the fact, that the subroutine is randomized.
\end{proof}

\subsection{Small Items}
\label{smallitems}

Now we need an algorithm that solves the problem for small items. As
mentioned in Section~\ref{sec:techniques} we will consider two cases depending
on the density of instance.
The initial \ssum instance consists of $n$ elements. The $m$ is the
number of elements in the small instance and let $m' = \Oh(m\log n)$
be as in Lemma~\ref{lem:partition-small-large}.
For now we will assume, that the set of
elements is $(\eps t/m')$-distinct (we will deal with multiplicities 2 in
Lemma~\ref{lem:multiplicity-small}).

Let $q = \eps t/m'$
be the rounding parameter (the value by which we divide) and $\ell = \gamma m' / \eps =  \Oh(\frac{\gamma m
\log{n}}{\eps})$ be the upper bound on
items' sizes in the small instance after rounding. Parameter $L = \Oh(\Sigma(S)\cdot
\frac{l}{m^2})$ describes the boundaries of Theorem~\ref{galil-algorithm}. We deliberately use $\Oh$ notation to hide constant
factors (note that \galil requires that $m > 1000\cdot\sqrt{l}\log{l}$).

\begin{lemma}[Small items and $m^2 < \ell\log^2{\ell}$]
    \label{small-sparse}
    Suppose we are given an instance $(\zzsmall,t)$ of \ssum (i.e., all items are smaller than
    $\gamma t$) with size satisfying $m^2 < \ell \log^2{\ell}$.
    Then we can compute
    \oracle of $\subsums{\zzsmall}{t}$ in randomized $\Ot(m+\frac{\gamma}{\eps^2})$ time.
\end{lemma}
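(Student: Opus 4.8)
The plan is to reduce this case to \bringmann (Corollary~\ref{cor:bringmann-all}) via the generic rounding reduction of Lemma~\ref{exact-approx}, exactly as in the large-items case, but with a bound on the solution size coming from the density condition rather than from a lower bound on item values. First I would note that the small instance $\zzsmall$ lives in $[0, \gamma t)$, and after the preprocessing of Lemma~\ref{lem:partition-small-large} we may assume it is $(\eps t/m', 2)$-distinct with $m = \Oh(|\zzsmall|)$ and $m' = \Oh(m\log n)$. The key observation is a bound on the number of items any feasible solution can contain: since the items are $(\eps t/m')$-distinct (after handling multiplicities, deferred to Lemma~\ref{lem:multiplicity-small}), a set of $j$ items has sum at least roughly $\binom{j}{2}\cdot\frac{\eps t}{m'}$, so a solution summing to at most $t$ has $j = \Oh(\sqrt{m'/\eps})$ items. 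Actually the cleaner route is: the rounded items $v'_i = \floor{v_i/q}$ with $q = \eps t/m'$ are distinct integers in $[0,\ell]$, so $j$ of them sum to at least $\binom{j}{2}$; requiring this to be at most $t' = \floor{m'/\eps}$ gives $j = \Oh(\sqrt{t'}) = \Oh(\sqrt{m'/\eps}) = \Oh(\sqrt{m\log n/\eps})$. Using the density hypothesis $m^2 < \ell\log^2\ell$ and $\ell = \Theta(\gamma m\log n/\eps)$, one gets $m = \Ot(\gamma/\eps)$, hence the solution-size bound simplifies; I would carry the bound $k = \Ot(\sqrt{\gamma}/\eps)$ through.

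Then I apply Lemma~\ref{exact-approx} with this $k$: running the exact algorithm of Corollary~\ref{cor:bringmann-all} on the rounded instance takes time $\Ot(m + t') = \Ot(m + m'/\eps) = \Ot(m + \gamma/\eps^2)$ (using $m = \Ot(\gamma/\eps)$ so $m' /\eps = \Ot(\gamma\log n/\eps^2)$), and Lemma~\ref{exact-approx} then produces an \oracle of $\subsums{\zzsmall}{t}_k$ in time $\Ot(m + T(m, k/\eps))$. Here one must check that $k/\eps = \Ot(\gamma/\eps^2)$ does not exceed $t'$ up to logs, which holds since $k = \Ot(\sqrt{\gamma}/\eps) \le \Ot(\gamma/\eps)$; so the bottleneck term is $\Ot(\gamma/\eps^2)$ as claimed, and the additive error is $\pm\eps t$ by Lemma~\ref{exact-approx}. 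Since every feasible solution to the original instance has at most $k$ items (by the distinctness argument), the oracle for $\subsums{\zzsmall}{t}_k$ is in fact an oracle for $\subsums{\zzsmall}{t}$, which is what is needed.

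The main obstacle I expect is bookkeeping the chain of inequalities that ties together $m$, $m'$, $\ell$, $L$, $q$, the density condition $m^2 < \ell\log^2\ell$, and the solution-size bound $k$, making sure all the $\polylog$ factors are absorbed consistently and that the rounding parameter $q = \eps t/m'$ genuinely yields integers bounded by $\ell$ and an additive error $\pm\eps t$ after scaling back. A secondary technicality is that Lemma~\ref{exact-approx} is stated for sets, so I must invoke Lemma~\ref{lem:multiplicity-small} (multiplicities up to $2$) to legitimately treat $\zzsmall$ as essentially a set before applying it; I would state that this reduction costs only constant factors. Everything else is a direct instantiation of machinery already developed: the novelty in this case is solely the observation that \emph{sparse} small instances still force short solutions, which is exactly what makes \bringmann efficient here.
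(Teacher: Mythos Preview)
Your overall strategy matches the paper's: derive $m = \Ot(\gamma/\eps)$ from the density hypothesis $m^2 < \ell\log^2\ell$ together with $\ell = \Ot(m\gamma/\eps)$, then invoke \bringmann through Lemma~\ref{exact-approx}. However, the paper's execution is much simpler, and your detour through distinctness contains a concrete error.

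The paper just sets $k = m$ in Lemma~\ref{exact-approx}: a subset of $\zzsmall$ trivially has at most $m$ items. Then $t' = k/\eps = m/\eps = \Ot(\gamma/\eps^2)$, and Corollary~\ref{cor:bringmann-all} runs in $\Ot(m + \gamma/\eps^2)$. That is the entire proof; no distinctness, no $m'$, no multiplicity handling is needed here (those only matter later in the dense case, Lemma~\ref{lem:small-algorithm}).

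Your distinctness argument is both unnecessary and incorrectly concluded. From it you get $k = \Ot(\sqrt{\gamma}/\eps)$ and then assert $\sqrt{\gamma}/\eps \le \gamma/\eps$, i.e.\ $\sqrt{\gamma} \le \gamma$; this is false in the relevant regime $\gamma < 1$ (ultimately $\gamma = \eps^{2/3}$). So your route only yields $k/\eps = \Ot(\sqrt{\gamma}/\eps^2)$, which is \emph{larger} than the claimed $\Ot(\gamma/\eps^2)$. The statement is still true, but only because the trivial bound $k \le m$ already gives $k/\eps \le m/\eps = \Ot(\gamma/\eps^2)$ --- exactly the paper's one-line argument. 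You also conflate two roundings: the preprocessing rounding with $q = \eps t/m'$ and the rounding inside Lemma~\ref{exact-approx} (where $t' = \lfloor k/\eps\rfloor$, not $m'/\eps$); only the latter is needed for this lemma.
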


\begin{proof}
    In here we need to deal with the case, where \textit{small} instance is sparse.
    So just as in the proof of Lemma~\ref{lem:large-algorithm}, we can use \bringmann.

    We will use the reduction from exact to \weak algorithm for \ssum
    from Lemma~\ref{exact-approx}. We set $m$ as the maximal number of items in the
    solution,
    as there are at most $m$ small items. Recall that
    $\ell$ is $\Ot(m\gamma/\eps)$. This gives us  $m^2 = \Ot(\ell) = \Ot(\frac{m
    \gamma}{\eps})$. After dividing both sides by $m$ we obtain $m =
    \Ot(\frac{\gamma}{\eps})$.

    Combining Corollary~\ref{cor:bringmann-all} and Lemma~\ref{exact-approx} allows us to construct an
    \oracle in $\Ot(m+T(m, m/\eps)) =\Ot(m + \frac{\gamma}{\eps^2})$ randomized time.
\end{proof}

Now we have to handle the harder $m^2 \ge \ell\log^2{\ell}$ case. In this situation we
again consider two cases. The \galil allows only to ask queries
in the range $(L,\,\Sigma(S)-L)$ where $L = \Oh(\Sigma(S)\cdot
\frac{l}{m^2})$. In the next lemma we take care of ranges $[0,
L]$ and $[\Sigma(S) -L, \Sigma(S)]$.
We focus on the range $[0,L]$, because the sums within
$[\Sigma(S)-L, \Sigma(S)]$ are symmetric to $[0,L]$.

\begin{lemma}[Small items, range $(0,L)$]
    \label{small-range-bringmann}
    Given an  instance $(\zzsmall,t)$ of \ssum, such that $|\zzsmall| = m$ and the items' sizes are at most $\gamma t$, we can compute an \oracle for $\subsums{\zzsmall, L}$ in time $\Ot(m+\frac{m\gamma^2}{\eps^2})$.
\end{lemma}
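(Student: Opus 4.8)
The plan is to exploit that the window $[0,L]$ we must cover is short, so that we can afford to compute the relevant subset sums \emph{exactly} with \bringmann. By the definition of $L$ in Theorem~\ref{galil-algorithm}, $L = \Oh(\Sigma(\zzsmall)\cdot \ell/m^2)$, where $\ell = \Oh(\gamma m\log n/\eps)$ is the bound on the (rescaled) small item sizes coming from Lemma~\ref{lem:partition-small-large}; since $\Sigma(\zzsmall)\le m\ell$, this gives $L = \Ot(\ell^2/m) = \Ot(m\gamma^2/\eps^2)$. So an $\Ot(n+t)$-time exact algorithm run with target $L$ stays within the claimed bound.

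Concretely, I would apply the exact algorithm of Corollary~\ref{cor:bringmann-all} to $\zzsmall$ with target $L$; this outputs the set $\subsums{\zzsmall, L}$ as a bit array, correctly with constant probability, in time $\Ot(m + L) = \Ot(m + m\gamma^2/\eps^2)$. An exactly known set is, trivially, an $(\eps,t)$-membership-oracle of itself: to answer a query $q$ we report whether the array has an entry in $[q-\eps t,\,q+\eps t]$ and, if so, return one such entry as the witness, all in $\Ot(1)$ time after precomputing prefix sums, so both conditions of Definition~\ref{mem-oracle} hold (a subset realizing a given sum is reconstructed afterwards, if needed, by the usual backtracking). This settles the low range $[0,L]$.

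For the symmetric range $[\Sigma(\zzsmall)-L,\,\Sigma(\zzsmall)]$, recall that $s$ is a subset sum of $\zzsmall$ if and only if $\Sigma(\zzsmall)-s$ is, by passing to the complementary subset; hence a query $q$ in this upper band is answered by querying the structure built above at $\Sigma(\zzsmall)-q\in[0,L]$ and returning the complement of whatever subset it reports, with no change in running time. The only step that calls for care is the estimate $L=\Ot(m\gamma^2/\eps^2)$: one has to track through the preprocessing that the rescaled small items stay bounded by $\ell=\Oh(\gamma m\log n/\eps)$, so that $L=\Oh(\Sigma(\zzsmall)\ell/m^2)=\Ot(\ell^2/m)$ really is this small. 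Everything else is a direct application of Corollary~\ref{cor:bringmann-all} together with the observation that no approximation is needed on these two short end windows.
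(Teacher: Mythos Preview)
Your high-level plan is the same as the paper's: the window $[0,L]$ is short, so run \bringmann exactly there. But there is a genuine gap in the execution, caused by mixing two scales.

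You bound $L$ via $\Sigma(\zzsmall)\le m\ell$ with $\ell=\Oh(\gamma m\log n/\eps)$. That $\ell$ is the bound on item sizes \emph{after} dividing by the rounding parameter $q=\eps t/m'$; the items in $\zzsmall$ as given in the lemma are the original ones, bounded only by $\gamma t$. In the original scale one has $\Sigma(\zzsmall)\le m\gamma t$, and the marginal window that Galil leaves uncovered is $L=\Theta(q\cdot L')=\Theta\big(\gamma^2 t/\eps\big)$ up to logs. Running Corollary~\ref{cor:bringmann-all} on the unrounded $\zzsmall$ with this target costs $\Ot(m+\gamma^2 t/\eps)$, which depends on $t$ and is not bounded by $\Ot(m+m\gamma^2/\eps^2)$. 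So the step ``apply Bringmann to $\zzsmall$ with target $L$ in time $\Ot(m+L)=\Ot(m+m\gamma^2/\eps^2)$'' fails: either the target is the rounded-scale $L'$ (and then you are not looking at subset sums of $\zzsmall$), or it is the original-scale $L$ (and then the time bound is wrong).

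The fix is exactly what the paper does: first divide all items by $q=\eps t/m'$ to obtain $\zzsmall'$ with entries at most $\ell$, so that $L'=\Oh(\Sigma(\zzsmall')\ell/m^2)=\Ot(\ell^2/m)=\Ot(m\gamma^2/\eps^2)$; run \bringmann on $(\zzsmall',L')$ in time $\Ot(m+L')$; and then invoke the analysis of Lemma~\ref{exact-approx} to turn the exact set $\subsums{\zzsmall',L'}$ into an $(\eps,t)$-membership-oracle for $\subsums{\zzsmall,L}$. In particular, your remark that ``no approximation is needed on these two short end windows'' is not correct: the rounding step is what makes the target independent of $t$, and it introduces an $\Oh(\eps t)$ additive error, which is precisely why the output is an oracle rather than the exact set. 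Your symmetry argument for the upper window $[\Sigma(\zzsmall)-L,\Sigma(\zzsmall)]$ is fine and is also how the paper handles it.
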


\begin{proof}
    We round down items with  rounding parameter $q = \eps t/m' = \Omega(\frac{\eps t}{m \log{n}})$ and denote
    the set of rounded items as $\zzsmall'$.
    After scaling down we have $L' = \Sigma(\zzsmall')\cdot \frac{c \ell}{m^2}$ (note that we only replace $\Sigma(\zzsmall)$ with $\Sigma(\zzsmall')$ and $\ell$ remains the same).
    Recall that $\ell = \Oh(\frac{m
\gamma \log{n}}{\eps})$.

    The total sum of items in $\zzsmall'$ is smaller or equal to $\ell m$
    (because there are $m$ elements of size at most~$\ell$). Hence ,$L'=
    \Oh(\ell^2/m) = \Oh(\frac{\gamma^2 m \log^2{n}}{\eps^2})$.
    Therefore \bringmann runs in time $\Ot(m + L') = \Ot(m+\frac{m\gamma^2}{\eps^2})$. Combining it with the analysis of the Lemma~\ref{exact-approx}
    gives us an \oracle for $\subsums{\zzsmall, L}$.
\end{proof}

\subsection{Applying Additive Combinatorics}

Before we proceed forward, we need to present the full theorem
of \citet[Theorem 6.1]{galil} (in Section~\ref{galil-box} we presented
only a short version to keep it free from technicalities). We need a full running time
complexity (with dependence on $\ell,m,\Sigma(S)$). 
%We also need a mechanism to retrieve a solution.
We copied it in here with a slight
change of notation (e.g., \cite{galil} use $S_A$ but we use
notation from \cite{koiliaris-soda} paper of $\Sigma(A)$).

\begin{theorem}[Theorem 6.1 from \cite{galil}]
    Let $A$ be a set of $m$ different numbers in interval $(0,\ell]$ such that 
    \begin{displaymath}
        m > 1000 \cdot\ell^{0.5} \log_2{\ell};
    \end{displaymath}
    then we can build in $\Oh\left(m+\left( (\ell/m)\log{l} \right)^2 +
    \frac{\Sigma(A) }{m^2} \ell^{0.5} \log^2{\ell} \right)$
    preprocessing time a structure which allows us to solve the \ssum problem
    for any given integer $N$ in the interval $(L, \Sigma(A) - L)$. Solving
    means finding a subset $B \subseteq A$, such that $\Sigma(B) \le N$ and
    there is no subset $C \subseteq A$ such that $\Sigma(B) < \Sigma(C) \le N$.
    An optimal subset $B$ is build in $\Oh(\log{\ell})$ time per target number and
    is listed in time $\Oh(|B|)$. For finding the optimal sum $\Sigma(B)$ only,
    the preprocessing time is $\Oh\left(m + \left( (\ell/m)\log{\ell} \right)^2
    \right)$ and only constant time is needed per target number.
\end{theorem}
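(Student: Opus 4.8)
The plan is to derive this from a structural result about subset sums of dense sets, together with an efficient algorithmic realization of that structure. The first ingredient is a \emph{structural claim}: if $A$ is a set of $m$ distinct integers in $(0,\ell]$ with $m \gg \sqrt{\ell}\log\ell$, then there is a positive integer $d$ (bounded by about $\ell/m$ up to logarithmic factors, and equal to $1$ in the generic case) such that every multiple of $d$ lying in the central band $(L,\Sigma(A)-L)$ is a subset sum of $A$, where $L = \Theta\!\big(\Sigma(A)\,\ell\,\log\ell / m^2\big)$. Granted this, a value query for a target $N$ in the band is trivial: the best achievable sum is the largest multiple of $d$ not exceeding $N$, so once $d$ and $\Sigma(A)$ are known, each such query costs $\Oh(1)$.

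To prove the structural claim I would split $A$ by scale into $\Oh(\log\ell)$ classes $A_i = A\cap(\ell/2^{i+1},\,\ell/2^i]$ of sizes $m_i$, and use the elementary fact (a S\'ark\"ozy/Freiman-type statement, provable by iterated sumsets and pigeonhole) that a set of $k$ numbers each at most $M$ with $k \gtrsim \sqrt{M}$ has subset sums containing an arithmetic progression of common difference at most $M^{1/2+o(1)}$ and length $\Omega(kM)$ around $\Sigma/2$. Taking the Minkowski sum of these per-class progressions fills the band; the unreachable margin $L$ is governed by the coarsest, granularity-heavy class, and summing the contributions of type $(\ell/2^i)^2/m_i$ with the densest class dominating yields the stated $L = \Theta(\Sigma(A)\ell\log\ell/m^2)$. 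The common difference $d$ and the residue class are then pinned down by the smallest scales.

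For the algorithm the preprocessing runs as follows. Reading, sorting, and bucketing $A$ by scale is $\Oh(m)$. On the ``core'' consisting of the few coarse-scale elements -- whose subset sums, after dividing out obvious common factors, live in a range of size $\Oh\big(((\ell/m)\log\ell)^2\big)$ -- we run an FFT-based exact subset-sum computation; this fixes $d$ and the exact fine structure near the two ends of the band, and accounts for the $((\ell/m)\log\ell)^2$ term. The remaining term $\Theta\big((\Sigma(A)/m^2)\sqrt\ell\log^2\ell\big)$ comes from certifying the progression out to distance $L$ from the boundary and from precomputing, per scale class, a sorted structure together with a short ``representative'' subset realizing the endpoints of the sub-interval that class must cover. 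Reconstructing an optimal $B$ for a query $N$ is then done greedily from coarse to fine scales: at scale $i$ one selects $\Oh(1)$ amortized elements of $A_i$ to steer the residual target into the interval fillable by the finer classes, for a total of $\Oh(\log\ell)$ time, after which $B$ is listed in $\Oh(|B|)$.

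The main obstacle is the structural claim with the \emph{sharp} margin $L = \Oh(\Sigma(A)\ell\log\ell/m^2)$ and the correct modulus $d$: one must show the central band is genuinely gap-free modulo $d$, which needs the additive-combinatorial input (long arithmetic progressions inside subset sums) applied scale by scale, together with careful bookkeeping of how the per-scale granularities accumulate into $L$. The algorithmic side is comparatively routine once the structure is understood; the only real subtlety is keeping reconstruction at $\Oh(\log\ell)$ per query rather than $\Oh(m)$, which is exactly what forces the representative-subset precomputation and hence the stated preprocessing bound.
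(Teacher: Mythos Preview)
The paper does not prove this theorem. It is quoted verbatim (with minor notational changes) from Galil and Margalit~\cite{galil} and used throughout Section~\ref{smallitems} as a black box; the text immediately preceding the statement says explicitly ``We copied it in here with a slight change of notation.'' There is therefore no proof in the present paper to compare your proposal against.

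For what it is worth, your sketch is in the right spirit for the actual Galil--Margalit argument: the structural core is indeed an additive-combinatorial statement (of S\'ark\"ozy/Freiman type) that subset sums of a sufficiently dense set contain a long arithmetic progression, and the algorithm exploits this so that queries in the central band reduce to rounding to the nearest multiple of the common difference. A couple of constants in your write-up drift from the cited bounds --- you give $L = \Theta(\Sigma(A)\,\ell\log\ell/m^2)$, whereas the paper records that~\cite{galil} originally had $L = \Theta(\Sigma(A)\sqrt{\ell}\log\ell/m)$ and the follow-up~\cite{galil-icalp} sharpened it to $L = \Oh(\Sigma(A)\,\ell/m^2)$ with no logarithmic factor --- but a detailed assessment would have to be made against~\cite{galil,galil-icalp}, not against the present paper.
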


In \cite{galil} authors defined $L := \frac{100\cdot\Sigma(A) \ell^{0.5}
\log_2{\ell}}{m}$, however in the next \cite{galil-icalp} the authors improved
it to $L := \Oh(\Sigma(A) \frac{\ell}{m^2})$ without any damage on running time~\cite{galil-comunication}. For both
of these possible choices of L we  obtain a subquadratic algorithm. 
We will use the improved version~\cite{galil-icalp} because it provides a
better running time.

\begin{lemma}[Small items, range $(L,\,\Sigma(S)-L)$]
    \label{lem:small-algorithm}
    Given a small instance $(\zzsmall, t)$  of \ssum (i.e., all items are smaller
    than $\gamma t$) such that $\zzsmall$ is $(\eps t/m')$-distinct (where $m' = \Oh(m \log{n})$), we can compute an
    \oracle of $\subsums{\zzsmall}{t} \cap (L,\, \Sigma(\zzsmall)-L)$
    in time
    $\Ot(n + \left( \frac{\gamma}{\eps} \right)^2 + \frac{\gamma}{\eps}\cdot \left(
    \frac{\gamma n}{\eps} \right)^{0.5})$.

\end{lemma}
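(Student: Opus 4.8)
The plan is to reduce the dense case to the full Galil--Margalit structure (Theorem~6.1 from~\cite{galil}) through the rounding mechanism of Lemma~\ref{exact-approx}; the real work is in verifying that the rounded instance is dense enough for the additive-combinatorics machinery and that the resulting preprocessing cost stays within the claimed bound.

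First I would round every item of $\zzsmall$ down to a multiple of $q = \eps t/m'$ and rescale by $q$, obtaining a set $\zzsmall'$ of integers. Since $\zzsmall$ is $(\eps t/m')$-distinct, no two items share a length-$q$ interval, so $\zzsmall'$ consists of $m$ \emph{distinct} integers; discarding the at most one item that rounds down to $0$, we are left with $m$ distinct numbers in $(0,\ell]$, where $\ell = \gamma t/q = \gamma m'/\eps = \Oh(\gamma m\log n/\eps)$. As in Lemma~\ref{exact-approx}, any subset of $\zzsmall$ of size at most $m$ loses strictly less than $mq = m\eps t/m' \le \eps t$ under this rounding, so it suffices to build an exact subset-sum query structure for $\zzsmall'$ and absorb a $\pm\eps t$ slack at the very end.

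Next, we may assume the density condition $m > 1000\,\ell^{0.5}\log_2\ell$ of Theorem~6.1 from~\cite{galil} holds --- this is exactly (up to the constants hidden in $\ell$) the complement of the regime already handled by Lemma~\ref{small-sparse}, and otherwise the window $(L,\Sigma(\zzsmall)-L)$ becomes degenerate. Theorem~6.1 then builds, in preprocessing time $\Oh\big(m + ((\ell/m)\log\ell)^2 + \frac{\Sigma(\zzsmall')}{m^2}\ell^{0.5}\log^2\ell\big)$, a structure that for any target in $(L',\Sigma(\zzsmall')-L')$ returns the largest attainable subset sum not exceeding it in $\Oh(\log\ell)$ time, together with a witnessing subset in time $\Oh(|B|)$. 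I would bound the three terms using $m\le n$, the cancellation $\ell/m = \Oh(\gamma m'/(\eps m)) = \Ot(\gamma/\eps)$, and the crude estimate $\Sigma(\zzsmall') < \ell m$: the first term is $\Oh(n)$, the second is $((\ell/m)\log\ell)^2 = \Ot((\gamma/\eps)^2)$, and the third is at most $\frac{\ell^{1.5}}{m}\log^2\ell = \frac{\ell}{m}\cdot\ell^{0.5}\log^2\ell = \Ot\big(\frac{\gamma}{\eps}(\gamma n/\eps)^{0.5}\big)$, which together give the claimed running time.

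Finally I would wrap this into an \oracle for $\subsums{\zzsmall}{t}\cap(L,\Sigma(\zzsmall)-L)$ exactly as in Lemma~\ref{exact-approx}: on a query $q$ we pass the rescaled and suitably widened target to the structure, read off the largest attainable sum inside the window, translate it back by the factor $q$, and answer \textbf{yes}/\textbf{no}; the rounding supplies the $\pm\eps t$ slack required by Definition~\ref{mem-oracle}, and a witness is recovered via the $\Oh(|B|)$-time listing. Here $L$ is the rescaled image of $L' = \Oh(\Sigma(\zzsmall')\ell/m^2)$, and the $\Oh(mq)$ discrepancy between $q\,\Sigma(\zzsmall')$ and $\Sigma(\zzsmall)$ is absorbed into the hidden constant; the marginal windows $[0,L]$ and $[\Sigma(\zzsmall)-L,\Sigma(\zzsmall)]$ lie outside this lemma and are covered by Lemma~\ref{small-range-bringmann}. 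The main obstacle is the bound on the $\Sigma(\zzsmall')/m^2$ factor in Theorem~6.1's running time --- this is precisely where the cancellation $\ell/m = \Ot(\gamma/\eps)$ is used and where the extra $\frac{\gamma}{\eps}(\gamma n/\eps)^{0.5}$ term in the statement originates --- together with reconciling the density dichotomy with the exact hypothesis $m > 1000\,\ell^{0.5}\log_2\ell$ of Theorem~6.1.
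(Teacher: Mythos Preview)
Your proposal is correct and follows essentially the same route as the paper: round by $q=\eps t/m'$, note that $(\eps t/m')$-distinctness makes the rounded items distinct in $(0,\ell]$ with $\ell=\gamma m'/\eps$, invoke the full Galil--Margalit theorem, and bound the three preprocessing terms via $\ell/m=\Ot(\gamma/\eps)$ and $\Sigma(\zzsmall')\le\ell m$. The only cosmetic difference is that the paper precomputes $\Ot(m)$ bucketed queries whereas you answer oracle queries on-the-fly; since Galil answers each target in $\Oh(\log\ell)=\Ot(1)$ time, both are valid.
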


\begin{proof}
We round items to multiplicities of $q
    = \eps t / m'$. Precisely:

    \begin{displaymath}
        z_i' = \floor{\frac{z_i}{q}}, \;\;\;\;\;
        t'   = \floor{\frac{t}{q}} = \floor{\frac{m'}{\eps}}.
    \end{displaymath}

    We know that $z_i < \gamma t$. Therefore

    \begin{displaymath}
        z_i' \le \frac{z_i}{q} < \frac{\gamma t}{q} = \frac{\gamma m'}{\eps} = \ell.
    \end{displaymath}

    By the same
    inequalities as in the proof of Lemma~\ref{exact-approx} we know that if we compute
    $\subsums{\zzsmall',t'}$ and multiply all results by $q$, we obtain an \oracle
for $\subsums{\zzsmall,t}$.
    
    \paragraph{Checking conditions of the algorithm}

    Now we will check that we satisfy all assumptions of \galil on the rounded
    instance $\zzsmall'$. First note that $m^2 <
    \ell\log^2{\ell}$, $\ell$ is the upper bound on the items' sizes in $\zzsmall'$, and we know
    that all items in $\zzsmall'$ are distinct because we assumed that
    $\zzsmall$ is $(\eps t/m')$-distinct.

    \paragraph{Preprocessing}

    Next \galil constructs a data structure on the set of rounded items
    $\zzsmall'$. The preprocessing of \galil requires 
    $$\Oh\left( m+ \left( \ell/m
\log{\ell} \right)^2 + \frac{\Sigma(\zzsmall')}{m^2}\ell^{0.5} \log^2{\ell}
    \right)$$
    time. If we put it in terms of $m,\eps,t$ and hide polylogarithmic factors we see that
    preprocessing runs in:

    \begin{displaymath}
        \Ot\left(m + \left( \frac{\gamma}{\eps} \right)^2 + \frac{\gamma}{\eps} \left(
        \frac{\gamma m}{\eps} \right)^{0.5}\right)
    \end{displaymath}
    because $\Sigma(\zzsmall') \le \ell m$.

    \paragraph{Queries}

    With this data structure we need to compute a set $\eps$-close to
    $\subsums{\zzsmall,t} \cap (L,\,\Sigma(\zzsmall')-L)$.
   After scaling down we have $L'= \Ot\left(\Sigma(\zzsmall')\cdot\frac{\ell}{m^2}\right) = \Ot(\frac{\ell^2}{m}) = \Ot(\frac{\gamma^2
    m^2}{m\eps^2}) = \Ot(\frac{m\gamma^2}{\eps^2})$.

    Naively, one could run queries for all elements in range
    $(L',\Sigma(\zzsmall')-L')$ and check if there is a subset of $\zzsmall'$ that
    sums up to the query value. However this is too expensive.
    In order to deal with this issue, we take advantage of the fact that each query returns the closest set whose sum is smaller or equal to the query value.

    Since we have rounded down items with $q = \frac{\eps t}{m'}$, we only need to ask $\frac{\eps t}{q}
    = \Ot(m)$ queries in order to learn sufficient information.
    The queries will reveal if $\zzsmall$ contains at least one element in
    each range $[i\eps t, (i+1)\eps t)$, what matches the definition of the
    \oracle.

    \paragraph{Retrieving the solution}

    \galil can retrieve the solution in time $\Oh(\log{\ell})$. 

    This finalizes the construction of the \oracle. The running time is dominated
    by the preprocessing time.

\end{proof}

\subsection{Combining the Algorithms}

Now we will combine the algorithms for small items.

\begin{lemma}[Small Items]
    \label{small-algorithm}
    Given a $(\zzsmall,t)$ instance of \ssum (i.e., all elements in $\zzsmall$
    are smaller than $\gamma t$), such that the set $\zzsmall$ is $(\eps t/m)$-distinct,
    we can compute an
    \oracle of $\subsums{\zzsmall}{t}$
    in time
    $\Ot(m + \frac{\gamma}{\eps^2} + \frac{m\gamma^2}{\eps^2} )$ with high probability.
\end{lemma}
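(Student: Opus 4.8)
The plan is to dispatch on the density of the rounded small instance, matching the dichotomy that Lemmas~\ref{small-sparse}, \ref{small-range-bringmann} and~\ref{lem:small-algorithm} were designed for. Keep the notation of Section~\ref{smallitems}: $q=\eps t/m'$ is the rounding parameter, $\ell=\gamma m'/\eps$ the post-rounding size bound, and $m'=\Oh(m\log n)$. If $m^2<\ell\log^2\ell$, we are immediately done, since Lemma~\ref{small-sparse} produces an \oracle of $\subsums{\zzsmall}{t}$ in randomized time $\Ot(m+\gamma/\eps^2)$, which is within the claimed bound. So assume from now on that $m^2\ge\ell\log^2\ell$; this is precisely the regime where \galil is usable, and the hypothesis that $\zzsmall$ is $(\eps t/m)$-distinct is stronger than the $(\eps t/m')$-distinctness that Lemma~\ref{lem:small-algorithm} requires (since $m<m'$), so that lemma applies.

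In this regime I would cover $[0,\Sigma(\zzsmall)]$ by three value ranges and build an \oracle for each: the lower marginal $[0,L]$ via Lemma~\ref{small-range-bringmann}; the middle range $(L,\Sigma(\zzsmall)-L)$ via Lemma~\ref{lem:small-algorithm}; and the upper marginal $[\Sigma(\zzsmall)-L,\Sigma(\zzsmall)]$, which is symmetric to $[0,L]$ under complementation (a subset sums to $s$ there iff its complement sums to $\Sigma(\zzsmall)-s\in[0,L]$), so it reduces to another invocation of Lemma~\ref{small-range-bringmann} followed by complementation and restriction to values $\le t$. If $2L\ge\Sigma(\zzsmall)$ the two marginal oracles already cover $[0,\Sigma(\zzsmall)]$ and the middle step is skipped. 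To get a single \oracle of $\subsums{\zzsmall}{t}$, I merge the three by answering a query $q$ with the logical OR of the three answers (returning the witness of whichever sub-oracle says \textbf{yes}); the query time stays $\Ot(1)$. This is sound precisely because the three ranges partition $[0,\Sigma(\zzsmall)]\supseteq\subsums{\zzsmall}{t}$: for condition~1 of Definition~\ref{mem-oracle}, any element of $\subsums{\zzsmall}{t}$ in $[q-\eps t,q+\eps t]$ lies in one of the ranges, so that sub-oracle answers \textbf{yes}; for condition~2, a \textbf{yes} from some sub-oracle already certifies an element of that range's subset-sum set within $[q-2\eps t,q+2\eps t]$, hence an element of $\subsums{\zzsmall}{t}$.

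It remains to total the running time in the dense regime. The two marginal oracles cost $\Ot(m+m\gamma^2/\eps^2)$ each by Lemma~\ref{small-range-bringmann}, matching the last two claimed terms. Lemma~\ref{lem:small-algorithm} contributes $\Ot\!\big(m+(\gamma/\eps)^2+\tfrac{\gamma}{\eps}(\tfrac{\gamma m}{\eps})^{1/2}\big)$, and here I would feed in the density bound: $m^2\ge\ell\log^2\ell=\widetilde{\Omega}(\gamma m/\eps)$ gives $m=\widetilde{\Omega}(\gamma/\eps)$. Then $(\gamma/\eps)^2\le(\gamma/\eps)\cdot\widetilde{O}(m)=\widetilde{O}(m\gamma/\eps)$, which is $\Ot(m+m\gamma^2/\eps^2)$ (split on whether $\gamma/\eps\ge1$); and $\tfrac{\gamma}{\eps}(\tfrac{\gamma m}{\eps})^{1/2}=(\gamma/\eps)\cdot(\gamma m/\eps)^{1/2}$ is $\Ot(m\gamma^2/\eps^2)$ when $\gamma m/\eps\ge1$ and is $\Ot(m)$ otherwise. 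Hence every term collapses into $\Ot(m+\gamma/\eps^2+m\gamma^2/\eps^2)$, and recombining with the sparse case gives exactly the claimed bound. Finally, the only randomness lies in the constant number of invocations of \bringmann inside Lemmas~\ref{small-sparse} and~\ref{small-range-bringmann}; running each $\Oh(\log(1/\eps))$ times, taking the union of the returned subset-sum sets, and a union bound boost the overall success probability to high probability at a polylogarithmic cost, which $\Ot$ absorbs.

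The step I expect to be the main obstacle is this dense-regime bookkeeping — in particular, confirming that the $\ell/m$-type and $\Sigma/m^2$-type terms inherited from \galil genuinely vanish under $m^2\ge\ell\log^2\ell$ — together with the care needed so that no subset sum near the thresholds $L$ or $\Sigma(\zzsmall)-L$ slips between the marginal and middle oracles. The latter is mostly handled by the overlap-free covering and the $\eps t$-slack already built into Definition~\ref{mem-oracle}, but it has to be checked explicitly, and it is the reason the earlier lemmas were phrased as range-restricted oracles in the first place.
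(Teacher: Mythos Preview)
Your proof is correct and follows the paper's approach: the same sparse/dense dichotomy on $m^2$ versus $\ell\log^2\ell$, the same three-range covering in the dense case via Lemmas~\ref{small-sparse}, \ref{small-range-bringmann}, and~\ref{lem:small-algorithm}, and the same absorption of the Galil--Margalit terms into $\Ot(m\gamma^2/\eps^2)$. The one point of divergence is the merge step: the paper cites Lemma~\ref{FFT}, but since the three sub-oracles cover disjoint value ranges of the \emph{same} subset-sum set (not a sumset of disjoint item sets), your logical-OR merge is the right primitive and is in fact what the argument needs; your write-up is clearer here, and you are also more explicit about the density bound $m=\widetilde\Omega(\gamma/\eps)$ and the edge case $\gamma m/\eps<1$ than the paper, which simply asserts the terms are ``suppressed.''
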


\begin{proof}
    We will combine two cases:
    
    \paragraph{Case When $m^2 < \ell\log^2{\ell}$:}

    In such case we use Lemma~\ref{small-sparse} that works in 
    $\Ot(m + \frac{\gamma}{\eps^2})$ time.

    \paragraph{Case When $m^2 \ge \ell\log^2{\ell}$:}
    First we take advantage of Lemma~\ref{lem:small-algorithm}.
    This gives us an \oracle that answers queries within set
    $\subsums{\zzsmall}{t} \cap (L, \Sigma(\zzsmall)-L)$. It requires 
    $\Ot(nm+ \left( \frac{\gamma}{\eps} \right)^2 + \frac{\gamma}{\eps}\cdot \left(
    \frac{\gamma m}{\eps} \right)^{0.5})$ time.

    We combine it (using Lemma~\ref{FFT}) with the \oracle that gives us
    answers to a set $\subsums{\zzsmall}{t} \cap [0,L]$ from
    Lemma~\ref{small-range-bringmann}. This oracle can be constructed in time
    $\Ot(m+\frac{m\gamma^2}{\eps^2})$. The oracle for interval set $[\Sigma(\zzsmall) - L,
    \Sigma(\zzsmall)]$ is obtained by symmetry.

    \paragraph{Running Time:}

    The running time of merging the solutions from Lemma~\ref{FFT} is
    $\Ot(1/\eps)$ which is suppressed by the running time of
    Lemma~\ref{small-range-bringmann} and Lemma~\ref{lem:small-algorithm}.
    Factor $\Ot( \left( \frac{\gamma}{\eps} \right)^2)$ is suppressed by
    $\Ot\left( \left( \frac{m \gamma^2}{\eps^2} \right)\right)$.

    Term $\frac{\gamma}{\eps}\cdot \left( \frac{\gamma m}{\eps}
    \right)^{0.5}$ is also suppressed by $\Ot(\frac{m \gamma^2}{\eps^2})$. The
    algorithm is randomized because Lemma~\ref{small-range-bringmann} is
    randomized.

\end{proof}

The Lemma~\ref{lem:partition-small-large} allowed us to partition our instance into small
and large items. We additionally know that each interval of length $\eps t/m'$ contains
at most 2 items. However in the previous
proofs we assumed there can be only one such item, i.e., the set should be $(\eps t/m')$-distinct.

\begin{lemma}[From multiple to distinct items]
    \label{lem:multiplicity-small}
    Given an instance $(\zzsmall,t)$ of \ssum, where $|\zzsmall| = m$ and $zzsmall$ is $(\eps t/m',2)$-distinct for $m' = \Oh(m\log n)$, we can compute an \oracle for instance
    $(\zzsmall,t)$ in 
    $\Ot(n + \frac{\gamma}{\eps^2} + \frac{n\gamma^2}{\eps^2} )$ time with high probability.
\end{lemma}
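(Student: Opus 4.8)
The plan is to reduce the $(\eps t/m',2)$-distinct case to two applications of Lemma~\ref{small-algorithm}, which already handles genuinely $(\eps t/m')$-distinct sets. First I would sort $\zzsmall = \{z_1 \le z_2 \le \dots \le z_m\}$ in $\Oh(m\log m)$ time and split it into the odd-indexed and even-indexed parts $A = \{z_1, z_3, z_5, \dots\}$ and $B = \{z_2, z_4, z_6, \dots\}$. The crucial observation is that $(\eps t/m',2)$-distinctness forbids three items inside any interval of length $q := \eps t/m'$, hence $z_{i+2} - z_i > q$ for every $i$ (otherwise $[z_i, z_i+q]$ would contain $z_i, z_{i+1}, z_{i+2}$); consequently consecutive elements of $A$ — and likewise of $B$ — are more than $q$ apart, so each of $A$ and $B$ is $(\eps t/m')$-distinct. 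Since moreover $|A|, |B| \le m$ and all their items are below $\gamma t$, the hypotheses of Lemma~\ref{small-algorithm} are met by both $A$ and $B$ (with the same size and granularity parameters $m, m'$, which here serve merely as upper bounds).

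Next I would invoke Lemma~\ref{small-algorithm} on $A$ and on $B$ separately, each run costing $\Ot(m + \frac{\gamma}{\eps^2} + \frac{m\gamma^2}{\eps^2})$ time and succeeding with high probability, producing an \oracle for $\subsums{A}{t}$ and one for $\subsums{B}{t}$. Because every sub-multiset of $\zzsmall = A \uplus B$ splits into a part from $A$ and a part from $B$, we have $\subsums{\zzsmall}{t} = \subsums{A}{t} \oplus_t \subsums{B}{t}$, so Lemma~\ref{FFT} merges the two into a single \oracle for $\subsums{\zzsmall}{t}$ deterministically in $\Ot(\frac1\eps)$ time. The doubling of the additive error caused by Lemma~\ref{FFT} is absorbed by running the whole routine with $\eps/2$ in place of $\eps$ from the start, which only changes constants. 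A union bound over the two randomized calls keeps the success probability high, and summing the costs together with $m = |\zzsmall| \le n$ yields the claimed $\Ot(n + \frac{\gamma}{\eps^2} + \frac{n\gamma^2}{\eps^2})$ running time.

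The one step I would be most careful about is the combinatorial claim that splitting a sorted $(q,2)$-distinct set by index parity produces two $(q)$-distinct sets: this uses precisely the convention that an $(x)$-distinct set has at most one item in every interval of length exactly $x$, together with the fact (from Lemma~\ref{lem:partition-small-large}) that no value occurs with multiplicity more than two. The remaining points — verifying that $\subsums{A}{t} \oplus_t \subsums{B}{t}$ really equals $\subsums{\zzsmall}{t}$, and tracking that the window $\eps t/m'$ (rather than a coarser one derived from the possibly smaller $|A|, |B|$) is the correct granularity to keep feeding into Lemma~\ref{small-algorithm} — are routine, and the $m'$-versus-$m$ slack affects only the polylogarithmic factors hidden in $\Ot$.
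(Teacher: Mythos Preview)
Your proposal is correct and follows essentially the same approach as the paper: sort, split into odd- and even-indexed halves to obtain two $(\eps t/m')$-distinct sets, apply Lemma~\ref{small-algorithm} to each, and merge via Lemma~\ref{FFT}. You supply more detail than the paper does (the combinatorial justification that parity-splitting yields $(\eps t/m')$-distinct sets, the explicit handling of the error doubling, and the $m$ versus $m'$ bookkeeping), but the underlying argument is identical.
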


\begin{proof}
    We divide the set $\zzsmall$ into two sets $\zzsmall^1$ and $\zzsmall^2$ such
    that $\zzsmall = \zzsmall^1 \cup \zzsmall^2$, the sets
    $\zzsmall^1$,$\zzsmall^2$ are disjoint, $(\eps t/m')$-distinct, and have size $\Omega(m)$.
    This can be done by
    sorting $\zzsmall$ and dividing items into odd-indexed and even-indexed . It takes $\Ot(m)$
    time.
    
    Next we use Lemma~\ref{small-algorithm} to compute an
    \oracle for $(\zzsmall^1,t)$ and $(\zzsmall^2,t)$,
    and merge them using Lemma~\ref{FFT}.
\end{proof}

Now we will combine the solutions for small and large items.

\begin{theorem}\label{lem:ssum-k}
    Let $0 < \gamma$ be a trade-off parameter (that depends on $n,\eps$).
    Given an $(Z,t)$ instance of \ssum, we can construct the \oracle
    of instance $\subsums{Z}{t}$ in
    $\Ot(n + \frac{1}{\gamma\eps} + \frac{\gamma}{\eps^2} + \frac{n\gamma^2}{\eps^2} )$ time with high probability.
\end{theorem}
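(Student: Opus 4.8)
The plan is to glue together the building blocks of Sections~\ref{sec:preprocessing} and~\ref{fptas-ssum} in the natural order, feeding the output of each lemma as the input of the next. Concretely, I would first rescale $\eps$ by a fixed constant factor so that the constant losses incurred below can be absorbed at the end, and (optionally) invoke Lemma~\ref{lem:large-opt} to delete items exceeding $t$ and ensure $\text{OPT}\ge t/2$. Then apply Lemma~\ref{lem:partition-small-large} with the given trade-off parameter $\gamma$: in $\Oh(n\log^2 n)$ time this yields an $\eps$-close instance $(\zzsmall\cup\zzlarge,t)$ in which every element of $\zzlarge$ is at least $\gamma t$, every element of $\zzsmall$ is below $\gamma t$, and --- crucially --- $\zzsmall$ is $(\frac{\eps t}{m'},2)$-distinct with $m'=\Oh(|\zzsmall|\log n)$, which is precisely the hypothesis that Lemma~\ref{lem:multiplicity-small} needs.

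Next I would construct the two partial oracles independently. For $\zzlarge$, Lemma~\ref{lem:large-algorithm} builds an \oracle of $\subsums{\zzlarge}{t}$ in randomized time $\Ot(n+\frac{1}{\gamma\eps})$. For $\zzsmall$, Lemma~\ref{lem:multiplicity-small} builds an \oracle of $\subsums{\zzsmall}{t}$ in randomized time $\Ot(n+\frac{\gamma}{\eps^2}+\frac{n\gamma^2}{\eps^2})$. Since $\zzsmall$ and $\zzlarge$ partition the item set we have $\subsums{\zzsmall\cup\zzlarge}{t}=\subsums{\zzsmall}{t}\oplus_t\subsums{\zzlarge}{t}$, so I would hand these two oracles to Lemma~\ref{FFT}, which in deterministic $\Ot(\frac{1}{\eps})$ time returns a $(2\eps,t)$-membership-oracle of $\subsums{\zzsmall\cup\zzlarge}{t}$. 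Finally, since $(\zzsmall\cup\zzlarge,t)$ is $\eps$-close to $(Z,t)$, composing the surjecting maps (the Fact on $\oplus_t$ following the definition of $(\eps,t)$-closeness, together with the chaining of oracle guarantees inside Lemma~\ref{FFT}) shows this object is an $\Oh(\eps)$-membership-oracle of $\subsums{Z}{t}$; the initial rescaling of $\eps$ turns it into the claimed \oracle.

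To finish, I would collect running times and bound the failure probability. The preprocessing $\Oh(n\log^2 n)$ and the merge $\Ot(\frac{1}{\eps})$ are dominated by the remaining terms, so the total is $\Ot(n+\frac{1}{\gamma\eps}+\frac{\gamma}{\eps^2}+\frac{n\gamma^2}{\eps^2})$. Only Lemmas~\ref{lem:large-algorithm} and~\ref{lem:multiplicity-small} are randomized; boosting each of the $\Oh(1)$ randomized invocations to failure probability $1/\poly(n/\eps)$ at the cost of hidden polylogarithmic factors, and a union bound, give overall success with high probability. The only point that requires genuine care is the bookkeeping of $(\eps,t)$-closeness and of the membership-oracle semantics of Definition~\ref{mem-oracle}: each composition step degrades the additive error by a constant factor, and one must check that the ``yes answer $\Rightarrow$ witness nearby'' guarantee survives the FFT-based merge --- but this is exactly what Lemma~\ref{FFT} was designed to guarantee, so no new technical difficulty arises beyond ordering the cited lemmas correctly. (If one additionally wants the last term to improve to $\Ot(\frac{\gamma^2}{\eps^3})$, one prepends Lemma~\ref{lem:reduce-items} to shrink $|Z|$ to $\Ot(\frac1\eps)$ before the partition step.)
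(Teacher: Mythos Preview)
Your proposal is correct and follows essentially the same route as the paper: partition via Lemma~\ref{lem:partition-small-large}, build the two oracles with Lemmas~\ref{lem:large-algorithm} and~\ref{lem:multiplicity-small}, and merge with Lemma~\ref{FFT}. Your additional remarks on rescaling~$\eps$, boosting the constant success probabilities, and optionally prepending Lemma~\ref{lem:reduce-items} are harmless elaborations that the paper leaves implicit or defers to Theorem~\ref{weak-linear-ssum}.
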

\begin{proof}
    We start with Lemma~\ref{lem:partition-small-large}, that in $\Oh(n
    \log^2{n})$ time partitions the set into $\zzlarge$ and $\zzsmall$, such that
    $\zzsmall$ is $(\frac{\eps t}{m\log n},2)$-distinct, where $m=|\zzsmall|$.
    To deal with small items, we use Lemma~\ref{lem:multiplicity-small}. The algorithm for small items returns
    an \oracle of $\subsums{\zzsmall}{t}$.  
    For large items we can use Lemma~\ref{lem:large-algorithm}. It also returns an
    \oracle of $\subsums{\zzlarge}{t}$.

    Finally, we use Lemma~\ref{FFT} to merge these oracles in time $\Ot(1/\eps)$.
All the subroutines run with a constant probability of success.
\end{proof}

Finally, we have combined all the pieces and we can get a faster algorithm for
\weak for \ssum.

\begin{corollary}[\ssum with tradeoff]
    \label{weak-tradeoff}
    There is a randomized \weak algorithm for \ssum running in $\Ot(n +
    \frac{1}{\gamma\eps} + \frac{\gamma}{\eps^2} + \frac{n\gamma^2}{\eps^2} )$ time with high probability for any
     $\gamma(n,\eps) > 0$.
\end{corollary}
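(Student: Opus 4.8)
The plan is to obtain Corollary~\ref{weak-tradeoff} purely by composing the machinery already assembled; it is essentially a bookkeeping statement on top of Theorem~\ref{lem:ssum-k}. First I would invoke Lemma~\ref{lem:large-opt} to reduce, in $\Oh(n\log n)$ time, to an instance with $\text{OPT}\ge t/2$. This is exactly the hypothesis needed so that the additive $\Oh(\eps t)$ slack produced downstream becomes a genuine multiplicative $(1-\eps)$ guarantee (and it is what Lemma~\ref{from-oracle-to-ssum} relies on at the end). To absorb the constant-factor inflation of the closeness parameter that accumulates along the way — each application of Lemma~\ref{FFT} doubles the additive error, the exact-to-oracle passage of Lemma~\ref{exact-approx} widens intervals by a constant, and Lemma~\ref{from-oracle-to-ssum} wants an $(\eps/6,t)$-oracle rather than an $(\eps,t)$-oracle — I would simply run the whole construction with a rescaled parameter $\eps' = c\eps$ for a suitable absolute constant $c\le \tfrac16$. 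Since every running-time bound in sight depends polynomially on $1/\eps$, this rescaling only affects hidden constants.

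With $\eps'$ fixed and the same trade-off parameter $\gamma$, I would apply Theorem~\ref{lem:ssum-k} to $(Z,t)$: it produces an \oracle (i.e. an $(\eps',t)$-membership-oracle) of $\subsums{Z}{t}$ in randomized time $\Ot\big(n + \tfrac{1}{\gamma\eps} + \tfrac{\gamma}{\eps^2} + \tfrac{n\gamma^2}{\eps^2}\big)$ with high probability — internally this splits $Z$ into $\zzlarge$ and $\zzsmall$ via Lemma~\ref{lem:partition-small-large}, solves the two parts with Lemma~\ref{lem:large-algorithm} and Lemma~\ref{lem:multiplicity-small}, and glues them with Lemma~\ref{FFT}. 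I then apply Lemma~\ref{from-oracle-to-ssum} to this oracle; because $\eps'\le \eps/6$, querying the $\Oh(1/\eps)$ grid points returns in $\Ot(1/\eps)$ time a value $Z^H$ with $(1-\eps)\text{OPT}\le Z^H < (1+\eps)t$, which is precisely a weak $(1-\eps)$-approximation in the sense of Definition~\ref{def:weak-ssum}. The $\Ot(1/\eps)$ cost of this last step is dominated, since by AM--GM $\tfrac{1}{\gamma\eps}+\tfrac{\gamma}{\eps^2}\ge \tfrac{2}{\eps^{3/2}}\ge \tfrac{2}{\eps}$ for $\eps\le 1$ and every $\gamma>0$; hence the total running time matches the claimed bound. (Prepending Lemma~\ref{lem:reduce-items} to first bring the item count down to $\Ot(1/\eps)$, and then choosing $\gamma=\eps^{-2/3}$, is what turns this into the concrete $\Ot(n+\eps^{-5/3})$ statement of Theorem~\ref{weak-linear-ssum}, but is not needed for the trade-off itself.)

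The only genuinely delicate point — and where I would spend the most care — is the composition of $(\eps,t)$-closeness and membership-oracle error across all the reductions: one must check that the rounding in Lemma~\ref{lem:partition-small-large}, the exact-to-oracle conversion in Lemma~\ref{exact-approx}, and the at most logarithmically many merges via Lemma~\ref{FFT} together inflate the additive error by only an $\Oh(1)$ factor, so that a single up-front rescaling of $\eps$ really does suffice. The probabilistic side is comparatively easy: the sole randomized ingredient is Bringmann's algorithm (Corollary~\ref{cor:bringmann-all}), invoked a polylogarithmic number of times, each time already succeeding with constant probability; amplifying each invocation $\Oh(\log(1/\eps))$-fold (absorbed into $\Ot$) and taking a union bound yields the high-probability guarantee, and since the returned object is a subset of $Z$ it is self-certifying, which makes the amplification trivial. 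As $\gamma$ was treated as an arbitrary positive function of $n$ and $\eps$ throughout, the bound holds for all $\gamma(n,\eps)>0$, as stated.
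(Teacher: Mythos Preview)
Your proposal is correct and follows essentially the same route as the paper: the paper's own proof is the one-liner ``It follows from Lemma~\ref{from-oracle-to-ssum} and Theorem~\ref{lem:ssum-k}'', and you have simply unpacked this composition with the natural bookkeeping (the $\eps$-rescaling, the absorption of the final $\Ot(1/\eps)$ query cost, and the probability amplification). One tiny slip in your parenthetical: the optimal trade-off parameter is $\gamma=\eps^{2/3}$, not $\eps^{-2/3}$ (small items are those below $\gamma t$, so $\gamma<1$); this does not affect the corollary itself, which you correctly note holds for arbitrary $\gamma(n,\eps)>0$.
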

\begin{proof}
    It follows from Lemma~\ref{from-oracle-to-ssum} and Theorem~\ref{lem:ssum-k}.
\end{proof}

The \weak \ssum gives us the approximation for \partition via
Corollary~\ref{partition-reduction}.

\begin{corollary}[\partition with trade-off]
    There is a randomized $(1-\eps)$-approximation algorithm for \partition running in $\Ot(n +
    \frac{1}{\gamma\eps} + \frac{\gamma}{\eps^2} + \frac{n\gamma^2}{\eps^2} )$ time with high probability for any
     $\gamma(n,\eps) > 0$.
\end{corollary}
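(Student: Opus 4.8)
The plan is to obtain this corollary as an immediate consequence of the weak approximation guarantee for \ssum together with the reduction from \partition to weak \ssum, both of which are already established; essentially no new argument is required.

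First I would invoke Corollary~\ref{weak-tradeoff}: fixing the trade-off function $\gamma = \gamma(n,\eps)$, it provides a randomized algorithm that, on any \ssum instance $(Z,t)$ with $|Z|=n$, outputs with high probability a value $Z^H$ with $(1-\eps)Z^* \le Z^H < (1+\eps)t$, and runs in time $T(n,\eps) := \Ot\big(n + \frac{1}{\gamma\eps} + \frac{\gamma}{\eps^2} + \frac{n\gamma^2}{\eps^2}\big)$. Then I would feed this into Corollary~\ref{partition-reduction}, which converts any weak $(1-\eps)$-approximation for \ssum running in time $\Ot(T(n,\eps))$ into an ordinary $(1-\eps)$-approximation for \partition within the same asymptotic running time. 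Concretely, on a \partition instance with ground set $Z$ one runs the weak \ssum algorithm with $t := \Sigma(Z)/2$; if the returned $Z'$ has $\Sigma(Z') \le t$ it is already feasible and $(1-\eps)$-optimal, and otherwise $t \le \Sigma(Z') < (1+\eps)t$, so the complement $Z \setminus Z'$ has sum in $\big((1-\eps)t,\, t\big]$, making it feasible and within a $(1-\eps)$ factor of $\text{OPT} \le t$. This post-processing step is deterministic and runs in time linear in the output, so both the running-time bound and the success probability carry over unchanged; the details are exactly the content of Section~\ref{proof-partition-reduction}.

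I expect no genuine obstacle here: the statement is a mechanical composition of Corollary~\ref{weak-tradeoff} and Corollary~\ref{partition-reduction}. All the substantial work — splitting into small and large items (Lemma~\ref{lem:partition-small-large}), handling large items via \bringmann (Lemma~\ref{lem:large-algorithm}), handling the dense and sparse small-item cases via \galil and \bringmann (Lemma~\ref{small-algorithm}), merging oracles by FFT and backtracking (Lemma~\ref{FFT}), and assembling the trade-off bound (Theorem~\ref{lem:ssum-k}) — has already been done. The only points worth a second look are that the $\gamma$-dependent terms are simply absorbed into $T(n,\eps)$ when quoting Corollary~\ref{partition-reduction}, and that the reduction does not degrade the approximation ratio past $(1-\eps)$, both of which are immediate from the inequalities above.
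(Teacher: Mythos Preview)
Your proposal is correct and matches the paper's approach exactly: the corollary is an immediate combination of Corollary~\ref{weak-tradeoff} (weak \ssum with trade-off) and Observation~\ref{partition-reduction} (the reduction from \partition to weak \ssum), and the paper does not even spell out a separate proof for it. Your extra recap of the symmetric-complement argument from Section~\ref{proof-partition-reduction} is accurate but more detail than the paper itself provides.
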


To get running time of form $\Ot(n+1/\eps^c)$ and prove our main
result we need to reduce the number of items from $n$ to $\Ot(1/\eps)$ and choose
the optimal $\gamma$.
%This gives us a comparison with the previous papers.

\begin{theorem}[Weak apx for \ssum]
    \label{weak-linear-ssum}
    There is a randomized \weak algorithm for \ssum running in $\Otilde\big(n + \eps^{-\frac{5}{3}}\big)$ time.
\end{theorem}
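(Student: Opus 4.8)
The plan is to chain two ingredients already at hand: the item reduction of
Lemma~\ref{lem:reduce-items}, which shrinks the instance to $\Ot(1/\eps)$ items
while perturbing subset sums by only $\pm\eps t$, and the parameterized algorithm
of Corollary~\ref{weak-tradeoff}; once the number of items is $\Ot(1/\eps)$ the
running time of the latter depends only on $\eps$ and the trade-off parameter
$\gamma$, so it remains to pick $\gamma$ optimally. Before starting I would run the
whole construction with $\eps/c$ in place of $\eps$ for a suitable absolute constant
$c$: this is harmless because, by Lemma~\ref{lem:large-opt}, we may assume
$\text{OPT}\ge t/2$, hence every additive $\Oh(\eps t)$ error accumulated along the
way costs only an $\Oh(\eps)$ multiplicative error relative to $\text{OPT}$, and the
constantly many such errors compose into the desired $(1-\eps)$ guarantee.

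First I would apply Lemma~\ref{lem:reduce-items} to the input $(Z,t)$, obtaining in
$\Oh\big(n+\tfrac1\eps\,\polylog(n,\tfrac1\eps)\big)$ time an $\eps$-close instance
$(Z_2,t)$ with $n':=|Z_2|=\Ot(1/\eps)$. Because $\subsums{Z_2}{t}$ is $(\eps,t)$-close
to $\subsums{Z}{t}$, the witnessing surjection sends the original optimum $Z^*$ to a
subset sum of $Z_2$ of value at least $Z^*-\eps t$ (and at most $t$, by capping), so
$\text{OPT}(Z_2,t)\ge Z^*-\eps t$. Consequently any value $Z^H$ with
$(1-\eps)\text{OPT}(Z_2,t)\le Z^H<(1+\eps)t$ already satisfies, using $Z^*\ge t/2$,
$Z^H\ge(1-\eps)(Z^*-\eps t)\ge(1-\Oh(\eps))Z^*$; i.e.\ a \weak for $(Z_2,t)$ is a
weak $(1-\Oh(\eps))$-approximation for $(Z,t)$. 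If an actual subset is wanted, one
can also lift any subset of $Z_2$ back to a subset of $Z$, whose sum grows by at
most $\eps t$, since the reduction only rounds items down and discards some of them.

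Then I would invoke Corollary~\ref{weak-tradeoff} on $(Z_2,t)$; on $n'=\Ot(1/\eps)$
items its running time is
\[
  \Ot\!\left(n'+\tfrac{1}{\gamma\eps}+\tfrac{\gamma}{\eps^2}+\tfrac{n'\gamma^2}{\eps^2}\right)
  \;=\; \Ot\!\left(\tfrac1\eps+\tfrac{1}{\gamma\eps}+\tfrac{\gamma}{\eps^2}+\tfrac{\gamma^2}{\eps^3}\right).
\]
Here $\tfrac1{\gamma\eps}$ decreases in $\gamma$ while $\tfrac{\gamma}{\eps^2}$ and
$\tfrac{\gamma^2}{\eps^3}$ increase, so I would balance the first against the last by
taking $\gamma=\eps^{2/3}$: then $\tfrac1{\gamma\eps}=\eps^{-5/3}$,
$\tfrac{\gamma^2}{\eps^3}=\eps^{-5/3}$, $\tfrac{\gamma}{\eps^2}=\eps^{-4/3}$, and
$\tfrac1\eps=\eps^{-1}$, all $\Ot(\eps^{-5/3})$. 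Adding back the $\Oh(n)$ spent on
the item reduction and undoing the constant rescaling of $\eps$, the resulting
randomized algorithm runs in $\Ot(n+\eps^{-5/3})$ time.

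The only step that requires real care, the rest being substitution, is the closeness
bookkeeping of the second paragraph: one must use the surjection witnessing
$(\eps,t)$-closeness in the correct direction, invoke $\text{OPT}\ge t/2$ to turn the
additive $\eps t$ gap into a multiplicative one, and check that this loss together
with the $\Oh(\eps)$ slack already baked into the output of
Corollary~\ref{weak-tradeoff} (the widened interval of Lemma~\ref{exact-approx} and
the readout of Lemma~\ref{from-oracle-to-ssum}) still fits under the weak $(1-\eps)$
bound after the initial rescaling.
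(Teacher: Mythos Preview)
Your proposal is correct and follows essentially the same approach as the paper: apply Lemma~\ref{lem:reduce-items} to reduce to $\Ot(1/\eps)$ items, then invoke Corollary~\ref{weak-tradeoff} with $\gamma=\eps^{2/3}$. The paper's proof is two sentences and leaves the running-time arithmetic and the $\eps$-closeness bookkeeping implicit; you have simply spelled these out.
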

\begin{proof}
    We apply Lemma~\ref{lem:reduce-items} to ensure that the number of items is $\Otilde\big(\frac{1}{\eps}\big)$ and work with an $\Oh(\eps)$-close instance.
    Then we take advantage of Corollary~\ref{weak-tradeoff} with $\gamma = \eps^{\frac{2}{3}}$.
\end{proof}

Analogously for \partition we get that:
\begin{theorem}[Apx for \partition]
    There is a randomized $(1-\eps)$-approximation algorithm for \partition running in $\Otilde\big(n + \eps^{-\frac{5}{3}}\big)$ time.
\end{theorem}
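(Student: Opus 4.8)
The plan is to obtain this statement directly from the machinery already assembled, namely the weak $(1-\eps)$-approximation for \ssum of Theorem~\ref{weak-linear-ssum} together with the symmetry reduction recorded in Corollary~\ref{partition-reduction}. First I would reduce the number of items: given a \partition instance $S$, set $t = \Sigma(S)/2$ and apply Lemma~\ref{lem:reduce-items} to pass to an $\Oh(\eps)$-close instance with only $\Ot(1/\eps)$ items, so that the final running time takes the shape $\Ot(n + 1/\eps^{c})$ rather than carrying an $n$-dependence inside every term. Then I would invoke Corollary~\ref{weak-tradeoff} (equivalently, its \partition counterpart obtained through Corollary~\ref{partition-reduction}) with the trade-off parameter $\gamma = \eps^{2/3}$, so that
\[
\Ot\left(n + \frac{1}{\gamma\eps} + \frac{\gamma}{\eps^2} + \frac{n\gamma^2}{\eps^2}\right) = \Ot\left(n + \eps^{-5/3}\right),
\]
where the last term is $\Ot(\eps^{-5/3})$ because $n = \Ot(1/\eps)$ after the reduction.

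The only genuinely non-routine point — and the one I would be most careful about — is converting the weak \ssum output into a \emph{feasible} \partition solution. The weak algorithm returns a subset $Z'$ with $\Sigma(Z') = Z^H$ satisfying $(1-\eps)Z^* \le Z^H < (1+\eps)t$, and $Z^H$ may exceed $t = \Sigma(S)/2$. Here I would use the argument underlying Corollary~\ref{partition-reduction}: if $\Sigma(Z') > t$, replace $Z'$ by its complement $S \setminus Z'$, whose sum is $\Sigma(S) - \Sigma(Z') = 2t - \Sigma(Z') \in ((1-\eps)t,\, t]$, hence feasible; since $Z^* = \text{OPT} \le t$ for a \partition instance, this gives $\Sigma(S \setminus Z') > (1-\eps)t \ge (1-\eps)\text{OPT}$, so the complement is a valid $(1-\eps)$-approximation. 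If instead $\Sigma(Z') \le t$, then $Z'$ itself already satisfies $\Sigma(Z') \ge (1-\eps)Z^* = (1-\eps)\text{OPT}$. Reading off the answer and a witness from the $(\eps,t)$-membership-oracle costs $\Ot(1/\eps)$ by Lemma~\ref{from-oracle-to-ssum}, which is dominated by the running time above.

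Beyond that it is bookkeeping: the $(\eps,t)$-closeness losses accumulate along the pipeline — the item reduction of Lemma~\ref{lem:reduce-items}, the small/large split of Lemma~\ref{lem:partition-small-large}, the oracle merges of Lemma~\ref{FFT}, and the oracle read-off of Lemma~\ref{from-oracle-to-ssum} — so I would start from a rescaled parameter $\eps' = \Theta(\eps)$ chosen small enough that the total additive slack stays below $\eps t$; this only affects constants and is invisible in the $\Ot(\cdot)$ bound. I expect no real obstacle here, since the heavy lifting (the dense-versus-sparse case analysis for the small items and the interplay with the \galil) has already been carried out in establishing Theorem~\ref{weak-linear-ssum}.
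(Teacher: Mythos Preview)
Your proposal is correct and follows essentially the same approach as the paper: invoke the weak $(1-\eps)$-approximation for \ssum from Theorem~\ref{weak-linear-ssum} (which already encapsulates the item reduction of Lemma~\ref{lem:reduce-items} and the choice $\gamma=\eps^{2/3}$ in Corollary~\ref{weak-tradeoff}) and then apply the complement-symmetry argument of Corollary~\ref{partition-reduction}. You have simply unpacked those two citations in more detail; the only caution is to make sure the complement $S\setminus Z'$ is taken in the original item set (so that $\Sigma(S)=2t$ holds exactly), with the solution on the reduced instance mapped back via the explicit rounding of Lemma~\ref{lem:reduce-items}.
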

\iffalse
\begin{proof}
    We use Corollary~\ref{partition-reduction}
    that reduces approximating \partition to \weak of \ssum,
    and Theorem~\ref{weak-linear-ssum} to solve it efficiently.
\end{proof}
\fi
\section{Approximate \minconv}
\label{approximate-minconv}

\defproblem{Approximate \minconv}
{Sequences \sequence{A}{0}{n-1}, \sequence{B}{0}{n-1} of positive integers and
approximation parameter $0 < \eps < 1$}
{Let $\text{OPT}[k] = \min_{0 \le i \le k} (A[i]+B[k-i])$ be the \minconv of
$A$ and $B$. Find a sequence \sequence{C}{0}{n-1} such that $\forall_i\, \text{OPT}[i] \le C[i] \le (1+\eps)\text{OPT}[i]$}

\citet{tree-sparsity} described a $(1+\eps)$-approximation algorithm for
\minconv, that runs deterministically in time $\Oh(\frac{n}{\epsilon^2} \log{n}
\log^2{W})$. In their paper~\cite{tree-sparsity} it is used as a building block
to show a near-linear time approximation algorithm for \sparsity. With the
approximation algorithm for \minconv, they managed to solve \sparsity
approximately in $\Ot(\frac{n}{\eps^2})$ time, which in practical applications
may be faster than solving this problem exactly in time $\Ot(n^2)$.

We begin with explaining its connection with the \subsetsum
problem.
A natural generalization of \subsetsum is \knapsack. In this scheme each item has
\textit{value} and \textit{weight} and our task is to pack items into the
knapsack of capacity $C$, so that their cumulative weight does not exceed
capacity and in the same time we want to maximize their total value. 
In the special case when all weights and values are equal  we obtain
the \subsetsum problem.

To certify the existence of a subset with a given sum, we have used the fast
convolution using FFT as
subroutine. If we want to generalize it and capture maximal value subset of
items of a given weight, we require \maxconv,
which is computationally equivalent to \minconv.

\citet{icalp2017} exploited this idea to show subquadratic equivalence between
exact \minconv, \knapsack, and other problems.
Here we focus on the approximate setting.
From~\cite{icalp2017} it follows that the $\Ot(n+\frac{1}{\eps^{1.99}})$
approximation algorithm for \uknapsack is unlikely. This lower bounds proves the
optimality of \citet{fptas-uknapsack} $\Ot(n+\frac{1}{\eps^2})$ FPTAS for
\uknapsack. The current best FPTAS for \knapsack is burdened with
time complexity of $\Ot(n+1/\eps^{12/5})$~\citet{chan-knapsack}. We hope, that our
approximation schemes are a step towards a faster FPTAS for \knapsack.

In this section we improve upon the $\Ot(n/\eps^2)$ approximation algorithm.
for \minconv.
Similar techniques have been
exploited to obtain the $\Ot(n^\omega/\eps)$-time approximation for APSP~\cite{approx-apsp}
and they have found use in the approximate pattern matching over
$l_\infty$~\cite{approx-pattern-matching}. The basic idea is to propose a fast
exact algorithm depending on $W$ (upper bound on the weights) and
apply it after rounding weights into smaller space.
Our result also applies to \maxconv.

% \subsection{State of the Art}
%
% In this section we will shortly describe the $\Ot(\frac{n}{\eps^2})$ algorithm due to \citet[Section
% C.2]{tree-sparsity}. The basic idea is to round elements and then perform
% a fast convolution with FFT.
% For a
% sequence \sequence{D}{0}{n-1} and integer $i$, we define a binary vector $\lambda(D,i)$:
%
% \begin{displaymath}
%     \lambda(D,i)[k] := \begin{cases} 1 & \text{if} \, (1+\epsilon)^i \le D[k] \le (1+\epsilon)^{i+1}, \\ 0 & \text{otherwise} \end{cases} 
% \end{displaymath}
%
% The vector $\lambda(D,i)$ is nonzero in the bits that round the values of
% sequence $D$. Let $W$ be the largest value in both $A$ and $B$. Then for all
% pairs of indices $0 \le i,j \le \log_{1+\eps}{W}$ we define a vector
% $\lambda_{i,j} = \lambda(A,i) \oplus \lambda(B,j)$ ($\oplus$ stands for convolution of
% the sequences). Computation of a single vector $\lambda_{i,j}$ takes
% $\Oh(n \log{n})$ time and there are $\Oh(\log^2_{1+\eps}{W})$ such vectors. Hence the total
% running time is $\Oh(n\log{n} \log^2_{1+\eps}{W}) =
% \Oh(\frac{n}{\eps^2}\log{n}\log^2{W})$.
%
% Finally we iterate over all entries and output
%
% \begin{displaymath}
%     C[k] = \min_{(\lambda_{i,j})_k = 1} (1+\eps)^{i+1} + (1+\eps)^{j+1} 
%     ,
% \end{displaymath}
%
% which gives us the approximate answer for all entries.

\subsection{Exact $\Ot(nW)$ algorithm}
\label{exact-minconv}

The \minconv admits a brute force $\Oh(n^2)$-algorithm. From the other hand, when all
values in sequences are binary, then applying FFT and performing convolution
yields an $\Oh(n\log{n})$-algorithm.
Our exact $\Ot(nW)$ algorithm is an attempt to capture this
trade-off. Note, that this algorithm is worse than a brute force whenever $W>n$ which
is often the case. However, this
algorithm turns out useful for approximation.

\begin{lemma}
    \label{fast-exact-minconv}
    The \minconv \emph{[$(\max,+)$-convolution]} problem can be solved
    deterministically in $\Oh(nW\log{(nW)})$ time
    and $\Oh(nW)$ space.
\end{lemma}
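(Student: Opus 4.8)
The plan is to reduce \minconv to a single Boolean (polynomial) convolution by encoding the value of each entry into the exponent of a formal variable, exactly mirroring the way \ssum uses FFT. Given the sequences $A[0,\dots,n-1]$ and $B[0,\dots,n-1]$ with all entries in $\{1,\dots,W\}$, I would first switch to \maxconv, which is equivalent: replace $A[i]$ by $W - A[i]$ and $B[j]$ by $W - B[j]$, so that $\min$ becomes $\max$ up to the additive shift $2W$ which I undo at the end. (If one prefers to stay with $\min$ directly, the same trick works by reflecting exponents; I will describe the $\max$ version since it is cleanest.) Now define two polynomials $P(x) = \sum_{i=0}^{n-1} x^{A[i]} \cdot y^{\,i}$ and $Q(x) = \sum_{j=0}^{n-1} x^{B[j]} \cdot y^{\,j}$ — but since we only need, for each output index $k$, the \emph{largest} exponent of $x$ appearing with a monomial $y^k$ in the product, I would instead work with a univariate encoding per output coordinate. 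Concretely, the cleanest route: for each $k$ we want $\max_{i+j=k} (A[i] + B[j])$, which is a single $(\max,+)$ entry, and this is the maximum exponent $e$ such that the coefficient of $x^e$ in $\left(\sum_i x^{A[i]}\,z^i\right)\cdot\left(\sum_j x^{B[j]}\,z^j\right)$ restricted to $z^k$ is nonzero.

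The key steps, in order, are: (1) build the polynomial $R(x,z) = \bigl(\sum_{i} x^{A[i]} z^{i}\bigr)\bigl(\sum_{j} x^{B[j]} z^{j}\bigr)$; the $x$-degree is at most $2W$ and the $z$-degree is at most $2n-2$, so $R$ has at most $O(nW)$ coefficients. (2) Compute $R$ by a single two-dimensional FFT-based multiplication, or equivalently by the standard Kronecker-substitution trick of mapping $(x,z)\mapsto$ a single variable raised to appropriate powers, which costs $\Oh(nW\log(nW))$ time and $\Oh(nW)$ space. (3) For each $k \in \{0,\dots,2n-2\}$, scan the coefficients of $z^k$ and record the largest $x$-exponent with a nonzero coefficient; this is $\max_{i+j=k}(A[i]+B[j])$ and costs $\Oh(nW)$ total. (4) Undo the $W - (\cdot)$ substitution to convert the \maxconv answer back to the \minconv answer: $\text{OPT}[k] = 2W - \bigl(\max\text{-conv entry at }k\bigr)$. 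All steps are deterministic since FFT over a field of appropriate size (or integer FFT via the three-primes / Schönhage–Strassen approach) is deterministic, giving total time $\Oh(nW\log(nW))$ and space $\Oh(nW)$ as claimed.

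The main obstacle I anticipate is handling the two-variable nature cleanly while keeping the bound at $\Oh(nW)$ coefficients rather than $\Oh(nW)$ per coordinate times $n$ coordinates. The naive "one polynomial per output index" approach would give $\Oh(n)$ separate convolutions of length $\Oh(W)$, i.e.\ $\Oh(nW\log W)$ — actually also fine — but the bivariate/Kronecker route is more uniform: one must be careful that the Kronecker substitution $x \mapsto t,\ z \mapsto t^{2W+1}$ (or a similar bound that prevents carries between the $x$-block and the $z$-block) faithfully separates the two degree ranges, so that no two distinct monomials $x^{a}z^{b}$ and $x^{a'}z^{b'}$ collide. Once that bookkeeping is set up correctly, the rest is routine. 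A secondary, minor point is confirming that the entries are indeed bounded (after the optional $W-(\cdot)$ shift the exponents are in $\{0,\dots,2W\}$, all nonnegative), which is immediate from the hypothesis that all input values lie in $\{1,\dots,W\}$.
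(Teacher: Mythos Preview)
Your proposal is correct and is essentially the same argument as the paper's: the paper encodes $A[i]=k$ by setting a bit at position $2Wi+k$ in a binary array of length $\Oh(nW)$ and applies one FFT, which is exactly your Kronecker substitution $x\mapsto t,\ z\mapsto t^{2W+1}$ applied to the bivariate polynomial $\sum_i x^{A[i]}z^i$. The only cosmetic differences are that the paper reads off the \minconv answer directly as the \emph{first} nonzero in each length-$2W$ block (and \maxconv as the last), rather than converting $\min$ to $\max$ via the $W-(\cdot)$ shift; and your parenthetical ``one polynomial per output index'' remark is not quite right as stated, but it is irrelevant to the main argument.
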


\begin{proof}
    Given sequences \sequence{A}{0}{n-1} and \sequence{B}{0}{n-1} with
    values at most $W$, we transform them into binary sequences of length $2nW$.
    We encode every number in the natural unary
    manner. For $0 \le i < n,\, 1 \le k \le W$ we define:

    \begin{displaymath}
        \tilde{a}[2Wi+k] = \begin{cases} 
            0 & \text{if} \; A[i] \ne k \\
            1 & \text{if} \; A[i] = k
        \end{cases}
    \end{displaymath}      
    and similarly we define sequence $\tilde{B}$. 
    For example, sequence $(2,3,1)$ with $W=3$ gets encoded as
    $010`000`001`000`100`000$ (the separators ` are used to visually separate
    sections of length $W$).

	We compute convolution $\tilde{C} = \tilde{A} \oplus \tilde{B}$ using
    FFT in time $\Oh(nW\log{n}\log{W})$.
    Since $\tilde{C}[2Wi+k] = \sum_{\genfrac{}{}{0pt}{}{i_1 + i_2 = i}{k_1 + k_2 =k}} \tilde{A}[2Wi_1+k_1]\cdot\tilde{B}[2Wi_2+k_2]$,
    the first nonzero occurrence in the $i$-th block of length $2W$
    encodes the value of the $i$-th element of the requested $(\min,+)$-convolution.
    If we are interested in computing \maxconv, we should similarly
    seek for last nonzero value in each block.

    The time complexity is dominated by performing convolution with FFT.
    As the additional space we need $\Oh(nW)$ bits for the transformed sequences.
\end{proof}

\subsection{Approximating Algorithm}

We start with a lemma
inspired by~\cite[Lemma 5.1]{approx-apsp} and \cite[Lemma~1]{approx-pattern-matching}.

\begin{lemma}
    \label{numbers-rounding}
    For natural numbers $x,y$ and positive $q,\eps$ satisfying $q \le x+y$
    and $0 < \eps < 1$ it holds:
    \begin{eqnarray*}
        x+y &\le \Big(\Big\lceil \frac{2x}{q\epsilon} \Big\rceil +
        \Big\lceil \frac{2y}{q\epsilon} \Big\rceil\Big)\frac{q\epsilon}{2} &< (x+y)(1+\epsilon), \\
        (x+y)(1-\epsilon) &< \Big(\Big\lfloor \frac{2x}{q\epsilon} \Big\rfloor +
        \Big\lfloor \frac{2y}{q\epsilon} \Big\rfloor\Big)\frac{q\epsilon}{2} &\le x+y.
    \end{eqnarray*}
\end{lemma}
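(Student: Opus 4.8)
The plan is to prove the four inequalities essentially independently, since each one is a short estimate about how much is lost or gained when rounding $x$ and $y$ to multiples of $\frac{q\eps}{2}$. Throughout I would write $\alpha = \frac{2}{q\eps}$ so that the quantities in question are $(\lceil \alpha x\rceil + \lceil \alpha y\rceil)/\alpha$ and $(\lfloor \alpha x\rfloor + \lfloor \alpha y\rfloor)/\alpha$, and I would repeatedly use the elementary sandwich $\alpha x - 1 < \lfloor \alpha x\rfloor \le \alpha x \le \lceil \alpha x\rceil < \alpha x + 1$, and likewise for $y$.

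For the ceiling line: the left inequality $x+y \le (\lceil\alpha x\rceil + \lceil\alpha y\rceil)/\alpha$ is immediate from $\alpha x \le \lceil\alpha x\rceil$ and $\alpha y \le \lceil\alpha y\rceil$. For the strict upper bound, I would use $\lceil\alpha x\rceil + \lceil\alpha y\rceil < \alpha x + \alpha y + 2 = \alpha(x+y) + 2$, divide by $\alpha$, and get $(\lceil\alpha x\rceil + \lceil\alpha y\rceil)/\alpha < (x+y) + \frac{2}{\alpha} = (x+y) + q\eps$. It then remains to observe that $q\eps \le (x+y)\eps$, which is exactly where the hypothesis $q \le x+y$ is used; this yields $< (x+y)(1+\eps)$. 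The floor line is symmetric: $\lfloor\alpha x\rfloor + \lfloor\alpha y\rfloor \le \alpha(x+y)$ gives the right inequality immediately, and $\lfloor\alpha x\rfloor + \lfloor\alpha y\rfloor > \alpha x + \alpha y - 2 = \alpha(x+y) - 2$ gives, after dividing by $\alpha$, the bound $> (x+y) - q\eps \ge (x+y) - (x+y)\eps = (x+y)(1-\eps)$, again invoking $q \le x+y$.

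There is no real obstacle here; the only thing to be slightly careful about is that $\lceil\cdot\rceil$ and $\lfloor\cdot\rfloor$ differ from their arguments by \emph{strictly} less than $1$ only when the argument is not an integer, but the weak bounds $\lceil z\rceil < z+1$ and $\lfloor z\rfloor > z-1$ can fail at integers — so I would instead use $\lceil z\rceil \le z + (1 - \{z\})$-type reasoning, or more simply note that at an integer $z$ we have $\lceil z\rceil = \lfloor z\rfloor = z$, and track that the strict inequalities in the statement still hold because the additive slack $q\eps$ against the available room $(x+y)\eps$ has the strictness built in via $0 < \eps < 1$ and $q \le x+y$ forcing $x+y > 0$. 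In fact the cleanest route is: the two outer inequalities ($x+y \le \cdot$ on the ceiling line, $\cdot \le x+y$ on the floor line) are non-strict and trivial; the two inner ones are strict and follow from $\lceil z_1\rceil + \lceil z_2\rceil \le \lceil z_1 + z_2\rceil + 1$ type bounds combined with $2/\alpha = q\eps \le (x+y)\eps < (x+y)$ — I would just make sure the chain of inequalities never needs strictness at a point where it might degenerate, and if it does I would pick up the needed strict sign from $\eps < 1$.
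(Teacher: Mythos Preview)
Your argument is correct and is essentially the same as the paper's: the paper simply invokes its general $k$-term rounding lemmas (Lemmas~\ref{knumbers-rounding} and~\ref{rounding-numbers-d}) at $k=2$, and those lemmas are proved by exactly the elementary bounds you use, just written via the decomposition $x_i = \tfrac{q\eps}{k}c_i + d_i$ instead of your $\alpha$-notation.

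One remark: your worry that ``$\lceil z\rceil < z+1$ and $\lfloor z\rfloor > z-1$ can fail at integers'' is unfounded --- both hold for all real $z$, since $\lceil z\rceil - z \in [0,1)$ and $z - \lfloor z\rfloor \in [0,1)$. So the strict inequalities go through directly without the extra case analysis you sketch at the end; your first paragraph already contains a complete proof.
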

\begin{proof}
    The proof is a special case of Lemmas~\ref{knumbers-rounding}
    and~\ref{rounding-numbers-d} for $k=2$.
\end{proof}

\begin{lemma}
    \label{approx-reduction}
    Assume the \minconv \emph{[$(\max,+)$-convolution]} can be solved exactly in time $T(n,W)$. Then we can
    approximate \minconv \emph{[$(\max,+)$-convolution]} in time $\Oh((T(n,\frac{4}{\eps}) + n)\log{W})$.
\end{lemma}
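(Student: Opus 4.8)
The plan is to reduce the approximate problem to a single call of the exact $T(n,W)$ algorithm on rescaled inputs, where the rescaling is done separately for each "scale" of the answer. The difficulty is that $\text{OPT}[k]$ can vary wildly with $k$, so a single uniform rounding parameter $q$ cannot work for all output positions at once: if $q$ is tuned to a large $\text{OPT}[k]$ it destroys precision for small ones, and if tuned to a small one the rescaled weights blow up past $4/\eps$. The standard fix, which I would follow, is to iterate over $\log W$ geometric scales $q = 2^0, 2^1, \ldots, 2^{\lceil \log W\rceil}$, and for each scale run the exact algorithm on the sequences rounded with parameter $q$, keeping an output entry only when it is certified to lie in the band $[q, 2q)$ (or similar) for which that scale is accurate.

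Concretely, fix a scale $q$ and define rounded sequences $A_q[i] = \lfloor \frac{2A[i]}{q\eps}\rfloor$ and $B_q[i] = \lfloor \frac{2B[i]}{q\eps}\rfloor$, but \emph{cap} every entry at $\lceil 4/\eps\rceil$ (replacing anything larger by, say, $\infty$, or simply by the cap, since such an entry cannot contribute to an optimum of magnitude $\Theta(q)$). The capped sequences have all values bounded by $\Oh(1/\eps)$, so one call to the exact algorithm costs $T(n, \Oh(1/\eps))$ time; summing over the $\Oh(\log W)$ scales and adding $\Oh(n)$ for bookkeeping gives the claimed $\Oh((T(n,4/\eps)+n)\log W)$ bound. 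For the output: let $C_q[k] = \text{exact}(A_q,B_q)[k]\cdot\frac{q\eps}{2}$ be the scaled-back value; for each $k$ I would take $C[k]$ to be the minimum over all scales $q$ of the $C_q[k]$ that are "valid", i.e. for which the witnessing pair $(A_q[i], B_q[k-i])$ was not capped.

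The correctness check has two halves, and both follow from Lemma~\ref{numbers-rounding} with $k=2$. \textbf{Upper bound on quality:} let $q^* = \pow(\text{OPT}[k])$ be the largest power of $2$ not exceeding $\text{OPT}[k]$, so $q^* \le \text{OPT}[k] < 2q^*$. At scale $q^*$, the optimal pair $(i^*, k-i^*)$ achieving $\text{OPT}[k] = A[i^*]+B[k-i^*]$ satisfies $A[i^*], B[k-i^*] \le \text{OPT}[k] < 2q^*$, hence $\lfloor \frac{2A[i^*]}{q^*\eps}\rfloor \le \frac{4}{\eps}$ and likewise for $B$, so neither coordinate is capped and this pair survives in the rescaled instance. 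Applying the second line of Lemma~\ref{numbers-rounding} with $x = A[i^*]$, $y = B[k-i^*]$, $q = q^*$ (the hypothesis $q^* \le x+y = \text{OPT}[k]$ holds) gives that the rescaled value of this pair lies in $(\text{OPT}[k](1-\eps), \text{OPT}[k]]$, so $C_{q^*}[k]$, being at most this quantity, is $\le \text{OPT}[k]$ — wait, the exact min-convolution at scale $q^*$ could in principle pick an even smaller rescaled sum coming from a different pair; but rescaling is monotone up to the $\pm$ slack, so any pair the algorithm returns has true sum at most $\frac{q^*\eps}{2}(\text{rescaled sum}) / (1-\eps)$-ish — the clean way is: \textbf{(a)} the algorithm's rescaled answer is $\le$ the rescaled value of the optimal pair $\le \text{OPT}[k]\cdot\frac{2}{q^*\eps}$ (by the second line, dividing by $\frac{q^*\eps}{2}$), hence $C_{q^*}[k] \le \text{OPT}[k]$... no: rescaling back re-multiplies, giving $C_{q^*}[k] \le \text{OPT}[k]$ only after noting the floor, which the second line of the lemma handles exactly, giving $C_{q^*}[k] \le \text{OPT}[k]$. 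And \textbf{(b)} for the lower bound $C[k] \ge \text{OPT}[k]$: any valid pair $(i, k-i)$ at \emph{any} scale $q$ has uncapped rescaled coordinates, and the first line of Lemma~\ref{numbers-rounding} (applied to $x = A[i], y = B[k-i]$, noting $A[i]+B[k-i] \ge \text{OPT}[k] \ge q$ whenever the pair is relevant — here I would restrict attention to scales $q \le \text{OPT}[k]$, discarding a scale's output for position $k$ when the rescaled-back value is below $q$, which is a cheap test) gives rescaled-back value $\ge A[i]+B[k-i] \ge \text{OPT}[k]$. Combining, $C[k] \in [\text{OPT}[k], (1+\eps)\,C_{q^*}[k]] \subseteq [\text{OPT}[k], (1+\eps)\text{OPT}[k]]$, after replacing $\eps$ by $\eps/2$ at the outset to absorb the constant.

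The main obstacle, and where I would spend the most care, is the capping interacting with the min in the exact convolution: I must ensure the algorithm never returns a value built from a capped coordinate while silently discarding the genuine optimum. The cleanest device is to set capped entries to a sentinel $+\infty$ (for $(\min,+)$) — then they are automatically never chosen — together with the per-position validity test "$C_q[k] \ge q$" to discard scales that are too coarse to even represent $\text{OPT}[k]$, and "$C_q[k] < 2q\cdot(\text{something})$" is \emph{not} needed because the cap already enforces it. For $(\max,+)$-convolution the roles of floors/ceilings and the sentinel ($-\infty$ and capping at the \emph{high} end) swap symmetrically, exactly as in the proof of Lemma~\ref{fast-exact-minconv}. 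Writing out these sentinel conventions and the $\eps/2$ bookkeeping is routine; the geometric-scales idea is the whole content.
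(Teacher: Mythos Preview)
Your approach is essentially the paper's: iterate a geometric sequence of scales $q$, round the inputs with granularity $\frac{q\eps}{2}$, cap entries exceeding $\lceil 4/\eps\rceil$ to $\infty$, call the exact algorithm once per scale, and combine the results. The one substantive slip is the rounding direction. You define $A_q[i] = \lfloor 2A[i]/(q\eps)\rfloor$ with \emph{floors}, but in part~(b) you invoke ``the first line of Lemma~\ref{numbers-rounding}'' to conclude that the rescaled-back value is at least $A[i]+B[k-i]$ --- that inequality is the ceiling line. With floors the unconditional direction is rescaled-back $\le x+y$, so your argument as written does not yield $C[k]\ge\text{OPT}[k]$; it gives the opposite inequality, and you would have to divide the final answer by $(1-\eps)$ to land on the required side of $\text{OPT}[k]$.

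The paper rounds with ceilings, which makes the argument cleaner: the bound rescaled-back $\ge x+y$ then holds at every scale without any hypothesis on $q$, so $C[k]\ge\text{OPT}[k]$ is automatic and your per-position validity test becomes unnecessary. The upper bound comes from the single scale $q^*$ with $q^*\le\text{OPT}[k]<2q^*$, where the optimal pair survives the cap and Lemma~\ref{numbers-rounding} bounds its rescaled-back value by $(1+\eps)\,\text{OPT}[k]$. For combining across scales the paper simply iterates $q$ from large to small and overwrites the output array whenever the exact call returns a finite value, rather than taking a minimum over scales as you propose.
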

\begin{algorithm}
    \caption{$\textsc{ApproximateMinConv}(A,B)$. We use a simplified notation to
    transform all elements in the sequences $A[i]$ and $B[i]$.}
	\label{alg:approx-minconv}
\begin{algorithmic}[1]
    \State $\text{Output}[i] = \infty$
    \For {$l=2\lceil\log{W}\rceil,\ldots,0$}
      \State $q := 2^l$ % \Comment{For precision parameter in $2W,W,\ldots,4,2,1$}
      \State $A'[i] = \lceil \frac{2A[i]}{q \eps} \rceil$ % \Comment{Round sequence $a$ according to Lemma~\ref{numbers-rounding}}
      \If {$A'[i] > \ceil{4/\eps}$} % \Comment{Discard the overflowing numbers}
         \State $A'[i] = \infty$
      \EndIf
      \State $B'[i] = \lceil \frac{2B[i]}{ q \eps} \rceil$ % \Comment{Same for sequence $B$}
      \If {$B'[i] > \ceil{4/\eps}$}
         \State $B'[i] = \infty$
      \EndIf
      \State $V = \text{runExact}(A',B')$ % \Comment{This runs in $T(n,4/\eps)$ time}
      \If { $V[i] < \infty$ }
          \State $\text{Output}[i] = V[i] \cdot \frac{q \eps}{2}$ % \Comment{Multiply the result accordingly to Lemma~\ref{numbers-rounding}}
      \EndIf
    \EndFor
    \State \Return $\text{Output}[0,\dots,n-1]$
\end{algorithmic}
\end{algorithm}
\begin{proof}
The idea is based on~\cite[Section 6.2]{approx-pattern-matching}.
We focus on the variant with \minconv, however the proofs works alike
for \maxconv.

We iterate the \emph{precision parameter} $q$ through $2W,W,\ldots,4,2,1$.
In each iteration we apply the transform from Lemma~\ref{numbers-rounding}
($x \rightarrow \big\lceil \frac{2x}{q\epsilon} \big\rceil$) to all elements in $A,B$,
we set $\infty$ for each value exceeding $\ceil{\frac{4}{\eps}}$, and launch
the exact algorithm on such input.
We multiply all finite elements in the returned array by $\frac{q\epsilon}{2}$
and store them in the output array $C$,
possibly overwriting some elements.

Assume the correct value of $C[k]$ equals $A[i] + B[k-i]$.
For some iteration we get the precision parameter $q$ such that
$q \le C[k] < 2q$.
The rounded numbers $\Big\lceil \frac{2A[i]}{q\epsilon} \Big\rceil,\,
\Big\lceil \frac{2B[k-i]}{q\epsilon} \Big\rceil$ are at most $\ceil{\frac{4}{\eps}}$,
so we will update the $k$-th index in the output array.
On the other hand, the assumption of Lemma~\ref{numbers-rounding} is satisfied,
therefore the generated value lies between $C[k]$ and $C[k](1+\eps)$.
In the following iterations, we will still have $q \le C[k]$,
therefore any further updates to the $k$-th index will remain valid. 
    
The algorithm performs $\Oh(\log{W})$ iterations and in each
step we run the exact algorithm in time $T(n,\frac{4}{\epsilon})$, thanks
to the pruning procedure. Transforming the
sequences takes $\Oh(n)$ time in each step.
\end{proof}

%Combining these two lemmas, we obtain the main result of this section:

\begin{theorem}[Apx for $(\min/\max,+)$-conv]
    \label{approx-minconv}
    There is a deterministic algorithm for $(1+\eps)$-approximate \minconv \emph{[$(\max,+)$-convolution]} running in
    $\Oh\left(\frac{n}{\eps}\log{(\frac{n}{\eps})}\log{W}\right)$ time.
\end{theorem}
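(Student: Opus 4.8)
The plan is to simply compose the two ingredients already established: the exact $\Oh(nW\log(nW))$-time algorithm of Lemma~\ref{fast-exact-minconv} and the rounding reduction of Lemma~\ref{approx-reduction}. No new idea is needed; the theorem is the arithmetic that results from chaining these two statements.

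First I would instantiate Lemma~\ref{approx-reduction} with $T(n,W) = \Oh(nW\log(nW))$, which is legitimate by Lemma~\ref{fast-exact-minconv} and holds identically for \maxconv. The reduction (Algorithm~\ref{alg:approx-minconv}) invokes the exact routine only on instances whose finite entries are bounded by $\ceil{4/\eps}$, so the parameter fed to the exact algorithm is $W' = \Oh(1/\eps)$ rather than the original bound $W$. The $\infty$-entries introduced by the pruning step are harmless: they can simply be omitted from the unary encoding in the proof of Lemma~\ref{fast-exact-minconv}, since a position marked $\infty$ can never contribute to a finite output value, so that lemma applies verbatim to the reduced instances.

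Next I would substitute $W' = 4/\eps$ into the exact running time: $T(n, 4/\eps) = \Oh\big((n/\eps)\log(n/\eps)\big)$, using $\log(nW') = \log(4n/\eps) = \Oh(\log(n/\eps))$. Plugging this into the bound $\Oh\big((T(n,4/\eps) + n)\log W\big)$ furnished by Lemma~\ref{approx-reduction} yields $\Oh\big(((n/\eps)\log(n/\eps) + n)\log W\big) = \Oh\big((n/\eps)\log(n/\eps)\log W\big)$, which is exactly the claimed running time. The resulting algorithm is deterministic because both Lemma~\ref{fast-exact-minconv} and Lemma~\ref{approx-reduction} are deterministic, and its correctness (the $(1+\eps)$ guarantee, entry-wise) is inherited directly from Lemma~\ref{approx-reduction}.

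There is essentially no serious obstacle here — the statement is a bookkeeping composition. The only point that warrants a moment of care is confirming that the $\log W$ factor in the final bound is precisely the one coming from the outer loop over precision parameters $q = 2W, W, \dots, 1$ in Algorithm~\ref{alg:approx-minconv} (which has $\Oh(\log W)$ iterations), and not a disguised reappearance of $W$ inside the exact subroutine: since the subroutine is always run on instances with entries of magnitude $\Oh(1/\eps)$, its cost is independent of $W$, so $W$ enters the total cost only logarithmically, as required.
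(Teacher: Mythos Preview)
Your proposal is correct and matches the paper's own proof essentially line for line: both simply instantiate Lemma~\ref{approx-reduction} with the exact $\Oh(nW\log(nW))$ algorithm from Lemma~\ref{fast-exact-minconv}, substitute $W \leftarrow \Oh(1/\eps)$, and observe that the $\Oh(n)$ additive term is dominated. Your extra remarks about handling the $\infty$ entries and the source of the $\log W$ factor are sound clarifications but go slightly beyond the paper's terser treatment.
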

\begin{proof}
    From Lemma~\ref{fast-exact-minconv} the running time of exact algorithm is $T(n,W) =
    \Oh(nW\log{n}\log{W})$. This quantity dominates the additive term $\Oh(n\log{W})$.
    Hence by replacing each $W$ with $1/\eps$
    we get the claimed running time.
\end{proof}

\section{Tree Sparsity}\label{sec:tree-sparsity}

The \sparsity problem has been stated as follows: given a node-weighted binary tree and an integer $k$, find a rooted subtree of size $k$ with the maximal weight.
Its approximation version comes with two flavors:
as a \emph{head} approximation where we are supposed to maximize the weight of the solution, and as a \emph{tail} approximation where we minimize the total weight of nodes that do not belong to the solution.
Note that a constant approximation for one of the variants does not necessarily yield a constant approximation for the other one.
\citet{tree-sparsity} proposed
an $\Oh\big(\frac{n}{\eps^2}\cdot\log^{12}{n}\cdot\log^2{W}\big)$ running time for $(1-\eps)$-head approximation,
and an $\Oh\big(\frac{n}{\eps^3}\cdot\log^9{n}\cdot\log^3{W}\big)$ running time for $(1+\eps)$-tail approximation.

In this section we improve the running times for both variants relying on the $\Otilde\big(\frac{n}{\eps}\big)$ algorithm for approximating $(\min,+)$ and $(\max,+)$ convolutions.
Our construction is based on the approach by~\citet{icalp2017}
which also results in a simpler analysis than for the previously known approximation schema~\cite{tree-sparsity}.
In particular, a single proof suffices to cover both head and tail variants.

The following theorem, combined with our approximation for \minconv  yields an $\Oh\big(\frac{n}{\eps}\cdot\log(n/\eps)\cdot\log^3{n}\cdot\log{W}\big)$-time algorithm that computes the maximal weights of rooted subtrees for each size $k=1,\dots,n$ with a relative error at most $\eps$ in both head and tail variant.

\begin{theorem}
% There is an $\Oh\big(\frac{n}{\eps}\cdot\log(n/\eps)\cdot\log^3{n}\cdot\log{W}\big)$-time algorithm that computes the maximal weights of rooted subtrees for each size $k=1,\dots,n$ with a relative error at most $\eps$ in both head and tail variant.
% For a given $k$, the approximate solution can be retrieved in time $\Oh(n\log{n})$.

    If $(1+\eps)$-approximate \minconv can be solved in time $T(n,W,\eps)$, then
    $(1+\eps)$-approximate \sparsity can be solved in time $\Oh\big(\big(n +
    T(n,W,\eps/\log^2{n})\big) \log{n}\big)$.

\end{theorem}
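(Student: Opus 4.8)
The plan is to recast \sparsity as a divide-and-conquer over the input tree in which every merge step is a single $(\max,+)$-convolution, for the head variant, or a single $(\min,+)$-convolution, for the tail variant, and then to replace every exact convolution by an approximate one. For a node $v$ with children $\ell,r$ let $g_v[k]$ be the maximum weight of a size-$k$ subtree rooted at $v$; then $g_v[k]=w(v)+\max_{a+b=k-1}(g_\ell[a]+g_r[b])$ with $g_\ell[0]=g_r[0]=0$, i.e.\ a shifted \maxconv of the children's arrays, and computing $g_v$ for all $v$ yields the head answers for all sizes $k=1,\dots,n$ at once. The tail variant obeys the mirror recurrence: writing $t_v[k]$ for the minimum total weight removed from the subtree of $v$ when one keeps a size-$k$ subtree rooted at $v$ (so $t_v[0]$ is the whole subtree weight), $t_v[k]=\min_{a+b=k-1}(t_\ell[a]+t_r[b])$, a \minconv. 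The two recurrences coincide after exchanging $\max\leftrightarrow\min$ and ``from below''$\leftrightarrow$``from above'', so one argument handles both; the reduction calls only one variant per split, and the rounding-based algorithm underlying Theorem~\ref{approx-minconv} solves both within the same time $T$, so we freely use whichever we need. I will describe the head case.

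First I would verify that relative error is preserved by the two operations used and only compounds along directed paths of the computation. If $\tilde A,\tilde B$ satisfy $(1-\eps')A\le\tilde A\le A$ and $(1-\eps')B\le\tilde B\le B$ entrywise, then for every split $\tilde A[i]+\tilde B[j]\in[(1-\eps')(A[i]+B[j]),\,A[i]+B[j]]$, so maximising over $i$ shows the exact \maxconv of $\tilde A,\tilde B$ is a $(1-\eps')$-approximation from below of that of $A,B$; running instead a $(1-\eps')$-approximate \maxconv costs one further factor $(1-\eps')$, and a pointwise maximum of two such arrays inherits the worse factor. Hence, if every convolution in the computation is run at precision $\eps'=\Theta(\eps/\log^2 n)$ and the convolution-dependency DAG has depth $O(\log^2 n)$, the final arrays satisfy $(1-\eps)g_v\le\tilde g_v\le g_v$; in particular $\tilde g_{\mathrm{root}}[k]$ is a valid $(1-\eps)$-head approximation, and the mirror run gives $\tilde t_{\mathrm{root}}[k]$ that $(1+\eps)$-approximates the optimal removed weight. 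An actual subtree attaining each reported value is recovered by carrying witnessing split indices through every convolution and pointwise maximum, with $\Ot(1)$ overhead.

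The main obstacle is to organise the recurrence so that the convolution-dependency DAG is shallow even when the tree degenerates to a path; here I follow the balanced-decomposition idea of \citet{icalp2017}, concretely via a heavy-path decomposition. Along a heavy path $v_1,\dots,v_m$ rooted at $v_1$, each $v_i$ carries at most one light child $u_i$ whose array $g_{u_i}$ has already been produced by recursion on a strictly smaller subtree. Unrolling the path recurrence expresses $g_{v_1}$ as $\max(\delta_0,\,c_1,\,c_1\star c_2,\,\dots,\,c_1\star\cdots\star c_m)$, where $\star$ denotes \maxconv, $\delta_j$ is the one-hot array carrying value $0$ at position $j$, and $c_i$ is the one-hot array for $w(v_i)$ at position $1$ convolved with $g_{u_i}$. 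This ``maximum over all prefix products'' is evaluated by a balanced divide-and-conquer on the index interval that stores, for each segment $[i,j]$, both its full product $c_i\star\cdots\star c_j$ and the maximum over its prefixes, combining two segments with $O(1)$ convolutions and a pointwise maximum; its depth is $O(\log m)$. Since every root-to-leaf path of the original tree crosses $O(\log n)$ light edges, the nesting of these per-path computations has depth $O(\log n)$, so the overall convolution-dependency DAG has depth $O(\log^2 n)$ — exactly what licenses $\eps'=\eps/\log^2 n$. The delicate part is keeping the bookkeeping of the admissible shape (a prefix of the heavy path together with arbitrary subtrees hanging off the corresponding light children) correct so that no convolution double-counts a vertex; the pointwise maximum with $\delta_0$ is precisely what encodes ``stop extending the path''.

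Finally, for the running time I would sum the convolution costs. Inside the divide-and-conquer for one heavy path the arrays processed at each of its $O(\log n)$ recursion levels have total length at most the size of the subtree rooted at the path's top, and these top-subtree sizes sum to $O(n\log n)$ over all heavy paths, so the total length of all convolutions performed is $\Ot(n)$. Using that $T(\cdot,W,\eps')$ is monotone and, up to constants, superadditive in its first argument, the sum of all convolution costs is $\Ot\big(n+T(n,W,\eps')\big)$; adding the $\Oh(n\polylog n)$ cost of building the heavy-path decomposition, observing that the precondition $q\le x+y$ of Lemma~\ref{numbers-rounding} is automatically met because the convolution routine sweeps all scales, and noting that array values are bounded by the total weight (so $W$ grows by at most a factor $n$, harmless inside a logarithm), we obtain the claimed $\Oh\big((n+T(n,W,\eps/\log^2 n))\log n\big)$ bound, with the outer $\log n$ absorbing the remaining per-level overhead.
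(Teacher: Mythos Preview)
Your proof is correct and follows essentially the same route as the paper: heavy-path (spine) decomposition, a balanced divide-and-conquer along each spine that keeps both the full $(\max,+)$-product and the prefix-maximum, yielding a convolution DAG of depth $O(\log^2 n)$ and hence the precision $\eps'=\Theta(\eps/\log^2 n)$; the head/tail duality and the error-compounding argument are likewise identical. Your ``maximum over prefix products'' is just a repackaging of the paper's recurrence~(\ref{eq:spine-binary}), and your size accounting (spine-head subtree sizes summing to $O(n\log n)$, total convolution length $\Otilde(n)$) matches the paper's, with the same implicit near-linearity assumption on $T$ needed to conclude the stated bound.
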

\begin{proof}
We exploit the heavy-light decomposition introduced
by~\citet{heavy-light}.
This technique has been utilized by~\citet{tree-sparsity} in their work on \sparsity approximation and later by~\citet{icalp2017} in order to show a subquadratic equivalence between \sparsity and \minconv.

We construct a \emph{spine} with a \emph{head} $s_1$ at the root of the tree.
We define $s_{i+1}$ to be the child of $s_i$ with the larger subtree (in case of draw we choose any child)
and the last node in the spine is a leaf.
The remaining children of nodes $s_i$ become heads for analogous spines so the whole tree gets covered.
Observe that every path from a leaf to the root intersects at most $\log{n}$ spines because
each spine transition doubles the subtree size.

At first we express the head variant in the convolutional paradigm.
For a node $v$ with a subtree of size $m$ we define the sparsity vector $(x^v[0], x^v[1], \dots, x^v[m])$ of weights of the heaviest subtrees rooted at $v$ with fixed sizes.
This vector equals the \maxconv of the sparsity vectors for the children of $v$.
We are going to compute sparsity vectors for all heads of spines in the tree recursively.
Having this performed we can read the solution from a sparsity vector of the root.
Let $(s_i)_{i=1}^\ell$ be a spine with a head $v$ and let $u^i$ indicate
the sparsity vector for the child of $s_i$ being a head (i.e., the child with the smaller subtree).
If $s_i$ has less than two children we treat $u^i$ as a vector~$(0)$.

For an interval $[a, b] \subseteq [1, \ell]$ let
$u^{a, b} = u^a \oplus^{\max} u^{a+1} \oplus^{\max} \dots \oplus^{\max} u^b$ and $y^{a, b}[k]$ be the maximum
weight of a subtree of size $k$ rooted at $s_{a}$ and not containing $s_{b+1}$.
Let $c = \floor{\frac{a+b}{2}}$.
The $\oplus^{\max}$ operator is associative so $u^{a, b} = u^{a, c} \oplus^{\max} u^{c+1, b}$.
To compute the second vector we consider two cases: whether the optimal subtree contains $s_{c+1}$ or not.

\begin{align}
    \label{eq:spine-binary} 
y^{a, b}[k] &= \max\bigg[y^{a, c}[k],\quad \sum_{i=a}^c x(s_i) 
    + \max_{k_1 + k_2 = k - (c - a + 1)} \Big(u^{a, c}[k_1] + y^{c+1, b}[k_2]
    \Big)\bigg] \\
&= \max\bigg[y^{a, c}[k],\quad \sum_{i=a}^c x(s_i) 
    + \Big(u^{a, c} \oplus^{\max} y^{c+1, b}\Big)\big[k - (c - a + 1)\big]
    \bigg]  \nonumber
\end{align}

Using the presented formulas we reduce the problem of computing $x^v = y^{1, \ell}$ to
subproblems for intervals $[1, \frac{\ell}{2}]$ and $[\frac{\ell}{2}+1, \ell]$ and results are merged
with two $(\max,+)$-convolutions.
Proceeding further we obtain $\log{\ell}$ levels of recursion.
Since there are $\Oh(\log n)$ spines on a path from a leaf to the root,
the whole computation tree has $\Oh(\log^2 n)$ layers, each node being
expressed as a pair of convolutions on vectors from its children.
Each vertex of the graph occurs in at most $\log n$ convolutions so the
sum of convolution sizes is $\Oh(n\log n)$.

In order to deal with the tail variant we consider a dual sparsity vector
$(\overline{x}^v[0], \overline{x}^v[1], \dots, \overline{x}^v[m])$,
where $\overline{x}^v[i]$ stands for the total weight of the subtree rooted at $v$ minus $x^v[i]$.
The dual sparsity vector of $v$ equals the \minconv of the vectors for the children of $v$.
We can use an analog of equation (\ref{eq:spine-binary}) and also express
the problem as a computation tree based on convolutions.

We take advantage of Theorem~\ref{approx-minconv} to perform each convolution with a relative error $\delta$.
The formula (\ref{eq:spine-binary}) contains an additive term $\sum_{i=a}^c x(s_i)$ but this can only decrease the relative error.
The cumulative relative error is bounded by $(1 - \delta)^{\log^2 n}$ for head approximation and $(1 + \delta)^{\log^2 n}$ for tail approximation, therefore setting $\delta = \Theta(\eps/\log^2{n})$ guarantees that the sparsity vector for the root is burdened with relative error at most $\eps$.

The sum of running times for all convolutions is
%$\Oh\big(\frac{n}{\delta}\cdot\log(n/\delta)\cdot\log{n}\cdot\log{W}\big)$,
$\Oh\big(T(n, W, \delta)\log n\big)$,
what gives the postulated running time for the whole algorithm.
In order to retrieve the solution for a given $k$, we need to find the 
pair of indices that produced the value of the $k$-th index of the last convolution.
Then we proceed recursively and traverse back the computation tree.
Since finding $\argmax$ and $\argmin$ can be performed in linear time,
the total time of analyzing all convolutions is $\Oh(n\log{n})$.
\end{proof}

\section{$\Ot(n+1/\eps)$ approximation algorithm for \IIIsum}
\label{sec:ksum}

In the abstract we have claimed that our result for \partition constitutes the first
approximation algorithm for NP-hard problem that breaks the quadratic barrier. However
this is not necessary the case for the problems in P. In this section we will
show an $\Ot(n+1/\eps)$ approximation algorithm for \IIIsum and prove accompanying
lower bound under a reasonable assumption. To the best of our knowledge, this
is also the first nontrivial linear
approximation algorithm for a natural problem.

\defproblem{\ksum}
{\label{def:ksum}Sets $A_1,A_2,\ldots,A_{k-1}, S$, each with cardinality at most $n$.}
{Decide if there is a tuple $(a_1,\ldots,a_{k-1},s) \in A_1\times \ldots\times A_{k-1} \times S$ such that $a_1+\ldots +a_{k-1}=s$.}

The $\IIIsum$ problem is a special case of \ksum for $k=3$.  The \IIIsum is one
of the most notorious problems with a quadratic running time
and has been widely accepted as
a hardness assumption (see~\cite{ipec-survey} for overview).
The fastest known algorithm for \IIIsum is slightly subquadratic: \citet{3sum1} gave an
$\Oh(n^2(\log{\log{n}}/\log{n})^{2/3})$-time deterministic algorithm and then
independently~\citet{3sum2} and~\citet{3sum3} improved this result by presenting an
$\Oh(n^2\log{\log{n}}/\log{n})$-time algorithm.

The approximation variant for \IIIsum was considered by \citet{3sum-approx} who
showed a deterministic $\Ot(\frac{n}{\eps})$ algorithm as a byproduct of finding
longest approximate periodic patterns. If we are not interested in exact
solution, the~\citet{3sum-approx} algorithm is polynomially faster than the best
exact algorithm for \IIIsum. In this section we show how to solve
\IIIsum approximately in time $\Ot(n + 1/\eps)$ time and prove this tight up to
the polylogarithmic factors.

\defproblem{Approximate \IIIsum (\cite{3sum-approx})}
{\label{aksum} Three sets $A$, $B$, $C$ of positive integers, each with cardinality at most $n$.}
{The algorithm:
\begin{itemize}
    \item concludes that no triple $(a,b,c) \in A\times B\times C$ with $a+b=c$
        exists, or
    \item it outputs a triple $(a,b,c) \in A\times B \times C$ with $a+b \in [c/(1+\eps),\,c(1+\eps)].$
\end{itemize}}

This definition generalizes to \ksum, however we are unaware about any
previous works on approximate \ksum.

\subsection{Faster approximation algorithm for \IIIsum}

In this section we present an $\Ot(n+1/\eps)$-time approximation scheme
for \IIIsum problem. We use a technique from Section~\ref{approximate-minconv},
where we gave the fast approximation algorithm for \minconv. As previously, we
start with a fast $\Ot(n+W)$ exact algorithm
and then utilize rounding to get an approximation algorithm. In the
Section~\ref{lower-bounds} we will show a conditional optimality of this result.

\subsubsection{Exact $\Ot(n+W)$ algorithm for \IIIsum}

Let $W$ denote the upper bound on the integers in the sets $A$,$B$ and $C$.
The exact $\Ot(n+W)$-time algorithm for \IIIsum is already well known~\cite{cormen,chan}.
In here we will place the proof for completeness. For formal reasons
we need to take care of the special symbol $\infty$. What is more, we will
generalize this result to \ksum.

\begin{theorem}[Based on~\cite{cormen,chan}]
    \label{exact-ksum}
    The \ksum can be solved deterministically in $\Ot(kn+kW\log{W})$ time and
    $\Ot(kn+W)$ space.
\end{theorem}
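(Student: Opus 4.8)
The \ksum can be solved deterministically in $\Ot(kn+kW\log{W})$ time and $\Ot(kn+W)$ space.

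The plan is to reduce \ksum to a single multi-way convolution over indicator vectors, exactly mirroring the encoding trick used in Lemma~\ref{fast-exact-minconv}. First I would build, for each set $A_j$ with $j=1,\dots,k-1$, a binary array $\chi_j$ of length $W+1$ with $\chi_j[x]=1$ iff $x\in A_j$; this takes $\Oh(kn+kW)$ time. Then I would compute the iterated convolution $\chi = \chi_1 \oplus \chi_2 \oplus \cdots \oplus \chi_{k-1}$ via FFT. The key point for the running time is that we do not need the full convolution, whose support could reach $(k-1)W$: we only care about entries indexed by elements of $S$, and since every $s\in S$ satisfies $s\le W$, it suffices to truncate each partial convolution to the range $[0,W]$ after each of the $k-2$ FFT multiplications. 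Each such truncated convolution of two length-$(W{+}1)$ arrays costs $\Oh(W\log W)$, so the total is $\Oh(kW\log W)$ time and $\Oh(W)$ space for the arrays (plus $\Oh(kn)$ to read the input). Finally I would scan $S$ and answer \textbf{yes} iff $\chi[s]\ge 1$ for some $s\in S$; recovering a witness tuple, if desired, is done by storing one witnessing split index at each convolution step and backtracking through the $k-2$ levels, which adds only polylogarithmic overhead.

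The one genuinely fussy point — and the reason the statement explicitly flags ``we need to take care of the special symbol $\infty$'' — is that in the context where this lemma is later applied (the approximate \IIIsum algorithm), the input sets may contain a sentinel value $\infty$ marking ``pruned'' elements that must never participate in a valid solution. The fix is purely bookkeeping: when constructing $\chi_j$ we simply do not set the bit for any $\infty$ entry (equivalently, treat $\infty$ as a value strictly larger than $W$ and outside the truncation window), and symmetrically we skip $\infty\in S$ when scanning for a hit. Since truncation to $[0,W]$ already discards everything above $W$, no $\infty$ value can contribute to $\chi[s]$ for a finite $s\le W$, so correctness is immediate once one checks that a finite sum $a_1+\dots+a_{k-1}=s\le W$ forces every $a_j\le W$ to be finite as well.

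I expect the main obstacle to be not any deep idea but getting the truncation bound and the per-level cost stated cleanly so that the $\log$ factor comes out as $\Oh(kW\log W)$ rather than, say, $\Oh(k W\log(kW))$ — this is exactly the kind of ``hide redundant running-time factors'' remark the paper makes elsewhere, and here it follows because each FFT is on arrays of length $\Oh(W)$ after truncation, independent of $k$. A secondary subtlety is argument order: to keep each intermediate array of length $\Oh(W)$ one must truncate \emph{after every} pairwise multiplication rather than multiplying all $k-1$ arrays first; associativity of convolution makes this legitimate, and truncation commutes with the property ``the $[0,W]$-restriction of the full convolution'', which is all we need to read off answers for targets in $S\subseteq[0,W]$.
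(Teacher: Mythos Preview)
Your proposal is correct and follows essentially the same argument as the paper: encode each $A_j$ as a binary indicator vector of length $W{+}1$, compute the iterated convolution via FFT while truncating to $[0,W]$ after every pairwise product, discard any $\infty$ entries during encoding, and finally test membership against~$S$. The paper makes the same space observation (keep only the current partial convolution, so working space is $\Oh(W)$ plus $\Oh(kn)$ for the input), and your added remarks on witness recovery and why the per-step FFT length stays $\Oh(W)$ independent of $k$ are consistent with---and slightly more explicit than---the paper's treatment.
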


\begin{proof}
    We will encode the numbers in the sets as binary arrays of size $\Oh(W)$ and
    iteratively perform fast convolution using FFT. Because we will use only
    $\Oh(1)$ tables at once, the space complexity will not depend on $k$. At the
    end we will need to check if any entry in the final array is in $S$.

    \paragraph{Encoding:} We iterate for every set $A_1,\ldots,A_{k-1}$ and
    for $l$-th iteration encode it as a binary vector $V$ of length $W+1$, such that:
    \begin{displaymath}
        V_l[i] = \begin{cases}
            1 &\text{iff}\; t \in A_l\\
            0 & \text{otherwise}
        \end{cases}
    \end{displaymath}
    to save space we will use only one $V_l$ vector at the time. The encoding
    can be done in $\Oh(n+W)$ time. If the special symbol $\infty\in A_l$
    appears then we simply discard it.

    \paragraph{FFT:} We want to perform a convolution with FFT on all vectors $V_l$. We do it
    one at a time and discard all elements larger than $W$.
    Let $U_l$ be the result of up to $l$-th iteration. We know that the
    proper polynomial is $U_l(x)$ $ = \sum_{(a_1,\ldots,a_l) \in
    (A_1\times\ldots \times A_l)}$ $ x^{a_1+\ldots a_l}$.  
    And if we multiply it by the polynomial $V_{l+1} = \sum_{a_{l+1} \in A_{l+1}}
    x^{a_{l+1}}$, we get $U_{l+1}(x) =$ $ \sum_{(a_1,\ldots,a_{l+1}) \in
    (A_1\times\ldots \times A_{l+1})} $ $x^{a_1+\ldots a_{l+1}}$.

    Hence at the end we obtain the vector $V_{k-1}$ that encodes all the sums of
    elements in subsets truncated up to $W$ place.

    \paragraph{Comparing} At the end we need to get the binary vector for $S$
    and compare it with the resulting vector $V_{k-1}$. 

    \paragraph{Time and Space} We did $k$ iterations. In each of them we
    transformed a set into a vector in time $\Oh(n)$. The fast convolution works in
    $\Oh(T\log{T})$ by using FFT.
    Hence, the running time is $\Oh(kn + kW\log{W})$. Algorithm needs $\Oh(nk)$
    space to encode input and $\Oh(W)$ space to store binary vectors.
\end{proof}

\subsubsection{Approximation algorithm}

Next we will use an exact algorithm to propose the fast approximation. We will
use the same reasoning as in Section~\ref{approx-minconv}.

\begin{lemma}
    \label{reduction-ksum}
    Assume the \ksum can be solved exactly in $T(n,k,W)$ time. Then approximate
    \ksum can be solved in $\Oh((T(n,k,k/\eps) + nk)\log{W})$ time.
\end{lemma}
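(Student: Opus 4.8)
The plan is to follow the same exact-to-approximate template as Lemma~\ref{approx-reduction} / Algorithm~\ref{alg:approx-minconv}: scan a \emph{precision parameter} $q$ over the powers of two $2W,W,\dots,2,1$, and for each $q$ round every element of $A_1,\dots,A_{k-1},S$ via the map $x\mapsto\floor{\tfrac{2x}{q\eps}}$, discard (set to $\infty$) everything exceeding a cap, run the given exact \ksum algorithm on the rounded instance, and — this is where the decision-flavour of the problem matters — from any returned tuple recover the corresponding original tuple, \emph{verify} that it satisfies $a_1+\dots+a_{k-1}\in[\,s/(1+\eps),\,s(1+\eps)\,]$, and output it if so. We report ``no exact triple exists'' only if every one of the $\Oh(\log W)$ oracle calls either fails or returns an unverifiable tuple. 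The rounding arithmetic is governed by Lemma~\ref{numbers-rounding} (its $k$-summand generalisation, which holds here since $k-1$ items are summed).

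The complication absent from \minconv, and the source of the $k/\eps$ in the statement, is that summing $k-1$ rounded items introduces an additive error of $\Theta(k)$ in the rounded target, and — since the exact algorithm is a black box returning \emph{some} witness we cannot choose — we need the rounded instance to contain \emph{only} tuples that verify. I would handle the $\Theta(k)$ slack by adjoining an artificial ``slack'' summand set $\{0,1,\dots,k-2\}$, turning the rounded instance into a $(k{+}1)$-\textsc{SUM} instance with target set $\hat S=\{\floor{\tfrac{2s}{q\eps}}:s\in S\}$; this keeps the cardinality $\Oh(n)$ (rather than blowing $S$ up by a factor $k$). I would set the cap at $\ceil{4k/\eps}$, and, crucially, restrict $\hat S$ at scale $q$ to the \emph{in-scale} values $\hat s\in[\tfrac{2k}{\eps},\ceil{4k/\eps}]$. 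Then at the ``correct'' scale $q$ for a genuine triple (the power of two with $s\in[kq,2kq)$, so that $s$ is in-scale and all $\floor{\tfrac{2a_i}{q\eps}}$ stay below the cap), the rounded instance does have a solution, so the oracle returns something; and conversely any solution it returns recovers originals with $\sum a_i$ within $\pm(k-1)\tfrac{q\eps}{2}$ of the recovered target $s''$, while $s''\ge kq$ forces this slack below $\eps s''$ — hence the recovered tuple verifies. Triples whose total is smaller than $k$ (and therefore never in-scale) are caught by one preliminary exact call on the sub-instance obtained by deleting all elements $\ge k$, which costs $T(n,k,k)\le T(n,k,k/\eps)$.

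For the running time: each of the $\Oh(\log W)$ iterations spends $\Oh(nk)$ on rounding, building the $(k{+}1)$-\textsc{SUM} instance, and verification, plus one exact call of cost $T(\Oh(n),\,k{+}1,\,\Oh(k/\eps))$, which I treat as $\Oh(T(n,k,k/\eps))$ (absorbing the constant factors in $k$ and $k/\eps$, and using monotonicity of $T$); adding the preliminary call gives the claimed $\Oh\big((T(n,k,k/\eps)+nk)\log W\big)$. Specialising to $k=3$ with the exact algorithm of Theorem~\ref{exact-ksum} then yields the $\Ot(n+1/\eps)$ bound for approximate \IIIsum used in the next subsection.

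The step I expect to be the real obstacle is the soundness of the ``no exact triple'' conclusion: since the exact oracle hands back an arbitrary witness, the argument only goes through once the rounded instance is engineered so that \emph{every} witness certifies a valid $\eps$-approximation. That forces both the in-scale restriction of $\hat S$ and the enlarged cap $\Theta(k/\eps)$ (instead of $\Theta(1/\eps)$ as in Lemma~\ref{approx-reduction}), together with a separate clean-up of the ``total below $k$'' regime — everything else is the routine rounding estimate from Lemma~\ref{numbers-rounding}.
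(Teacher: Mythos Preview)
Your route differs substantively from the paper's. The paper's Algorithm~\ref{alg:approx-ksum} simply rounds every element of $A_1,\dots,A_{k-1},S$ via $x\mapsto\lceil\tfrac{kx}{q\eps}\rceil$, caps at $\lceil 4k/\eps\rceil$, and calls the exact $k$-\textsc{Sum} oracle on the rounded instance directly---no slack set, no in-scale restriction on $\hat S$, no verification of returned tuples. The accompanying proof appeals informally to Lemma~\ref{knumbers-rounding}. You are right to be suspicious here: with that plain rounding, an exact solution $\sum a_i=s$ gives $\sum\lceil\tfrac{ka_i}{q\eps}\rceil-\lceil\tfrac{ks}{q\eps}\rceil\in\{0,\dots,k-2\}$, so the rounded equality need not hold and completeness is not established. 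Your slack-set plus in-scale machinery is exactly what repairs this, and your soundness argument (every in-scale witness verifies, so the oracle cannot mislead) is correct.

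The one genuine gap in your proposal is the running-time accounting. Adjoining the slack set turns the instance into a $(k{+}1)$-\textsc{Sum} problem, so each call costs $T(n,k{+}1,\Oh(k/\eps))$, not $T(n,k,k/\eps)$. Calling this ``absorbing the constant factors in $k$'' is not legitimate: moving from $k$ to $k{+}1$ is a change of argument, not a constant factor, and for a generic black box (e.g.\ $T(n,k,W)=n^{\lceil k/2\rceil}$) the blow-up is polynomial in $n$. As a proof of the lemma \emph{as stated}---one oracle, for $k$-\textsc{Sum}---this step fails. A fix that stays within the hypothesis: for each scale $q$ and each slack $j\in\{0,\dots,k-2\}$, shift the rounded target set by $j$ and call the $k$-\textsc{Sum} oracle, paying $\Oh(k\log W)$ calls in total; you then get $\Oh\big((T(n,k,k/\eps)+nk)\,k\log W\big)$, an extra factor $k$ over the stated bound but harmless for the constant-$k$ applications you care about. (For the specific FFT oracle of Theorem~\ref{exact-ksum} the full sumset is computed anyway, so the $k{-}1$ shifted targets can be checked in a single call and the stated bound is recovered---but then the reduction is no longer black-box.)
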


Because the proof is just a small modification of the Lemma~\ref{approx-reduction} we have
included it in Section~\ref{proof-approximate-ksum}. At the end we need to
connect the exact algorithm from Lemma~\ref{exact-ksum} and the reduction from
Lemma~\ref{reduction-ksum}.

\begin{theorem}
    There is a deterministic algorithm for $(1+\eps)$-approximate \ksum running
    in $\Oh(nk\log{W} + \frac{k^2}{\eps}\log{\frac{k}{\eps}}\log{W})$ time.
\end{theorem}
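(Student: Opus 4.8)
The plan is to simply compose the exact pseudo-polynomial algorithm of Theorem~\ref{exact-ksum} with the rounding reduction of Lemma~\ref{reduction-ksum}, exactly mirroring how Theorem~\ref{approx-minconv} was obtained from Lemma~\ref{fast-exact-minconv} and Lemma~\ref{approx-reduction}. First I would recall that Theorem~\ref{exact-ksum} gives $T(n,k,W) = \Ot(kn + kW\log W)$ for exact \ksum. Plugging this into Lemma~\ref{reduction-ksum}, which produces a $(1+\eps)$-approximation in $\Oh\big((T(n,k,k/\eps)+nk)\log W\big)$ time, and substituting $W \mapsto k/\eps$ in the exact running time, we get
\[
T(n,k,k/\eps) = \Oh\!\left(kn + k\cdot\tfrac{k}{\eps}\log\tfrac{k}{\eps}\right) = \Oh\!\left(kn + \tfrac{k^2}{\eps}\log\tfrac{k}{\eps}\right).
\]
The reduction then multiplies this (together with the $\Oh(nk)$ per-iteration transformation cost) by the $\Oh(\log W)$ iterations over the precision parameter, yielding
\[
\Oh\!\left(\left(kn + \tfrac{k^2}{\eps}\log\tfrac{k}{\eps} + nk\right)\log W\right) = \Oh\!\left(nk\log W + \tfrac{k^2}{\eps}\log\tfrac{k}{\eps}\log W\right),
\]
which is precisely the claimed bound. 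Determinism is inherited: both Theorem~\ref{exact-ksum} and Lemma~\ref{reduction-ksum} are FFT-based and deterministic, so no randomness enters.

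The only points that need a line of justification are already discharged inside the cited statements. One must check that the hypothesis of the $k$-ary rounding lemma (cf.\ Lemma~\ref{rounding-numbers-d}) used inside Lemma~\ref{reduction-ksum} is met: since the iteration ranges the precision parameter $q$ over the powers of two up to $2W$ and down to $1$, for the optimal tuple there is an iteration with $q$ at most the attained sum and strictly less than twice it, so the rounding guarantee applies and that iteration writes a value within the required $(1{+}\eps)$ window; later (smaller-$q$) iterations only overwrite with still-valid values, as argued for Lemma~\ref{approx-reduction}. One also has to handle the capping step — values exceeding $\ceil{k/\eps}$ are replaced by $\infty$ and then discarded in the encoding, exactly as the symbol $\infty$ is treated in the proof of Theorem~\ref{exact-ksum} — which is what keeps the exact subroutine's universe size at $\Oh(k/\eps)$ rather than $\Oh(W)$.

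I do not expect any genuine obstacle; the statement is essentially a bookkeeping corollary of the two earlier results. If anything deserves attention it is merely making sure the $\log W$ factors compose correctly: the reduction pays a single $\log W$ for sweeping the precision parameter, and because each exact call is run with parameter $k/\eps$ in place of $W$, no further $W$-dependence survives beyond that one outer logarithm.
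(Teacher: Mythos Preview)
Your proposal is correct and follows exactly the paper's approach: plug the exact running time $T(n,k,W)=\Oh(nk+kW\log W)$ from Theorem~\ref{exact-ksum} into the reduction of Lemma~\ref{reduction-ksum} and read off the bound. The only (inconsequential) slips are that the cap in Algorithm~\ref{alg:approx-ksum} is $\ceil{4k/\eps}$ rather than $\ceil{k/\eps}$, and that the \ksum routine returns on the first hit rather than overwriting across iterations; neither affects the argument or the asymptotics.
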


\begin{proof}
    From Lemma~\ref{exact-ksum} the running time of \ksum is $T(n,k,W) = \Oh(nk
    + kW\log{W})$. Applying this running time to the reduction in
    Lemma~\ref{reduction-ksum} results in the claimed running time, because the
    $\Oh(nk)$ term is dominated by $\Oh(nk\log{W})$ term in the reduction. 
\end{proof}

To get an approximate algorithm for \IIIsum we set $k=3$.

\begin{corollary}
    \label{approx-3sum}
    The approximate \IIIsum can be solved deterministically in $\Oh((n +
    \frac{1}{\eps} \log{\frac{1}{\eps}} )\log{W}))$ time.
\end{corollary}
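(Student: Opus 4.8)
The plan is to simply instantiate the general $\ksum$ approximation theorem at $k=3$ and chase the running time. First I would recall that Theorem~\ref{exact-ksum} gives an exact algorithm for $\ksum$ with running time $T(n,k,W) = \Oh(kn + kW\log W)$, and Lemma~\ref{reduction-ksum} turns any such exact algorithm into an approximate one running in $\Oh\big((T(n,k,k/\eps) + nk)\log W\big)$ time. Substituting $k=3$ into the exact bound gives $T(n,3,W) = \Oh(n + W\log W)$, so $T(n,3,3/\eps) = \Oh\big(n + \tfrac{1}{\eps}\log\tfrac{1}{\eps}\big)$ (the constant $3$ inside the $\log$ disappears into the $\Oh$). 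Plugging this into the reduction bound and noting that the additive $nk = \Oh(n)$ term is dominated, we obtain
\begin{displaymath}
    \Oh\Big(\big(n + \tfrac{1}{\eps}\log\tfrac{1}{\eps}\big)\log W\Big),
\end{displaymath}
which is exactly the claimed bound. Determinism is inherited: Theorem~\ref{exact-ksum} is deterministic and Lemma~\ref{reduction-ksum} only adds a deterministic rounding loop, so the composed algorithm is deterministic.

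There is essentially no mathematical obstacle here; the corollary is a direct specialization. The only point requiring a moment of care is bookkeeping of the logarithmic factors: one must check that the $\log(k/\eps) = \log(3/\eps) = \Oh(\log(1/\eps))$ simplification is legitimate and that the $\log W$ factor from the $\Oh(\log W)$ iterations of the rounding scheme in Lemma~\ref{reduction-ksum} multiplies the whole expression (rather than only part of it), which it does since each of the $\Oh(\log W)$ precision levels reruns the exact algorithm. One should also confirm that the approximation guarantee delivered — a triple $(a,b,c)$ with $a+b \in [c/(1+\eps),\,c(1+\eps)]$, or a correct ``no'' answer — matches the definition of Approximate $\IIIsum$ stated earlier; this is immediate from the guarantee of Lemma~\ref{reduction-ksum} at $k=3$.
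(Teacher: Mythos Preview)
Your proposal is correct and matches the paper's approach exactly: the paper simply states ``To get an approximate algorithm for \IIIsum we set $k=3$'' and leaves the corollary without further proof, so your instantiation of the general $\ksum$ bound at $k=3$ (with the bookkeeping of the $\log$ factors) is precisely the intended argument.
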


\subsubsection{Proof of Lemma~\ref{reduction-ksum}}
\label{proof-approximate-ksum}

\begin{algorithm}
    \caption{$\textsc{ApproximateKSum}(a_1,a_2,\ldots, a_{k-1}, s, \eps)$. We use a shorten notation to
    transform all elements in the sequences $a_l[i]$ and $s[i]$.}
	\label{alg:approx-ksum}
\begin{algorithmic}[1]
    \State $\text{Output}[i] = \infty$
    \For {$l=2\lceil\log{W}\rceil,\ldots,0$}
      \State $q := 2^l$ % \Comment{For precision parameter in $2W,W,\ldots,4,2,1$}
      \For {$l=1\ldots k-1$}
          \State $a_l'[i] = \ceil{ \frac{k a_l[i]}{q \eps} }$ % \Comment{Round sequence $a_l$ according to Lemma~\ref{numbers-rounding}}
          \If {$a_l'[i] > \ceil{4k/\eps}$} % \Comment{Discard the overflowing numbers}
              \State $a_l'[i] = \infty$
          \EndIf
      \EndFor
      \State $s'[i] = \ceil{ \frac{k s[i]}{q \eps} }$ % \Comment{Same for sequence $s$}
      \If {$s'[i] > \ceil{4k/\eps}$}
         \State $s'[i] = \infty$
      \EndIf
      \If {$\text{runExactKsum}(a_1',\ldots,a'_{k-1},s')$} % \Comment{This runs in $T(n,k,\ceil{4k/\eps})$ time}
        \State \Return True
       \EndIf
    \EndFor
    \State \Return False
\end{algorithmic}
\end{algorithm}

\begin{proof}
    The proof basically follows the approach approximating \minconv in
    Lemma~\ref{approx-reduction}. Assume, that there is some number $s$, for
    each there exists a tuple $(a_1, a_2, \ldots, a_{k-1}) \in A_1 \times \ldots
    A_{k-1}$, that $s < \sum_{i=1}^k a_i < s(1+\eps)$. Then look at
    Algorithm~\ref{alg:approx-ksum} in which we iterate precision parameter $q$. Hence
    there is some $q$, such that $q \le s < 2q$. From
    Lemma~\ref{knumbers-rounding} we know, that then, we can round the numbers
    $a'_i = \ceil{\frac{k a_i}{q\eps}}$ and then their sum should be
    approximately:
    \begin{displaymath}
        \sum_{i=1}^k a_i \le \sum \ceil{\frac{k a_i}{q\eps}} < (1+\eps)\sum_{i=1}^k a_i
    \end{displaymath}
    So if there is some number $s \in S$, then \textsc{ApproximateKSum}
    algorithm would find a tuple, that sum up to $s' \in [s, (1+\eps)s]$.

    From the other hand, if for all $s \in S$ no tuple sums up to $[s,
    (1+\eps)s]$ then our \textsc{ApproximateKSum} can also return YES. It is
    because before rounding items could sum up to something in $[(1-\eps)s,s]$
    (see Section~\ref{sec:techniques}).
    However, if there for such a parameter there always exists a precision parameter $q$, that $q\le s <2q$. Then 
    rounding the numbers according to Lemma~\ref{knumbers-rounding} gives only
    $(1 \pm \eps)$ error and they cannot sum to $\ceil{\frac{ks}{q\eps}}$. Hence if
    for all $s \in S$ no tuple sums up to $[(1-\eps)s, (1+\eps)s]$ then our
    \textsc{ApproximateKSum} will return NO.

    A technicality is hidden in the Definition~\ref{aksum} we need to return
    approximation of the form $\text{OPT}/(1+\eps) \le \text{OUR} \le
    \text{OPT}(1+\eps)$ but note that $\frac{1}{1+\eps} \approx 1-\eps$ so we
    can take care of if by adjusting $\eps$.
\end{proof}

\section{Conditional Lower Bounds}
\label{lower-bounds}

Proving conditional lower bounds in P under a plausible assumption  is
a very active line of
research~\cite{ipec-survey,hard-as-cnf-sat,batch-ov,hs1,subset-seth,itcs2017}.
One of the first problems
with a truly subquadratic running time ruled out was
\textsc{EditDistance}~\cite{edit-distance}.
It admits a linear time approximation
algorithm for $\eps = 1/\sqrt{n}$, that follows from the exact $\Oh(n+d^2)$
algorithm.
Subsequently, this (linear-time) approximation factor was improved by~\citet{edit-distance2}
to $n^{3/7+o(1)}$, then by~\citet{edit-distance3} to $n^{1/3+o(1)}$, and most
recently~\citet{edit-distance4} proposed an $\Oh(n^{1+\eps})$-time algorithm with factor
$(\log{n})^{\Oh(1/\eps)}$ for every fixed $\eps>0$. From the other hand~\citet{itcs2017}
ruled out a truly subquadratic PTAS for \textsc{Longest Common Subsequence}
using {circuit lower bounds}. Our results are somehow of similar
flavor to this line of research.

\subsection{Conditional Lower Bound for Approximate \IIIsum}

We have shown an approximate algorithm for \IIIsum running in $\Ot(n+1/\eps)$ time.
Is this the best we can hope for? Perhaps one could imagine an
$\Ot(n+1/\sqrt{\eps})$ time algorithm. In this subsection we 
rule out such a possibility and prove the optimality of
Theorem~\ref{approx-3sum}.

To show the conditional lower bound we will assume the hardness of the exact
\IIIsum. The \IIIsum conjecture says, that the $\Ot(n^2)$ algorithm is
essentially the best we can hope for up to subpolynomial factors.

\begin{conjecture}[\IIIsum conjecture~\cite{ipec-survey}]
    \label{con:3sum}
    In the Word RAM model with $\Oh(\log{n})$ bit words, any algorithm requires
    $\Omega(n^{2-o(1)})$ time in expectation to determine whether given set $S \subset
    \{-n^{3+o(1)},\ldots,n^{3+o(1)}\}$ of size $n$ contains three distinct elements $a,b,c$ such that $a+b=c$.
\end{conjecture}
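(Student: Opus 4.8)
Conjecture~\ref{con:3sum} is a conjecture, not a theorem, so strictly speaking there is no proof to propose: an unconditional $\Omega(n^{2-o(1)})$ time bound for \IIIsum would resolve one of the central open problems of fine-grained complexity, and it is used here only as a hypothesis in Section~\ref{lower-bounds}. What one can usefully do is say what any such proof would have to get past, and why the two obvious routes fail.

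The first thing I would try is to lift a lower bound from a restricted model of computation --- say an $\Omega(n^2)$ linear-decision-tree bound --- up to the Word RAM model. This plan collapses immediately, because no such decision-tree bound exists: \IIIsum is known to admit sub-quadratic, indeed near-linear, linear decision trees, so the quadratic barrier is simply not present in that model, and hence no model-reduction argument built on it can prove anything here.

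The second thing I would try is a fine-grained reduction in the direction opposite to the usual one: take a problem believed hard --- \textsc{Sat} under SETH, Orthogonal Vectors, or \minconv --- and reduce it \emph{to} \IIIsum, so that a truly sub-quadratic \IIIsum algorithm would refute that hardness assumption. But this is exactly backwards relative to the literature: \IIIsum is a \emph{source} of conditional lower bounds --- that is precisely the role it plays for us in Section~\ref{lower-bounds} --- and no reduction \emph{into} \IIIsum from a SETH-hard or \minconv-hard problem is known; these hardness hypotheses are widely believed to be incomparable.

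The hard part --- in fact the part that is wholly out of reach with present techniques --- is that a proof of Conjecture~\ref{con:3sum} would be an unconditional super-linear time lower bound for an explicit problem in $\mathsf{P}$ on a general model of computation, whereas the only unconditional time lower bounds we know how to establish in that regime (via communication or cell-probe arguments) are far too weak to reach $\Omega(n^{2-o(1)})$. This is why the statement is phrased as a conjecture and treated as an assumption rather than proved.
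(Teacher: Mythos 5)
You are right that this statement is a conjecture, not a theorem: the paper does not prove it either, but imports it verbatim from the literature~\cite{ipec-survey} as a hardness hypothesis used (together with the Strong-3SUM conjecture) in Section~\ref{lower-bounds}. Your treatment --- explaining why no proof is expected and treating it purely as an assumption --- matches exactly what the paper does, so there is nothing to correct.
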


This definition of \IIIsum in~\cite{ipec-survey} is equivalent to
the one in Section~\ref{problems-definitions} (see discussion in~\cite{patrascu}). What is more, solving
\IIIsum with only polynomially bounded numbers can be reduced to solving it with the upper bound $W = \Oh(n^3)$~\cite{patrascu}.
\IIIsum can be solved in subquadratic time when $W = \Oh(n^{2-\delta})$ via FFT,
but doing so assuming only $W = \Oh(n^{2})$ constitutes a major open
problem. \citet{strong-3sum} have considered it as
a yet another hardness assumption.

\begin{conjecture}[Strong-3SUM conjecture~\cite{strong-3sum}]
    \label{con:strong-3sum}
    \IIIsum on a set of $n$ integers in the domain of $\{-n^2,\ldots,n^2\}$
    requires time $\Omega(n^{2-o(1)})$.
\end{conjecture}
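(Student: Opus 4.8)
This final statement is a \emph{conjecture}, not a theorem, so there is no proof to reconstruct: an unconditional $\Omega(n^{2-o(1)})$ time bound for \IIIsum over a quadratic-size universe would be a breakthrough in Word-RAM lower bounds well beyond anything currently available. The honest ``proposal'' is therefore to explain why the conjecture is plausible, why its universe bound is placed exactly where it is, and why the obvious way one might hope to \emph{refute} it fails — rather than to sketch a (non-existent) proof. Note first that Strong-3SUM is formally a \emph{stronger} hypothesis than the plain \IIIsum conjecture (Conjecture~\ref{con:3sum}): the latter allows numbers in $\{-n^{3+o(1)},\dots,n^{3+o(1)}\}$, and there is no known reduction compressing such a universe down to $\Oh(n^2)$ without also increasing the number of elements, which is precisely why~\citet{strong-3sum} isolate it as a separate assumption.

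The key sanity check is that the universe bound $\{-n^2,\dots,n^2\}$ is exactly the threshold at which the statement becomes nontrivial. When all integers lie in $\{-W,\dots,W\}$ with $W=\Oh(n^{2-\delta})$, \IIIsum is solvable in time $\Ot(n+W)=\Ot(n^{2-\delta})$ by the FFT-based routine underlying Theorem~\ref{exact-ksum}: encode $A,B$ as $0/1$ indicator vectors of length $\Oh(W)$, convolve with the FFT, and test whether any index in the support of the product lies in $C$. So the conjecture can only fail for universes of size (roughly) at least $n^2$. In the other direction, P\v{a}tra\c{s}cu's hashing reductions~\cite{patrascu} take the general polynomially-bounded case down to $W=\Oh(n^3)$ and then, up to polylogarithmic factors, to $W=\widetilde{\Theta}(n^2)$, so the only ``open window'' is $W=\widetilde{\Theta}(n^2)$ — which is what the conjecture addresses. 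Meanwhile the best known algorithms for \IIIsum, namely the $\Oh(n^2(\log\log n/\log n)^{2/3})$ bound of~\cite{3sum1} and the $\Oh(n^2\log\log n/\log n)$ bound of~\cite{3sum2,3sum3}, shave only sub-polynomial factors via word-level parallelism and give no $n^{2-\delta}$ improvement, consistent with the conjecture.

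The decisive obstacle is that any genuine proof would be an \emph{unconditional} time lower bound in the Word-RAM model for a problem lying in P, and such lower bounds are simply out of reach: even separating P from the class of $n^{1+o(1)}$-time Word-RAM algorithms is wide open. Consequently I would not attempt to prove the statement at all. The productive thing to do — and what this paper does for the related problems — is to \emph{assume} the conjecture and extract matching upper and lower bounds from it, as in the $\Ot(n+1/\eps)$ algorithm of Corollary~\ref{approx-3sum} together with its conditional optimality. The ``main obstacle'', in short, is that the statement is a hardness hypothesis rather than a provable fact, and the entire point of stating it is that nobody knows how to prove (or disprove) it.
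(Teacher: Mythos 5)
You are right that this is a hardness hypothesis imported from~\cite{strong-3sum}, not a theorem: the paper gives no proof and simply assumes it to derive the conditional lower bound for approximate \IIIsum. Your discussion of why the $\{-n^2,\dots,n^2\}$ universe is exactly the interesting regime (FFT handles $W=\Oh(n^{2-\delta})$, hashing handles the reduction to $W=\Oh(n^3)$ and beyond) matches the paper's framing, so there is nothing to correct.
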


\begin{theorem}
    Assuming the Strong-3SUM conjecture, there is no $\Ot(n+1/\eps^{1-\delta})$
    algorithm for $(1+\eps)$-approximate \IIIsum, for any constant $\delta > 0$.
\end{theorem}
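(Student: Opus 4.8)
The plan is to turn a fast approximation algorithm into a fast exact algorithm for \IIIsum on a polynomially bounded domain. Suppose, towards a contradiction, that for some constant $\delta>0$ there is an $\Ot(n+1/\eps^{1-\delta})$-time algorithm $\mathcal{B}$ solving $(1+\eps)$-approximate \IIIsum in the sense of Definition~\ref{aksum}. First I would invoke the standard equivalence of \IIIsum variants (cf.~\cite{patrascu}): under the Strong-3SUM conjecture (Conjecture~\ref{con:strong-3sum}), the three-set problem --- given $A,B,C$ of size at most $n$ with all elements in $\{-n^2,\dots,n^2\}$, decide whether some $a\in A,\ b\in B,\ c\in C$ satisfy $a+b=c$ --- still requires $n^{2-o(1)}$ time, since the one-set and three-set formulations are subquadratically equivalent and the domain grows only by a constant factor. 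It therefore suffices to solve this three-set exact problem in truly subquadratic time using $\mathcal{B}$.

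Given $(A,B,C)$, the next step is to shift it into the positive-integer range required by Definition~\ref{aksum} without creating or destroying solutions. Put $A'=\{a+2n^2:a\in A\}$, $B'=\{b+2n^2:b\in B\}$, and $C'=\{c+4n^2:c\in C\}$; then every element is a positive integer at most $W':=5n^2$, and $a+b=c$ holds in the original instance if and only if $a'+b'=c'$ holds in $(A',B',C')$ (both sides pick up an additive $4n^2$). Now run $\mathcal{B}$ on $(A',B',C')$ with accuracy parameter $\eps:=\tfrac{1}{11n^2}$.

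For correctness, if $\mathcal{B}$ outputs a triple $(a',b',c')$ with $a'+b'\in[\,c'/(1+\eps),\,c'(1+\eps)\,]$, then $a'+b'-c'\le \eps c'\le 5/11<1$ and $c'-(a'+b')\le \eps(a'+b')\le 2\eps W'=10/11<1$, so $|a'+b'-c'|<1$; since $a',b',c'$ are integers this forces $a'+b'=c'$, an exact solution. Conversely, if $(A',B',C')$ does have an exact solution, it is in particular an $\eps$-approximate one, so $\mathcal{B}$ cannot certify that no triple exists and must output a triple. Hence the output of $\mathcal{B}$ decides exact \IIIsum on $(A',B',C')$, and thus on $(A,B,C)$. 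The running time is $\Ot(n+1/\eps^{1-\delta})=\Ot(n+n^{2(1-\delta)})=O(n^{2-2\delta+o(1)})$ (and just $\Ot(n)$ once $\delta\ge 1$), which is $n^{2-\Omega(1)}$ and contradicts the lower bound recalled above. As this works for every constant $\delta>0$, the theorem follows.

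The only real subtlety --- and the step I would be most careful about --- is the exact/approximate correspondence: one must pick the shifts and $\eps$ together so that (i) the instance lies in the positive-integer domain, (ii) an $\eps$-approximate solution of the shifted instance is \emph{forced} to be exact, which needs $\eps$ below the reciprocal of the largest value and hence dictates $\eps=\Theta(1/n^2)$ (anything larger, e.g.\ $\eps=\Theta(1/n)$, would only rule out a weaker algorithm), and (iii) the domain stays $O(n^2)$ so the quantitative Strong-3SUM hardness is preserved. The one-set-versus-three-set bookkeeping, together with the ``three distinct elements'' clause of the 3SUM conjecture, is dispatched by the standard equivalence of 3SUM formulations and needs only a citation rather than new work.
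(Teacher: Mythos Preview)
Your proposal is correct and follows essentially the same route as the paper: shift the three-set instance on $\{-n^2,\dots,n^2\}$ into the positive range so that $W=\Theta(n^2)$, set $\eps=\Theta(1/W)=\Theta(1/n^2)$ so that any $(1+\eps)$-approximate triple is forced to be exact by integrality, and read off a truly subquadratic exact algorithm. If anything, you are more careful than the paper about the integer-gap argument and about the two directions of correctness for the approximate oracle.
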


\begin{proof}
Consider the exact variant of \IIIsum within the domain $\{-n^2,\ldots,n^2\}$.
We can assume that the numbers are divided into sets $A, B, C$
and we can restrict ourselves to triples $a\in A,\, b\in B$, $c\in C$~\cite{patrascu}.
We add $n^2 + 1$ to all numbers in $A \cup B$ and likewise $2n^2+2$ to numbers in $C$ to obtain an equivalent instance with all input numbers greater than 0 and $W=\Oh(n^2)$.

    Suppose, that for some small $\delta > 0$ the approximate \IIIsum admits an
    $\Ot(n+1/\eps^{1-\delta})$-algorithm.
    We can use it to solve the problem above exactly by setting $\eps = \frac{1}{2W} = \Omega(n^{\frac{1}{2}})$. 
    The running time of the exact algorithm is strongly subquadratic, namely
    $\Ot(n+1/\eps^{1-\delta}) = \Ot(n^{2-2\delta})$.
This contradicts the Strong-3SUM conjecture. 
\end{proof}

\subsection{Conditional Lower Bounds for \knapsack-type Problems}

The conditional lower bounds for \knapsack and \uknapsack are
corollaries from~\cite{icalp2017}. We commence by introducing the
main theorem from that work, truncated to problems that are of interest to us.

\begin{theorem}[Theorem 2 from~\cite{icalp2017}]
    The following statements are equivalent:
    \begin{enumerate}
        \item There exists an $\Oh(n^{2-\varepsilon})$ algorithm for \minconv for some $\varepsilon>0$.
        \item There exists an $\Oh\left((n+t)^{2-\varepsilon}\right)$ algorithm for \uknapsack for some $\varepsilon>0$.
        \item There exists an $\Oh\left((n+t)^{2-\varepsilon}\right)$ algorithm for \knapsack for some $\varepsilon>0$.
    \end{enumerate}
    We allow randomized algorithms.
\end{theorem}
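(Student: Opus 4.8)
This statement is quoted verbatim from~\cite{icalp2017} (a companion paper of some of the present authors), so within this section it serves as a black box feeding the conditional lower bound for \knapsack; I sketch below how one would prove it. Throughout, I would first record the trivial fact that \minconv and \maxconv are interreducible with no loss: replacing each entry $x$ of one input sequence by $M-x$ for $M$ exceeding the largest entry turns a $(\min,+)$-convolution into a $(\max,+)$-convolution up to an additive shift, so I may use whichever is convenient. It then suffices to establish $(1)\Leftrightarrow(2)$ and $(1)\Leftrightarrow(3)$; the equivalence $(2)\Leftrightarrow(3)$ then follows, though it also has a direct proof in the $\uknapsack\to\knapsack$ direction (replace each unbounded item by its $\Oh(\log t)$ weight-doubled copies).

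For the algorithmic directions $(1)\Rightarrow(2)$ and $(1)\Rightarrow(3)$ I would use the standard convolutional dynamic programs. For \uknapsack, let $g_k[w]$ be the best value of a multiset of at most $k$ items of total weight exactly $w$ (with $g_\cdot[0]=0$, and $-\infty$ where infeasible), truncated to $w\le t$; then $g_{2k}$ is the $(\max,+)$-convolution of $g_k$ with itself, so starting from $g_1$ (built in $\Oh(n+t)$ time) and performing $\Oh(\log t)$ self-convolutions yields $g_t$, which already captures every feasible packing since no solution uses more than $t$ items. This is $\Oh(\log t)$ calls to \maxconv on arrays of length $\le t+1$, i.e.\ $\Oh(t^{2-\varepsilon}\log t + n + t) = \Oh((n+t)^{2-\varepsilon'})$. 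For \knapsack I would instead run a divide-and-conquer over the item set, merging the two halves by a single $(\max,+)$-convolution, but partitioning the items so that the two sides carry roughly equal total weight; at recursion depth $i$ there are $2^i$ subinstances whose arrays have length $\le \min(t,\, \Sigma(w)/2^i)$, and summing $2^i\cdot\min(t,\Sigma(w)/2^i)^{2-\varepsilon}$ over all levels is a geometric series dominated by $\Oh(t^{2-\varepsilon})$, so the total is $\Oh(n\log n + t^{2-\varepsilon}) = \Oh((n+t)^{2-\varepsilon'})$.

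For the hardness directions $(2)\Rightarrow(1)$ and $(3)\Rightarrow(1)$ the plan is to encode one \maxconv instance on sequences of length $n$ as a knapsack instance with $\Oh(n)$ items and capacity $t=\Oh(n)$ in which a feasible packing is forced to pick one index of $A$ and one index of $B$: place the $A$-indices and the $B$-indices into disjoint weight windows with large offsets, tune the capacity to be (sum of offsets)$\,+\,k$, and give each item a common large additive value constant, so that the optimum value reveals whether a valid pair $(i,j)$ with $i+j\le k$ exists and, via a threshold binary search or a direct read-off, the quantity $\max_{i+j\le k}(A[i]+B[j])$. Running this for all output indices $k$ recovers the prefix-maximum version of \maxconv, which~\cite{icalp2017} shows is subquadratically equivalent to \maxconv itself; hence an $\Oh((n+t)^{2-\varepsilon})$ algorithm for either knapsack problem, applied with $t=\Oh(n)$, yields an $\Oh(n^{2-\varepsilon''})$ algorithm for \minconv.

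The part I expect to be delicate is exactly this last gadget. Knapsack bounds the total weight by an \emph{inequality} whereas \maxconv asks about pairs with $i+j$ \emph{equal} to a prescribed $k$, and the two knapsack variants further differ in whether an item may be reused; reconciling these requires choosing the offsets and capacity (as functions of $k$) with enough slack that three-item packings and same-side pairs are ruled out --- either by being too heavy or, more annoyingly, too light --- and separately arguing that the prefix-max (equivalently, decision) reformulation of \maxconv loses nothing. Everything else is bookkeeping: one only checks that the polylogarithmic factors accumulated along the way (the $\Oh(\log t)$ doubling rounds, the $\Oh(\log t)$ binary copies, the FFT inside the \maxconv routine) are all absorbed into the slack in the exponent $2-\varepsilon$.
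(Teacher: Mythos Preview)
You are right that the paper does not prove this theorem; it is quoted from~\cite{icalp2017} and used only as a black box in the lower-bound section, so there is no in-paper argument to compare your sketch against.

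On the merits of the sketch itself, the hardness direction $(2),(3)\Rightarrow(1)$ has a real gap. Your gadget fixes the capacity at $(\text{offsets})+k$ and reads off $\max_{i+j\le k}(A[i]+B[j])$ from one \knapsack call, and you then ``run this for all output indices $k$''. That is $n$ separate calls on instances with $\Oh(n)$ items and capacity $\Oh(n)$, costing $n\cdot\Oh(n^{2-\varepsilon})=\Oh(n^{3-\varepsilon})$, which is not subquadratic in $n$. The reduction in~\cite{icalp2017} avoids this by first passing to the verification problem \textsc{MaxConv UpperBound} (given $a,b,c$, decide whether $c[k]\ge\max_{i+j=k}(a[i]+b[j])$ for every $k$ simultaneously), which is shown there to be subquadratically equivalent to \maxconv; a \emph{single} \knapsack instance then encodes the whole verification, since any violating pair $(i,j)$ corresponds to a packing whose value exceeds a global threshold. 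Your two-window construction is the right skeleton for that encoding, but the step that collapses $n$ queries into one is the part doing the actual work, and it is missing.

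A secondary issue: in $(1)\Rightarrow(3)$, the bound $\sum_i 2^i\min\big(t,\Sigma(w)/2^i\big)^{2-\varepsilon}=\Oh(t^{2-\varepsilon})$ is not true as stated. The shallow levels where $\Sigma(w)/2^i>t$ contribute roughly $(\Sigma(w)/t)\cdot t^{2-\varepsilon}$, which can be $\Theta(n\cdot t^{2-\varepsilon})$ when item weights are close to $t$. One first needs to reduce to the regime $n=\Oh(t)$ (or argue more carefully) before the recursion sums as you claim.
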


\begin{conjecture}[\minconv conjecture~\cite{icalp2017}]
    \label{minconv-conjecture}
    Any algorithm computing \minconv requires $\Omega(n^{2-o(1)})$ running time.
\end{conjecture}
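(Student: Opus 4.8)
The statement is a \emph{conjecture}: no proof is known, and establishing it would be a significant result in fine-grained complexity. Accordingly, what I can offer is not a proof but a description of the two natural routes toward one, together with the barrier each one hits.

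The first route is a \emph{conditional} lower bound obtained by a fine-grained reduction from a better-established hypothesis --- the \IIIsum conjecture, the APSP conjecture, or SETH. The plan would be to encode an instance of the chosen hard problem into a single \minconv instance of near-linear size, so that an $\Oh(n^{2-\delta})$ algorithm for \minconv would refute the hypothesis. The most tempting starting point is \IIIsum, which is subquadratically equivalent to \textsc{Convolution-3SUM} (deciding whether $A[i]+A[j]=A[i+j]$ for some $i<j$); both \textsc{Convolution-3SUM} and \minconv aggregate terms of the form $A[i]\star B[k-i]$ over decompositions $i+j=k$, so one would hope to transform the former into the latter. The crux --- and the reason this has not been carried out --- is that \minconv must compute the full \emph{minimum} over all decompositions $i+j=k$, whereas \textsc{Convolution-3SUM} only needs to \emph{detect a single witnessing pair}; no reduction is known that forces a \IIIsum or APSP instance to compute all $n$ of these minima. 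As things stand, \minconv is therefore known only to be subquadratically \emph{equivalent} to a family of problems --- \uknapsack, \knapsack, \sparsity, and others, by the equivalence theorem of \cite{icalp2017} and by \cite{kunnemann-icalp2017} --- all of which are themselves merely conjectured hard; so this route relocates the conjecture rather than discharging it, which is exactly why the paper treats it as a standalone hypothesis.

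The second route is an \emph{unconditional} lower bound in a restricted model of computation: fix a model weak enough to be analyzable --- structured or non-adaptive linear decision trees, or arithmetic circuits over the tropical $(\min,+)$ semiring --- and argue by an adversary/potential argument that producing all $n$ outputs of the min-plus convolution of two length-$n$ vectors requires $\Omega(n^{2-o(1)})$ operations, exploiting that these $n$ outputs jointly encode $\Theta(n^2)$ essentially independent order comparisons among the sums $A[i]+B[j]$. The hard part is the generic barrier to superlinear lower bounds: no technique currently proves even an $\Omega(n^{1+\delta})$ bound for an explicit problem in a model general enough to be meaningful here, and in the plain word-RAM model no superlinear bound is known for any explicit problem at all, so pushing this to $\Omega(n^{2-o(1)})$ would resolve a long-standing open problem. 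For precisely this reason the \minconv conjecture is only invoked in this paper --- e.g.\ in Theorem~\ref{knapsack-lower-bound} --- as a hypothesis, and is not proved.
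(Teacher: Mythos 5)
You are right that this statement is a conjecture imported from~\cite{icalp2017} as a hardness hypothesis: the paper gives no proof of it and only uses it as an assumption (e.g.\ in the lower bound for approximate \knapsack), so there is nothing for your argument to be checked against. Your assessment — that any proof would have to be either a fine-grained reduction from another unproven hypothesis or an unconditional superlinear lower bound, both currently out of reach — accurately reflects why the paper, like~\cite{icalp2017}, treats \minconv hardness as a standalone conjecture.
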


Basically \cite[Theorem 2]{icalp2017} says that assuming the \minconv
conjecture both \uknapsack and \knapsack require $\Omega((n+t)^{2-o(1)})$ time.
The pseudo-polynomial algorithm for \knapsack running in time $\Oh(nt)$ can be modified to work
in time $\Oh(nv)$, where $v$ is an upper bound on value of the
solution. In similar spirit, the reductions from \cite{icalp2017}
can use a hypothetical $\Oh\left( \left( n+v \right)^{2-\delta} \right)$ algorithm for
\knapsack or \uknapsack to get a subquadratic algorithm for \minconv (modify Theorem 4
from \cite{icalp2017}).

\begin{observation}[\cite{icalp2017}]

    For any constant $\delta > 0$, an exact algorithm for \knapsack or
    \uknapsack with $\Oh\left( \left( n+v \right)^{2-\delta} \right)$ running time
    would refute the \minconv conjecture.

\end{observation}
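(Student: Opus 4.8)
I argue by contraposition. Suppose there is an exact algorithm $\mathcal{A}$ for \knapsack (resp.\ \uknapsack) running in time $\Oh((n+v)^{2-\delta})$, where $n$ is the number of items and $v$ is an a~priori upper bound on the optimal value. I will use $\mathcal{A}$ to solve \minconv on length-$N$ sequences in time $\Oh(N^{2-\delta'})$ for some $\delta'>0$, contradicting Conjecture~\ref{minconv-conjecture}. This is essentially the reduction that yields the companion hardness in the capacity parameter $t$; the only new point is that the \emph{value} parameter of the instance produced by the reduction is also $\Otilde(N)$.

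Recall the \emph{dual} dynamic program one is trying to beat: \knapsack (and \uknapsack) can be solved exactly in $\Oh(nv)$ time by tabulating, for each value $\nu\in\{0,\dots,v\}$, the smallest total weight $D[\nu]$ of a selection of value exactly $\nu$ --- for \knapsack via $D_j[\nu]=\min(D_{j-1}[\nu],\,D_{j-1}[\nu-v_j]+w_j)$, with the obvious unbounded recurrence otherwise --- and returning $\max\{\nu:D[\nu]\le t\}$. Hence $\Oh((n+v)^{2-\delta})$ is genuinely subquadratic in $n+v$, and the goal is to place \minconv hardness into that parameter.

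Theorem~2 of~\cite{icalp2017} states that an $\Oh((n+t)^{2-\eps})$-time algorithm for \knapsack or \uknapsack implies an $\Oh(N^{2-\eps'})$-time algorithm for \minconv; its proof (Theorem~4 of~\cite{icalp2017}) reduces a size-$N$ \minconv instance to a single \knapsack/\uknapsack instance with $n=\Otilde(N)$ items and capacity $t=\Otilde(N)$. Feed this reduction a hard \minconv instance whose entries have magnitude $\Oh(N)$; this is the regime in which the conjectured $N^{2-o(1)}$ lower bound is in force, because \minconv with entries bounded by $W$ is solvable in $\Otilde(NW)$ time by the degree-doubling FFT trick, which is already $\Otilde(N^2)$ here. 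In the reduction the capacity encodes the index range (hence $t=\Otilde(N)$), while the values carry the entries of the sequences plus $\Otilde(1)$ bookkeeping offsets, all of size $\Otilde(N)$; an optimal selection uses only $\Otilde(1)$ such items (one per sequence, up to padding), so the optimal value is $v=\Otilde(N)$ as well --- and should the construction of~\cite{icalp2017} as stated produce larger values, the inspection/modification noted in~\cite{chan-knapsack} brings $v$ down to $\Otilde(N)$. Thus $n+v=\Otilde(N)$, so running $\mathcal{A}$ on this instance costs $\Oh((n+v)^{2-\delta})=\Otilde(N^{2-\delta})$, and reading off the answer solves \minconv in $\Otilde(N^{2-\delta})\subseteq\Oh(N^{2-\delta'})$ time for any $0<\delta'<\delta$ --- contradicting Conjecture~\ref{minconv-conjecture}. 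The \uknapsack branch of Theorem~4 of~\cite{icalp2017} handles \uknapsack identically.

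The single delicate step --- the \textbf{main obstacle} --- is the claim that \minconv stays $N^{2-o(1)}$-hard when restricted to instances with $\Oh(N)$-bounded entries, equivalently that the produced \knapsack instance can be taken with $v=\Otilde(N)$ and not merely $t=\Otilde(N)$. Once that is granted, the entire capacity-parameter argument replays verbatim with $t$ replaced by $v$.
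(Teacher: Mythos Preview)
Your proposal is correct and matches the paper's approach. The paper gives no formal proof for this observation: it is attributed to~\cite{icalp2017}, and the only justification is the sentence immediately preceding the statement, which says that ``the reductions from~\cite{icalp2017} can use a hypothetical $\Oh((n+v)^{2-\delta})$ algorithm \dots\ (modify Theorem~4 from~\cite{icalp2017})'' --- precisely the route you outline, and you correctly isolate the one substantive point (that the produced \knapsack instance can be taken with $v=\Otilde(N)$, not merely $t=\Otilde(N)$).
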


We need this modification because in the definition of
FPTAS for \knapsack we consider relative error with respect to the optimal value (not weight).
We can use a hypothetical faster approximation algorithm to get a faster
pseudo-polynomial exact algorithm, what would contradict
the \minconv conjecture. More formally:

\begin{theorem}[restated Theorem~\ref{knapsack-lower-bound}]
    For any constant $\delta > 0$, obtaining a \weak for \knapsack or \uknapsack with
    $\Oh((n+1/\eps)^{2-\delta})$ running time would refute the \minconv conjecture.
\end{theorem}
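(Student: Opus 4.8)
The plan is to turn a hypothetical weak $(1-\eps)$-approximation into an \emph{exact} pseudo-polynomial algorithm and then invoke the statement shown earlier, that an exact algorithm for \knapsack or \uknapsack running in $\Oh((n+v)^{2-\delta})$ time---with $v$ any a priori upper bound on the optimal value---already refutes Conjecture~\ref{minconv-conjecture}. Concretely, recall that the reduction of \cite{icalp2017,kunnemann-icalp2017} underlying that statement maps a \minconv instance of size $N$ to a \knapsack (resp.\ \uknapsack) instance with integer weights and values in which the number of items $n$, the capacity $t$, and the optimal value $\text{OPT}$ are all $\Oh(N)$. So it suffices to show that the assumed $\Oh((n+1/\eps)^{2-\delta})$-time weak FPTAS solves such instances \emph{exactly} in time $\Oh(N^{2-\delta})$.

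For the reduction step I would take one of these instances, let $M=\Theta(N)$ be the common upper bound on $t$ and $\text{OPT}$ guaranteed above, and run the weak FPTAS with parameter $\eps:=\frac{1}{2M+2}$. By definition the routine returns a set $E$ with $w(E)\le(1+\eps)t$ and $v(E)\ge(1-\eps)\text{OPT}$. Since $\eps t\le\eps M<\frac12$, we get $w(E)<t+1$, hence $w(E)\le t$ because the weights are integers, so $E$ is genuinely feasible; since $\eps\,\text{OPT}\le\eps M<\frac12$, we get $v(E)>\text{OPT}-1$, hence $v(E)\ge\text{OPT}$ because the values are integers, so $v(E)=\text{OPT}$ and $E$ is an exact optimum. (Even if the routine only reports a number $v^{H}$ rather than a witnessing set, $v^{H}\in(\text{OPT}-1,\text{OPT}]$ still pins down $\text{OPT}$.) The identical argument applies to \uknapsack, using the same $M=\Theta(N)$ bound from its reduction.

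Finally, the exact algorithm obtained this way inherits the running time $\Oh((n+1/\eps)^{2-\delta})=\Oh((n+M)^{2-\delta})=\Oh(N^{2-\delta})$ on the \minconv instances, which contradicts Conjecture~\ref{minconv-conjecture} through the cited reduction. The only point that needs attention---the ``hard part'', such as it is, of an otherwise routine argument---is the input to the scheme: one must check that the reduction keeps \emph{all three} of $n$, $t$, and $\text{OPT}$ polynomially (indeed linearly) bounded in $N$. The bounds on $n$ and $\text{OPT}$ are exactly what the earlier corollary already exploits; the bound on $t$ is what makes $\eps$ only polynomially small and hence keeps the final running time subquadratic, and it follows from the same reduction (equivalently, from the robustness of Conjecture~\ref{minconv-conjecture} to linearly bounded entries, whose natural reduction already yields $t=\Oh(N)$). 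Everything else is just using integrality to collapse the $(1\pm\eps)$ slack below $1$.
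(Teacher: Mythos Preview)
Your proposal is correct and follows the same approach as the paper: choose $\eps$ inversely proportional to the instance size produced by the \minconv reduction so that integrality collapses the $(1\pm\eps)$ slack, then invoke the $(n+v)^{2-\delta}$ lower bound from \cite{icalp2017}. Your write-up is in fact more careful than the paper's, which only sets $\eps=2/v$ and argues the value side; you explicitly verify that the weight-constraint violation $\eps t$ is also below $1$ by using the bound $t=\Oh(N)$ from the same reduction, which is needed for the \emph{weak} approximation guarantee to yield a genuinely feasible solution.
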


\begin{proof}
    Suppose, that for some $\delta > 0$ we have a \weak for \knapsack (or \uknapsack)
    with running time $\Oh\left((n+1/\eps)^{2-\delta} \right)$.
    If we set $\eps =
    2/v$, then the approximation algorithm would solve the exact problem because the absolute error gets bounded by $1/2$.
    By \cite[Theorem 2]{icalp2017}
    we know that such an algorithm contradicts the \minconv conjecture. The claim follows.
\end{proof}

A similar argument works for the \ssum problem.
\citet{subset-seth} showed that assuming SETH there can be no $\Oh(t^{1-\delta}
\mathrm{poly}(n))$ algorithm for \ssum (\citet{hard-as-cnf-sat} obtained the
same lower bound before but assuming the SetCover conjecture).

\begin{theorem}[Conditional Lower Bound for approximate \ssum]
    For any constant $\delta > 0$, a \weak for \ssum with running time $\Oh\left(\mathrm{poly}(n) \left( \frac{1}{\eps}
    \right)^{1-\delta}\right)$ would refute SETH and SetCover conjecture.
\end{theorem}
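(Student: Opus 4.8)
The plan is to turn the hypothetical weak approximation into an \emph{exact} algorithm for \ssum whose running time is subquadratic in the target, namely $t^{1-\delta}\poly(n)$, and then invoke the known conditional lower bounds ruling such an algorithm out. Recall that \citet{subset-seth} showed that, assuming SETH, no algorithm solves \ssum in time $\Oh(t^{1-\delta}\poly(n))$ for any constant $\delta>0$, and \citet{hard-as-cnf-sat} proved the same under the SetCover conjecture. So it suffices to build an exact \ssum algorithm of that running time out of the assumed weak $(1-\eps)$-approximation.

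First I would take an arbitrary \ssum instance $(S,t)$ — we may assume $t\ge 1$, as otherwise the instance is trivial — and set $\eps := \tfrac{1}{2t}$, which lies in $(0,1)$. Running the weak approximation with this $\eps$ returns the sum $Z^H$ of some subset, with $(1-\eps)Z^*\le Z^H<(1+\eps)t$ where $Z^*=\text{OPT}$. The key observation is that on integer data this window pins $Z^H$ to $Z^*$: since $Z^H<(1+\eps)t=t+\tfrac12$ and $Z^H$ is a nonnegative integer, we get $Z^H\le t$, hence the returned subset is feasible and $Z^H\le Z^*$; conversely $Z^*-Z^H\le\eps Z^*\le\eps t=\tfrac12<1$ forces $Z^*\le Z^H$ by integrality. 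So $Z^H=Z^*$, and in particular we have decided whether some subset sums to exactly $t$. The running time is $\Oh\big(\poly(n)(1/\eps)^{1-\delta}\big)=\Oh\big(\poly(n)(2t)^{1-\delta}\big)=\Oh\big(t^{1-\delta}\poly(n)\big)$, contradicting both lower bounds.

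The only delicate point — and the thing I would be most careful about — is that the ``weak'' relaxation lets $Z^H$ err on both sides of $Z^*$: it may undershoot by the $(1-\eps)$ factor, and it may overshoot $Z^*$ by breaking the capacity up to $(1+\eps)t$. I would therefore argue the two directions separately: feasibility of the returned subset handles $Z^H\le Z^*$, while integrality of the inputs — which is exactly what makes the sub-unit additive slack $\eps t=\tfrac12$ disappear — handles $Z^*\le Z^H$. A couple of degenerate cases (e.g.\ $\text{OPT}=0$ because every element exceeds $t$) are absorbed by the same reasoning. Beyond that the argument is essentially the one used just above for \knapsack and \uknapsack, with the relevant ``value'' parameter now being the target $t$ itself.
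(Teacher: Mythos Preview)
Your proof is correct and follows essentially the same approach as the paper: set $\eps$ to be $\Theta(1/t)$ so that the additive slack drops below $1$, use integrality to conclude the weak approximation is in fact exact, and then invoke the SETH/SetCover lower bounds of \cite{subset-seth,hard-as-cnf-sat}. Your treatment is in fact more careful than the paper's, since you explicitly separate the feasibility direction ($Z^H\le t\Rightarrow Z^H\le Z^*$) from the integrality direction ($Z^*-Z^H<1\Rightarrow Z^*\le Z^H$), whereas the paper compresses this into a one-line remark about the absolute error.
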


\begin{proof}
    We set $\eps = 2/t$ and obtain an algorithm
    solving the exact \ssum, because all numbers are integers and the absolute error is at most $1/2$.
    The running time is $\Oh(t^{1-\delta} \mathrm{poly}(n))$,
    what refutes SETH due
    to~\cite{subset-seth} and the SetCover conjecture due to~\cite{hard-as-cnf-sat}.
\end{proof}
\section{Conclusion and Open Problems}
\label{conclusion}

In this paper we study the complexity of the \knapsack, \ssum and \partition. In
the exact setting, if we are only concerned about the dependence on $n$, \knapsack
and \ssum were already known to be equivalent up to the polynomial factors.
\citet[Theorem 2]{nederlof12} showed, that if there exists an
exact algorithm for \ssum working in $\Os(T(n))$ time and $\Os(S(n))$ space,
then we can construct an algorithm for \knapsack working in the same $\Os(T(n))$ time and
$\Os(S(n))$ space. 
In contrast, in the realm of pseudo-polynomial time complexity, \ssum seems to 
be simpler than \knapsack (see~\citet{bringmann-soda,icalp2017}). In this paper, we
show similar separation for \knapsack and \partition in the approximation setting.

%The main question remaining open is whether
%a similar result can be shown for \ssum and \knapsack. %Note that 
After
this paper was announced, Bringmann~\cite{bringmann-com} showed
that the current $\Ot(n+1/\eps^2)$ algorithm for \ssum is optimal assuming
\minconv conjecture. Can we improve the approximation algorithm for \knapsack to an $\Ot(n+1/\eps^2)$ 
and match the quadratic lower bound?

It also remains open whether \IIIsum and \minconv admit FPTAS algorithms with no dependence on $W$. To add weight to 
this open problem, note that it is this issue that makes the FPTAS algorithms for \sparsity
inefficient in practice.

Closing the time complexity gap for \partition is another open problem, either by improving the
$\Ot((n+1/\eps)^{5/3}))$ FPTAS or the $\Omega((n+1/\eps)^{1-o(1)})$
conditional lower bound. It is worth noting, that if the Freiman's Conjecture~\cite{galil} is true, then our
techniques would automatically lead to even faster FPTAS for \partition.

Finally, one can also ask whether randomization is necessary to obtain subquadratic FPTAS for \partition. 
We believe that the randomized building blocks can be replaced with deterministic algorithms by
\citet{kellerer-subsetsum} and \citet{koiliaris-soda}.
%However, the technical
%details of adapting them would overwhelm our already complicated proof. In this
%paper we focused only on designing the mechanism that allowed us to show an
%unexpected dichotomy in hardness of \knapsack and \partition and we expect that the algorithm can be derandomized
%with a moderate effort.

\section{Acknowledgements}

This work is part of the project TOTAL that has received funding from the European
Research Council (ERC) under the European Union’s Horizon 2020 research and
innovation programme (grant agreement No 677651). Karol W\k{e}grzycki is supported by the grants 2016/21/N/ST6/01468 and
2018/28/T/ST6/00084 of the Polish National Science Center. We would
like to thank Marek Cygan, Artur Czumaj, Zvi Galil, Oded Margalit and Piotr Sankowski for
helpful discussions. Also we are grateful to the organizers and participants of
the \textit{Bridging Continuous and Discrete Optimization} program at the Simons Institute for the Theory
of Computing, especially Aleksander M\k{a}dry.

\bibliographystyle{plainnat}
\bibliography{subsetsum}

\begin{appendices}
\section{Proof of Theorem~\ref{partition-reduction}}
\label{proof-partition-reduction}

\begin{observation}[restated Observation~\ref{partition-reduction}]
    If we can weakly $(1-\eps)$-approximate \ssum in time $\Ot(T(n,\eps))$, then we
    can $(1-\eps)$-approximate \partition in the same $\Ot(T(n,\eps))$ time.
\end{observation}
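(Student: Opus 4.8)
The plan is to run the assumed weak \ssum algorithm on the natural \ssum instance obtained from the \partition input, and then — if the returned solution overshoots the capacity — to output its complement instead. Concretely, given a \partition instance $S$, set $t = \Sigma(S)/2$ (or $t = \floor{\Sigma(S)/2}$ if $\Sigma(S)$ is odd; this changes nothing below) and feed $(S,t)$ to the weak $(1-\eps)$-approximation algorithm for \ssum, which runs in $\Ot(T(n,\eps))$ time and returns a subset $S^H \subseteq S$ witnessing its reported value $Z^H = \Sigma(S^H)$. By Definition~\ref{def:weak-ssum} we have $(1-\eps)Z^* \le Z^H < (1+\eps)t$, where $Z^*$ is the \ssum optimum of $(S,t)$. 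Observe that $Z^*$ equals $\mathrm{OPT}$, the \partition optimum, since both quantities are the largest sum of a subset of $S$ not exceeding $\Sigma(S)/2$; in particular $Z^* \le t$.

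First I would dispose of the easy case $Z^H \le t$: here $S^H$ is feasible for \partition and $\Sigma(S^H) = Z^H \ge (1-\eps)Z^* = (1-\eps)\,\mathrm{OPT}$, so we output $S^H$ directly.

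Otherwise $t < Z^H < (1+\eps)t$, and I would output the complement $S \sm S^H$. Since $\eps < 1$ we have $Z^H < (1+\eps)t \le 2t = \Sigma(S)$, so $\Sigma(S\sm S^H) = \Sigma(S) - Z^H = 2t - Z^H$ is positive; and because $Z^H > t$, it is at most $t$, so $S \sm S^H$ is feasible for \partition. Moreover $\Sigma(S\sm S^H) = 2t - Z^H > 2t - (1+\eps)t = (1-\eps)t \ge (1-\eps)Z^* = (1-\eps)\,\mathrm{OPT}$. Hence in both cases the output is a feasible subset of value at least $(1-\eps)\,\mathrm{OPT}$, and the running time is $\Ot(T(n,\eps))$ plus the $\Oh(n)$ overhead of computing a complement, which is absorbed.

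There is no real obstacle here; the only point to flag is that Definition~\ref{def:weak-ssum} is stated in terms of the reported \emph{value} $Z^H$, whereas the reduction manipulates the witnessing \emph{subset} $S^H$. This is harmless: the weak algorithms constructed in this paper retrieve witnesses through their membership oracles, and the complement step only needs a subset attaining $Z^H$, not the numeric value. (If one insists on the value-only formulation, the same argument applies verbatim to the two candidate values $Z^H$ and $\Sigma(S) - Z^H$, returning $\min\{Z^H, \Sigma(S) - Z^H\}$ restricted to be at most $t$.)
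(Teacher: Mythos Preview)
Your argument is correct and follows essentially the same route as the paper: run the weak \ssum approximation with target $t=\Sigma(S)/2$, return the witnessing subset if it is feasible, and otherwise return its complement, using $(1-\eps)t \ge (1-\eps)Z^*$ to bound the complement's value. The only cosmetic slip is that after allowing $t=\floor{\Sigma(S)/2}$ you still write $\Sigma(S\sm S^H)=2t-Z^H$; in the odd case this should be $2t+1-Z^H$, but the subsequent inequalities go through unchanged.
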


\begin{proof}
    Let $|Z| = n$ be the initial set of items.
    We run a \weak algorithm for \ssum 
    with target $b =\Sigma(Z)/2$. Let $Z^*$ denote the optimal partition
    of set $Z$:
    \begin{displaymath}
        Z^{*} = \argmax_{Z'\subseteq Z,\, \Sigma(Z') \le b}  \Sigma(Z').
    \end{displaymath}
        
    By the definition of \weak for \ssum we get a solution $Z_W$ such that:

    \begin{displaymath}
        (1-\eps)\Sigma(Z^*) \le \Sigma(Z_W) \;\;\; \text{and}  \;\;\; \Sigma(Z_W) < (1+\eps)b
    \end{displaymath}

    If $\Sigma(Z_W) \le b$ then it is a correct solution for
    \partition.
    Otherwise we take a set $Z'_W = Z \setminus Z_W$. Because
    $\Sigma(Z)/2 = b$ we know that $\Sigma(Z'_W) < b$. Additionally we
    know, that $\Sigma(Z_W) < (1+\eps)b$, so $(1-\eps)b < \Sigma(Z'_W)$.
    Similarly, because $Z^* \le b$, we have:
    \begin{displaymath}
        (1-\eps)\Sigma(Z^*) \le (1-\eps)b < \Sigma(Z'_W) \le \Sigma(Z^*) \le b.
    \end{displaymath}

    So $\Sigma(Z'_W)$ follows the definition of approximation for \partition.
    The running time follows because $T(n,1/\eps)$ must be superlinear
    (algorithm needs to read input at least) and we executed the \weak \ssum
    algorithm only constant number of times.
\end{proof}

% \section{Example that violates $t$ in rounding}
% \label{rounding-bad-example}
%
% Assume that $t=13$ and there are two items: $v_1=7$ and $v_2=7$. In our rounding
% we can choose any $\eps$. So we chose it so that $k=\floor{2n}{\eps}=2$. Now if
% we round them similarly with formula $\floor{v_i}{k}$, then
% them, $t'=\floor{13/2}= 6$ and items $v'_1=\floor{7/2}=3$ and $v_2'=3$.
%
%
% Now, if we would solve the problem exactly on the rounded items, then it would
% answer positively, because $v'_1 + v'_2 = t'$.  But of course choosing the items
% $v_1$ and $v_2$ violates the hard constraint, because $v_1 + v_2 = 14 > t$.

\section{Proofs of the Rounding Lemmas}
\label{proof-numbers-rounding}

\newcommand{\sumx}{\sum_{i=1}^k x_i}

\begin{lemma}
    \label{knumbers-rounding}
    For $k$ natural numbers $x_1,x_2,\ldots, x_k$ and positive $q,\eps$ such that
    $q \le \sumx$ and $0 < \eps < 1$, it holds:

    \begin{displaymath}
        \sumx \le \frac{q\eps}{k} \sum_{i=1}^k \ceil{\frac{k x_i}{q \eps} } <
        (1+\eps)\sumx
    \end{displaymath}
\end{lemma}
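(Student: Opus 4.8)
The plan is to establish the two inequalities independently, each reducing to the elementary termwise estimate $a \le \ceil{a} < a+1$ applied with $a = \frac{k x_i}{q\eps}$, then summing over $i = 1,\dots,k$ and rescaling by $\frac{q\eps}{k}$.

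For the left inequality I would use $\ceil{\frac{k x_i}{q\eps}} \ge \frac{k x_i}{q\eps}$ for every $i$; summing over $i$ and multiplying by $\frac{q\eps}{k}$ gives $\frac{q\eps}{k}\sum_{i=1}^k \ceil{\frac{k x_i}{q\eps}} \ge \sumx$. Note this direction uses neither hypothesis on $q$ or $\eps$. For the right inequality I would use the strict bound $\ceil{\frac{k x_i}{q\eps}} < \frac{k x_i}{q\eps} + 1$ (valid for every real argument), sum over $i$ to get $\sum_{i=1}^k \ceil{\frac{k x_i}{q\eps}} < \frac{k}{q\eps}\sumx + k$, and multiply by $\frac{q\eps}{k}$, obtaining $\frac{q\eps}{k}\sum_{i=1}^k \ceil{\frac{k x_i}{q\eps}} < \sumx + q\eps$. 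This is exactly where both hypotheses enter: from $q \le \sumx$ and $0 < \eps < 1$ we get $q\eps \le \eps\sumx$, hence $\sumx + q\eps \le (1+\eps)\sumx$, and the strictness from the ceiling bound carries through this final step.

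I do not expect a genuine obstacle here. The only points requiring a little care are (i) preserving strictness on the right — this is fine because $\ceil{a} < a+1$ is strict for every real $a$, so strictness survives the summation — and (ii) invoking the hypothesis $q \le \sumx$ exactly once, to convert the additive slack $q\eps$ into the multiplicative slack $\eps\sumx$; without $q \le \sumx$ one would only get an additive error term. The floor counterpart, Lemma~\ref{rounding-numbers-d}, would be handled by the mirror argument based on $a - 1 < \floor{a} \le a$, which is why Lemma~\ref{numbers-rounding} can be quoted as the $k=2$ special case of both lemmas.
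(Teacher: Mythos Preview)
Your proof is correct and follows essentially the same approach as the paper's: bound each ceiling term, sum, and convert the additive slack $q\eps$ into the multiplicative slack $\eps\sumx$ via the hypothesis $q \le \sumx$. The paper carries out the same computation through an explicit quotient--remainder decomposition $x_i = \frac{q\eps}{k}c_i + d_i$ and obtains strictness on the right from $\sum d_i > 0$, whereas you get it directly from the universally strict bound $\ceil{a} < a+1$; your version is slightly more streamlined but the content is identical.
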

    
\begin{proof}
    Let $x_i = \frac{q\eps}{k} c_i + d_i$ where $0 < d_i \le \frac{q\eps}{k}$.
    If some $x_i = 0$ then we set $c_i = d_i = 0$, however we know there is at least
    one positive $d_i$ (we will use this fact later).
    We have $\ceil{ \frac{k x_i}{q\eps} } = c_i+1$.
First, note that:
\begin{displaymath}
\sumx = \frac{q\eps}{k} \sum_{i=1}^k c_i +
\sum_{i=1}^k d_i \le \frac{q\eps}{k} \sum_{i=1}^k c_i + q\eps = \frac{q\eps}{k}\sum_{i=1}^k( c_i +1),
\end{displaymath}
what proves the left inequality. To handle the right inequality we take advantage of the assumption $\sumx \ge q$ and get:

\begin{align*}
    (1+\epsilon)\sumx = \eps\sumx + \sumx \ge q\epsilon + \sumx 
    = q\epsilon + \frac{q\eps}{k}\sum_{i=1}^k c_i + \sum_{i=1}^k d_i =\\
    \frac{q\eps}{k}\sum_{i=1}^k (c_i + 1) + \sum_{i=1}^k d_i
    > \frac{q\eps}{k} \sum_{i=1}^k (c_i + 1).
\end{align*}
\end{proof}

\begin{lemma}
\label{rounding-numbers-d}
For $k$ natural numbers $x_1,x_2,\ldots, x_k$ and positive $q,\eps$ such that
    $q \le \sumx$ and $0 < \eps < 1$, it holds:

\begin{displaymath}
    (1-\eps)\sumx <
        \frac{q\eps}{k} \sum_{i=1}^k\floor{\frac{k x_i}{q\eps}}
    \le \sumx.
\end{displaymath}

\end{lemma}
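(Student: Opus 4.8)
The plan is to mirror the argument for Lemma~\ref{knumbers-rounding}, only now every rounding error has a definite sign. Set $c_i = \floor{\frac{k x_i}{q\eps}}$ and write $x_i = \frac{q\eps}{k} c_i + d_i$, so that the remainder satisfies $0 \le d_i < \frac{q\eps}{k}$ for every $i$ (here the strict upper bound is automatic, being a remainder modulo $\frac{q\eps}{k}$, and it holds even when $x_i = 0$, in which case $c_i = d_i = 0$). Then $\frac{q\eps}{k}\sum_{i=1}^k \floor{\frac{k x_i}{q\eps}} = \frac{q\eps}{k}\sum_{i=1}^k c_i = \sum_{i=1}^k(x_i - d_i) = \sumx - \sum_{i=1}^k d_i$.

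The right inequality is then immediate: since each $d_i \ge 0$ we have $\sumx - \sum_{i=1}^k d_i \le \sumx$.

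For the left inequality it suffices to show $\sum_{i=1}^k d_i < \eps\sumx$. Summing the per-term bounds gives $\sum_{i=1}^k d_i < k\cdot\frac{q\eps}{k} = q\eps$, and the hypothesis $q \le \sumx$ yields $q\eps \le \eps\sumx$. Chaining these, $\sum_{i=1}^k d_i < q\eps \le \eps\sumx$, hence
\begin{displaymath}
(1-\eps)\sumx = \sumx - \eps\sumx < \sumx - \sum_{i=1}^k d_i = \frac{q\eps}{k}\sum_{i=1}^k \floor{\frac{k x_i}{q\eps}},
\end{displaymath}
which is exactly what we need. Note $\sumx \ge q > 0$, so all the divisions above are legitimate and the strict inequality $\sum d_i < q\eps$ does not degenerate.

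There is essentially no hard step here; the only point requiring a moment's care is keeping the strictness in $\sum_{i=1}^k d_i < q\eps$, which is where the strict inequality in the statement comes from and which relies on each remainder being \emph{strictly} below $\frac{q\eps}{k}$ rather than merely bounded by it.
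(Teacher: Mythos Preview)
Your proof is correct and follows essentially the same approach as the paper: decompose $x_i = \frac{q\eps}{k}c_i + d_i$ with $0 \le d_i < \frac{q\eps}{k}$, get the right inequality from $d_i \ge 0$, and the left from $\sum d_i < q\eps \le \eps\sumx$. Your presentation is in fact slightly cleaner in isolating the key step $\sum d_i < \eps\sumx$ and in noting explicitly where the strictness comes from.
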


\begin{proof}
    The proof is very similar to the proof of
    Lemma~\ref{knumbers-rounding}, however now we represent
    $x_i$ as $\frac{q\eps}{k} c_i + d_i$ 
    where $0 \le d_i < \frac{q \eps}{k}$. We have
    $\floor{\frac{k x_i}{q \eps}} = c_i$.
    The right inequality holds because:
    
    \begin{displaymath}
        \sumx = \frac{q \eps}{k} \sum_{i=1}^k c_i  + \sum_{i=1}^k d_i \ge
        \frac{q \eps}{k} \sum_{i=1}^k c_i
    \end{displaymath}
    and the left one can be proven as follows:
    \begin{eqnarray*}
        (1-\eps)\sumx = \sumx - \eps\sumx \le \sumx - q\eps \ =  \left(\sum_{i=1}^k \frac{q \eps}{k} c_i + d_i\right) - q\eps < \frac{q\eps}{k} \sum_{i=1}^k c_i. 
    \end{eqnarray*}
\end{proof}

\section{Problems Definitions}
\label{problems-definitions}

\subsection{Exact problems}
\defproblem{\knapsack}
{A set of $n$ items $\{(v_1,w_1),\ldots, (v_n,w_n)\}$ }
{Find $x_1,\ldots,x_n$ such that:
    \begin{equation*}
        \begin{array}{ll@{}ll}
            \text{maximize}  & \displaystyle\sum\limits_{j=1}^{n} v_{j}&x_{j}
            &\\
            \text{subject to}& \displaystyle\sum\limits_{j=1}^n &w_{j} x_{j} \leq t,  &\\
            &
            &x_{j} \in \{0,1\}^n, &j=1 ,\ldots n
            .
        \end{array}
    \end{equation*}
}

Sometimes, instead of exact solution $x_1,\ldots,x_n$ in \knapsack-type problems
one needs to return the value of such solution. In decision version of such
problems we are given capacity $t$ and value $v$ and ask if there is a subset of
items with the total capacity not exceeding $t$ and total value exactly $v$ (e.g., see discussion
in~\cite{icalp2017,knapsack-book}).

\defproblem{\uknapsack}
{A set of $n$ items $\{(v_1,w_1),\ldots, (v_n,w_n)\}$ }
{Find $x_1,\ldots,x_n$ such that:
    \begin{equation*}
        \begin{array}{ll@{}ll}
            \text{maximize}  & \displaystyle\sum\limits_{j=1}^{n} v_{j}&x_{j}
            &\\
            \text{subject to}& \displaystyle\sum\limits_{j=1}^n &w_{j} x_{j} \leq t,  &\\
            &
            &x_{j} \in \mathbb{N} \cup \{0\}, &j=1 ,\ldots n
            .
        \end{array}
    \end{equation*}
}

\defproblem{\ssum}
{A set of $n$ integers $\{w_1,\ldots,w_n\}$ }
{Find $x_1,\ldots,x_n$ such that:
    \begin{equation*}
        \begin{array}{ll@{}ll}
            \text{maximize}  & \displaystyle\sum\limits_{j=1}^{n} w_{j}&x_{j}
            &\\
            \text{subject to}& \displaystyle\sum\limits_{j=1}^n &w_{j} x_{j} \leq t,  &\\
            &
            &x_{j} \in \{0,1\}^n, &j=1 ,\ldots n
            .
        \end{array}
    \end{equation*}
}

\defproblem{\partition}
{A set of $n$ integers $\{w_1,\ldots,w_n\}$ and $b = \frac{1}{2}\sum_{i=1}^n w_i$}
{Find $x_1,\ldots,x_n$ such that:
    \begin{equation*}
        \begin{array}{ll@{}ll}
            \text{maximize}  & \displaystyle\sum\limits_{j=1}^{n} w_{j}&x_{j}
            &\\
            \text{subject to}& \displaystyle\sum\limits_{j=1}^n &w_{j} x_{j} \leq b,  &\\
            &
            &x_{j} \in \{0,1\}^n, &j=1 ,\ldots n
            .
        \end{array}
    \end{equation*}
}

\defproblem{\minconv}
{Sequences $(a[i])_{i=0}^{n-1},\, (b[i])_{i=0}^{n-1}$}
{Output sequence $(c[i])_{i=0}^{n-1}$, such that $c[k] = \min_{i+j=k} (a[i]+b[j])$}

\defproblem{\ksum}
{$k-1$ sets $A_1,A_2,\ldots,A_{k-1}$ and the set $S$ of integers, each with cardinality at most $n$.}
{Is there a $(a_1,\ldots,a_{k-1},s) \in A_1\times \ldots\times A_{k-1} \times S$ such that $a_1+\ldots +a_{k-1}=s$}

\defproblem{\IIIsum}
{3 sets $A,B,C$ of integers, each with cardinality at most $n$.}
{Is there a triple $(a,b,c) \in A\times B \times C$ such that $a+b=c$}

\defproblem{\sparsity}
{A rooted tree $T$ with a weight function $x:V(T) \rightarrow \mathbb{N}$, parameter $k$}
{Find the maximal total weight of a rooted subtree of size $k$}

\subsection{Approximate problems definition}
Let $\Sigma(S)$ denote the sum of elements in $S$. The $V(I)$ denotes the total
value of items $I$ and $W(I)$ denotes the total weight of items.

\defproblem{$(1-\eps)$-approximation of \knapsack}
{A set $S = \{(v_1,w_1),\ldots, (v_n,w_n)\}$ items and a target number $t$}
{Let $Z^*$ be the optimal solution of exact \knapsack with target $t$. The
$(1-\eps)$-approximate algorithm for \knapsack returns $Z_H$ such that
$(1-\eps)V(Z^*) \le V(Z_H) \le V(Z^*)$ and $W(Z_H) \le t$.
}

Analogous definition is for \uknapsack.

\defproblem{$(1-\eps)$-approximation of \ssum}
{A set $S = \{a_1,\ldots, a_n\}$ of positive integers and a target number $t$}
{Let $Z^*$ be the optimal solution of exact \ssum with target $t$. The
$(1-\eps)$-approximate algorithm returns $Z_H$ such that $(1-\eps)\Sigma(Z^*) \le \Sigma(Z_H) \le \Sigma(Z^*)$
}

\defproblem{$(1-\eps)$-approximation of \partition}
{A set $S = \{a_1,\ldots, a_n\}$ of positive integers}
{Let $Z^*$ be the optimal solution of exact \partition. The
$(1-\eps)$-approximate algorithm returns $Z_H$ such that $(1-\eps)\Sigma(Z^*) \le \Sigma(Z_H) \le \Sigma(Z^*)$
}

\defproblem{Weak $(1-\eps)$-approximation of \ssum}
{A set $S = \{a_1,\ldots, a_n\}$ of positive integers and a target number $t$}
{Let $Z^*$ be the optimal solution of exact \ssum with target $t$. The
$(1-\eps)$-approximate algorithm returns $Z_H$ such that $(1-\eps)\Sigma(Z^*)
\le \Sigma(Z_H) \le \Sigma(Z^*)$ or $t \le \Sigma(Z_H) \le (1+\eps)t$ }

\defproblem{Approximate \IIIsum (\cite{3sum-approx})}
{Three sets $A$, $B$, $C$ of positive integers, each with cardinality at most $n$.}
{The algorithm:
\begin{itemize}
    \item concludes that no triple $(a,b,c) \in A\times B\times C$ with $a+b=c$
        exists, or
    \item it outputs a triple $(a,b,c) \in A\times B \times C$ with $a+b \in [c/(1+\eps),c(1+\eps)]$
\end{itemize}}

\defproblem{Approximate \minconv}
{Sequences \sequence{A}{0}{n-1}, \sequence{B}{0}{n-1} of positive integers and approximation parameter $0 < \eps \le 1$}
{Assume that $\text{OPT}[k] = \min_{0 \le i \le k} (A[i]+B[k-i])$ is the exact \minconv of
$A$ and $B$. The task is to output a sequence \sequence{C}{0}{n-1} such that
$\forall_i\, \text{OPT}[i] \le C[i] \le (1+\eps)\text{OPT}[i]$}
\end{appendices}

\end{document}